\newtheorem{definition}{Definition}[section]
\newtheorem{lemma}[definition]{Lemma}
\newtheorem{proposition}[definition]{Proposition}
\newtheorem{remark}[definition]{Remark}
\newtheorem{assumptions}[definition]{Assumptions}
\newtheorem{example}[definition]{Example}
\definecolor{Red}{rgb}{1,0.,0.}
\definecolor{darkred}{rgb}{.7,0.,0.}
\newcommand{\R}{{\mathbb R}}
\newcommand{\bA}{{\mathsf A}_\star}
\newcommand{\mGamma}{{\mathsf \Gamma}}
\newcommand{\mM}{{\mathsf M}}
\newcommand{\mC}{{\mathsf C}}
\newcommand{\mR}{{\mathsf R}}
\newcommand{\mT}{{\mathsf T}}
\newcommand{\mL}{{\mathsf L}}
\newcommand{\mA}{{\mathsf A}}
\newcommand{\mB}{{\mathsf B}}
\newcommand{\mD}{{\mathsf D}}
\newcommand{\mK}{{\mathsf K}}
\newcommand{\mI}{{\mathsf I}}
\newcommand{\mQ}{{\mathsf Q}}
\newcommand{\mV}{{\mathsf V}}
\newcommand{\mH}{{\mathsf H}}
\newcommand{\mSigma}{{\mathsf \Sigma}}
\newcommand{\eps}{\varepsilon}
\newcommand{\dee}{{d}}
\newcommand{\post}{\mathrm{post}}
\newcommand{\pushright}[1]{\ifmeasuring@#1\else\omit\hfill$\displaystyle#1$\fi\ignorespaces}
\newcommand{\pushleft}[1]{\ifmeasuring@#1\else\omit$\displaystyle#1$\hfill\fi\ignorespaces}
\begin{document}

\title[Iterative Updating of Model Error for Bayesian Inversion]{Iterative Updating of Model Error for Bayesian Inversion}

\author{Daniela Calvetti$^1$, Matthew Dunlop$^2$, Erkki Somersalo$^1$, Andrew Stuart$^2$}

\address{$^1$Case Western Reserve University, Department of Mathematics, Applied Mathematics and Statistics, 10900 Euclid Avenue, Cleveland, OH 44106, USA}
\address{$^2$California Institute of Technology, Computing \& Mathematical Sciences, 1200 E California Boulevard, Pasadena, California, CA 91125, USA}
\ead{dxc57@case.edu,mdunlop@caltech.edu, ejs49@case.edu,astuart@caltech.edu}

\begin{abstract}
In computational inverse problems, it is common that a detailed and accurate forward model is approximated by a computationally less challenging substitute. The model reduction may be necessary to meet constraints in computing time when optimization algorithms are used to find a single estimate, or to speed up Markov chain Monte Carlo (MCMC) calculations in the Bayesian framework. The use of an approximate model introduces a discrepancy, or modeling error, that may have a detrimental effect on the solution of the ill-posed inverse problem, or it may severely distort the estimate of the posterior distribution. In the Bayesian paradigm, the modeling error can be considered as a random variable, and by using an estimate of the probability distribution of the unknown, one may estimate the probability distribution of the modeling error and incorporate it into the inversion. We introduce an algorithm which iterates this idea to
update the distribution of the model error, leading to a sequence of posterior distributions that 
are demonstrated empiricially to capture the underlying truth with 
increasing accuracy. Since the algorithm is not based on rejections, it requires only limited full model evaluations.

We show analytically that, in the linear Gaussian case, the algorithm converges geometrically fast with respect to the number of iterations. For more general models, we introduce particle approximations of the iteratively generated sequence of 
distributions;
we also prove that each element of the sequence converges in the large particle limit. We show numerically that, as in the linear case, rapid convergence 
occurs with respect to the number of iterations. Additionally, we show through computed examples that point estimates obtained from this iterative algorithm are superior to those obtained by neglecting the model error.
 
\end{abstract}

%
\vspace{2pc}
\noindent{\it Keywords}: Model discrepancy, Discretization error, Particle approximation, Importance sampling, Electrical impedance tomography, Darcy flow
%
%
%
%

\section{Introduction}

The traditional way of describing an inverse problem is to define a forward map relating the unknown to an observed quantity, and to look for an estimate of the unknown when the data is corrupted by noise. In this description, it is often tacitly assumed that an underlying "truth" exists, and the noiseless data arises from applying the forward map on this true value. On the other hand, it is commonly acknowledged that a mathematical model does not coincide with the reality, and therefore part of the noise must be attributed to the model discrepancy, or the mismatch between the model and the reality. Modeling this discrepancy is an active research topic in statistics -- see
\cite{kennedy2001bayesian,brynjarsdottir,briol2015probabilistic} and the references
therein; it is also a closely related to the concept of the ``inverse crime'', a procedure of testing a computational method with data that has been generated by the same model that is used to solve the inverse problem \cite{Colton,KSart}. 

Common sources of modeling errors in inverse problems include:
\begin{enumerate}
\item model reduction -- a complex, computationally intensive model is replaced by a simpler, less demanding model;
\item parametric reduction -- in a model depending on poorly known parameters, some of them are frozen to fixed values, assuming that the solution is not sensitive to them;
\item unknown geometry -- a computational domain of unknown shape is approximated by a standard geometry.
\end{enumerate}

Including the modeling error into the computations in the traditional deterministic 
setting may not be straightforward. Recasting the inverse problem via Bayesian 
inference provides tools to carry this out in a natural statistical fashion. The present article introduces and analyzes
a Bayesian methodology for model error estimation which demonstrably leads to improved estimates of the true unknown function
generating the data.  

\subsection{Background}
\label{ssec:background}

In this article, we consider the problem of estimating an unknown quantity $u$ based on indirect observations. In the Bayesian framework, the prior belief about the quantity $u$ is encoded in the prior probability distribution $\mathbb{P}(u)$, and given the observed data $b$, the posterior 
distribution $\mathbb{P}(u\mid b)$ follows from Bayes' formula,
\[
\mathbb{P}(u\mid b) \propto \mathbb{P}(b\mid u)\mathbb{P}(u),
\]
where $\propto$ denotes proportionality up to a scaling constant depending on
$b$ but not on $u$, and  the distribution $\mathbb{P}(b\mid u)$  of $b$ is  the likelihood. To construct the likelihood, a forward model from $u$ to $b$ needs to be specified. A commonly used model, assuming additive observation noise that is independent of the unknown $u$, is
\[
b = F(u) + \eps
\]
where $\eps \sim \pi_{\mathrm{noise}}(\cdot)$ is a realization of random noise, 
and $F$ is a mapping defined typically on the parameter space for $u$,
which is often infinite-dimensional ($u$ is a function) or high dimensional
(a function has been discretized to obtain $u$). Under these assumptions, the likelihood is equal to the distribution of $\eps$ with mean shifted by $F(u)$, i.e.,
\[
\mathbb{P}(b\mid u) = \pi_{\mathrm{noise}}(b-F(u)).
\]
To estimate $u$, or derived quantities based on it, numerical approximations of integrals with respect to the posterior distribution are required, and a common approach is to use sampling methods such as Markov chain Monte Carlo (MCMC). This requires a large number of repeated evaluations of the forward map $F$, which is often expensive to evaluate numerically. A particular instance that we have in mind is the situation where
evaluation of $F$ requires numerical solution of a partial differential equation. If computational resources or time are an issue, an attractive approach is to trade off the accuracy of evaluations with the computational cost by adjusting the resolution of the mesh that the PDE is solved upon. Denoting by $f$ an approximation to $F$ on a coarse mesh, a model for the data can be written as
\[
b = f(u) + m + \eps
\]
where $m = F(u) - f(u)$ denotes the model error induced by moving from the accurate model $F$ to the approximate one. 
If we ignore the fact that $m$ depends on $u$, and instead model it as additive independent noise, the conditional likelihood $\mathbb{P}(b\mid u,m)$ is then given by
\[
\mathbb{P}(b\mid u,m) = \pi_{\mathrm{noise}}(b-f(u)-m);
\] 
evaluations of this map then only require evaluation of the approximate map $f$. Furthermore, the likelihood $\mathbb{P}(b\mid u)$ can be found by marginalizing out the model error $m$. However, the marginalization requires the distribution of $m$ which is not known. As suggested in \cite{KSbook}, the Bayesian approach provides a natural approximate solution to this problem: By using the prior distribution of $u$ and the model error mapping $M(u) = F(u)-f(u)$, one can generate a sample of model errors to estimate the model error distribution. This approach, referred to as the \emph{enhanced error model}, has been shown to produce more accurate point estimates than those that come from neglecting the model error (the \emph{conventional error model}), see, e.g., \cite{arridge2006approximation, Heino, KSart, banasiak2012improving} for static inverse problems, and \cite{huttunen2007approximation, huttunen2, huttunen2010importance} for extensions to dynamic inverse problems.

In \cite{Oliver}, the enhanced error model was developed further using the observation that the the posterior distribution based on the error model contains refined information about the unknown $u$ beyond the point estimate determined by the enhanced error model; as a consequence the model error distribution can be updated by pushing forward the distribution under the model error mapping $M$. When the data are particularly informative, posterior samples may differ significantly from prior samples, and this should produce a much better approximation to the distribution of the model error, potentially yielding a better approximation of the posterior distribution of $u$, and at the
very least providing point estimates of higher accuracy. The procedure can be iterated to produce a sequence of approximate posterior distributions that have the potential to yield a sequence of point estimates of improved accuracy;
they may also approximate the true posterior distribution with increasing accuracy. In this article, we address this approach in a systematic way, with particular focus on convergence of the iterative updating.

The effect of model error and model discrepancy in Bayesian inference is a widely studied topic. Early works focus primarily on code uncertainty -- the uncertainty that arises due to expense of forward model evaluations meaning that it is only practical
to compute outputs of the model for a limited finite number of inputs. A review of work in this direction is given in \cite{sacks1989design}, including the problems of optimal choice of inputs at which to evaluate the forward model, and how to predict the output for inputs away from the computed ones. In \cite{kennedy2001bayesian} the numerous sources of uncertainty within computational inverse problems are discussed, including those arising from model inadequacy and code uncertainty. The authors model this error as a function independent of the approximate model, which can be justified in certain cases. Hierarchical Gaussian priors are placed upon the model and the model error, and the model and error are then linked by imposing correlations between the hyperparameters. The technique has subsequently been developed further in \cite{brynjarsdottir}, and used in, for example, the context of model validation \cite{bayarri2007framework} and uncertainty quantification \cite{higdon2004combining}. 
More recent work in probabilistic numerical methods \cite{briol2015probabilistic}
provides a unifying perspective on this
body of work, linking it to earlier research connecting numerical algorithms to
Bayesian statistics \cite{diaconis1988bayesian}.

\subsection{Our Contribution}
\label{ssec:contribution}
\begin{itemize}
\item We develop further the iterative updating of the posterior probability densities based on repeated updating of the model error distribution, leading to an approximation of the posterior probability density. While the approximation error is defined
through the computationally expensive accurate model, the posterior approximation 
we introduce relies primarily on the computationally inexpensive approximate model,
and a limited number of evaluations of the accurate model.
\item In the case where the models are linear and the noise and prior distributions are Gaussian, we show that the means and covariances of the resulting sequence of posterior distributions converge to a limit geometrically fast.
\item For more general models and prior/noise distributions we introduce particle approximations to allow the algorithm to be implemented numerically, and show convergence of these approximations in the large particle limit.
\item We illustrate numerically the effectiveness of the algorithms in multiple different settings, showing the advantage over the conventional and enhanced error models.
\end{itemize}

\subsection{Outline}
\label{ssec:outline}
The iterative approach of updating the posterior distributions is introduced in Section \ref{sec:formulation}.
In Section \ref{sec:gaussian} we focus on the particular case where the forward model is linear, and the noise and prior distributions are Gaussian. The assumptions imply that the approximate posterior distributions are also Gaussian, and can therefore be characterized by their means and covariances. We identify conditions guaranteeing the convergence of the sequence of approximate posteriors to a non-degenerate limit as the number of iterations tends to infinity. In Section \ref{sec:sampling} we discuss different sampling methods which may be used to implement the algorithm in practice. In particular we focus on particle methods that require a finite number of full model evaluations in order to estimate the modeling error and posterior distribution, and show convergence 
to the correct sequence of approximate distributions in the large particle limit. Finally, in Section \ref{sec:numerics}, we provide numerical illustrations of the behavior and performance of the algorithm for three different forward models.
Section \ref{sec:conc} contains the conclusions and discussion.

\section{Problem Formulation}
\label{sec:formulation}

We start by introducing the main ingredients of the iterative algorithm: the accurate and approximate models are defined in Subsection \ref{ssec:models},  along with some examples whose details will be discussed later on. The enhanced error model \cite{KSbook} is reviewed in Subsection \ref{ssec:enhanced}, prompting the question of how to update the density of the modeling error. In subsection \ref{ssec:algorithm} we provide an iterative algorithm for doing this, in the case where all measures involved have Lebesgue densities; a more general algorithm is provided in the appendix for cases such as those
arising when the measures are defined on infinite-dimensional function spaces.

\subsection{Accurate vs. Approximate Model}
\label{ssec:models}
Let $X, Y$ be two Banach spaces representing the parameter and data spaces, respectively. Let
\[
 F:X\to Y, \quad u\mapsto b
\]
denote a reference forward model, referred to as the {\em accurate model}, and let the {\em approximate model} be denoted by
\[
 f:X\to Y, \quad u\mapsto b.
\]

We write the observation model using the accurate model,
\begin{equation}\label{obs model}
 b = F(u) + \varepsilon,
\end{equation}
and equivalently, using the approximate model, as
\begin{eqnarray}\label{appr model}
 b &=& f(u) + \big(F(u) - f(u)\big) + \varepsilon \nonumber \\
  &=& f(u) + m + \varepsilon,
\end{eqnarray}
where $m$ represents the modeling error,
\[
 m = F(u) - f(u) = M(u).
\]
In light of the above observation, we may view the data as coming from the approximate model,
with an error term which reflects both observational noise
$\varepsilon$ and modeling error $m$. The main problem 
addressed in this paper is how 
to model the probability distribution of the model error. We study this
question with the goal of providing computations to estimate the unknown $u$
from $b$ using only the less expensive model $f$, and not the true model $F$, 
without creating modeling artifacts. 

We conclude this subsection by giving two examples of approximate models, both
of which may be relevant in applications.

\begin{example}[Linearization] In electrical impedance tomography (EIT), the goal is to estimate the conductivity distribution inside a body from a finite set of current/voltage measurements at the boundary, as discussed in more detail in Subsection \ref{ssec:eit}.
We denote by $F$ the differentiable non-linear forward model, mapping the appropriately parametrized conductivity distribution to the voltage measurements. We define the approximate model through the linearization,
\[
 f(u) = F(u_0) + {\mathsf D}F(u_0)(u - u_0),
\]
where ${\mathsf D} F(u_0)$ is the Jacobian of the forward map, and $u_0$ is a fixed parameter value, representing, e.g., a constant conductivity background.
\end{example}

\begin{example}[Coarse Mesh] In the EIT model, the accurate model represents the forward model computed with a FEM grid fine enough to guarantee that the numerical solution approximates the solution of the underlying PDE within a required precision. To speed up computations, we introduce the reduced model $f$ as the forward model based on FEM built on a coarse grid. We assume that both computational grids are built on an underlying independent discretization of the conductivity $\sigma$, appropriately parametrized, and the FEM stiffness matrices that require integration over elements are computed by evaluating $\sigma$ in the Gauss points of the elements, respectively.
\end{example}

\subsection{The Enhanced Error Model}
\label{ssec:enhanced}
We start by reviewing the basic ideas of the enhanced error model and, for simplicity, assume here that  $X$ and $Y$ are Euclidean
spaces and that all probability distributions are expressible in terms of  Lebesgue densities. We assume that $u$ is an $X$-valued random variable with a given {\em a priori} density,
\[
 u \sim \pi_{\rm prior}(u).
\]
Furthermore, we assume that the additive noise $\varepsilon$ is a 
$Y$-valued random variable, independent of $u$, with the density
\[
 \varepsilon\sim \pi_{\rm noise}(\varepsilon).
\]
In view of the model (\ref{obs model}), we may write the likelihood as
\[
 \pi(b\mid u) = \pi_{\rm noise}(b - F(u)),
\]
and the posterior density is, according to Bayes' formula, given by
\[
 \pi(u\mid b) \propto \pi_{\rm prior}(u)\pi_{\rm noise}(b - F(u)).
\]
If, instead, we want to use the approximate model $f(u)$ for the data, the modeling error $m$ needs to be taken into account. The idea of the {\em enhanced error model} in \cite{KSbook} is the following: Given the prior distribution $\mu_0(\dee u) = \pi_{\rm prior}(u)\dee u$, with no other information about $u$, the probability distribution of $m=M(u)$ is obtained as a push-forward of the prior distribution:
\[
 m\sim M^\#\mu_0.
\]
To obtain a computationally efficient formalism, the distribution of $m$ is approximated by a Gaussian distribution sharing the mean and
covariance of the push-forward measure,
\[
 m\sim{\mathcal N}(\overline m,\mSigma),
\]
where in practice the mean and covariance may be estimated numerically by sampling the modeling error.
The Gaussian approximation is particularly convenient if the additive noise is Gaussian,
\[
 \varepsilon\sim{\mathcal N}(0,\mGamma),
\]
as it leads to the approximate likelihood model
\[
 \pi(b\mid u) \propto {\rm exp}\left(-\frac 12 (b - f(u) - \overline m)^\mT(\mGamma + \mSigma)^{-1}(b - f(u) - \overline m)\right),
\]
and, consequently, to the posterior model
\[
\pi(u\mid b) \propto \pi_{\rm prior}(u)\, {\rm exp}\left(-\frac 12 (b - f(u) - \overline m)^\mT(
\mGamma + \mSigma)^{-1}(b - f(u) - \overline m)\right).
\]
Note that the enhanced error model can be interpreted as a form of variance inflation: if the model error is neglected as in the conventional error model, then the covariance matrix in the likelihood would be smaller in the sense of positive-definite quadratic forms, since $\mSigma$ is non-negative definite. This is to be expected as the lack of accuracy in the model contributes additional uncertainty to the problem; we do not wish to be over-confident in an inaccurate model. In the next section we explain how this idea may be built upon to
produce the algorithms studied in this paper.

\subsection{The Iterative Algorithm}
\label{ssec:algorithm}

In this subsection we generalize the preceding enhanced error model
in two ways: (i) we iterate the construction of the model error, updating its probability distribution by pushing 
forward by $M$ the measure $\mu_{\ell}$ , the posterior distribution of $u$ when the model
error distribution is computed as the pushforward under
$M$ of the measure $\mu_{\ell-1};$ in this iterative method 
we choose $\mu_0$ to be the prior and so the first step is analogous to what is described
in the previous subsection;
(ii) we do not invoke a Gaussian approximation of
the model error, leaving open other possibilities for practical 
implementation.
We describe the resulting algorithm here in the case where Lebesgue
densities exist, and refer to the appendix for its formulation in a more abstract setting.

{\bf Algorithm (Lebesgue densities).} Let $\mu_\ell$ denote the posterior distribution at stage $\ell$, with density $\pi_\ell$, so that $\mu_\ell(\dee u) = \pi_\ell(u)\,\dee u$. Denote $\pi_{\ell}(b\mid u)$ the likelihood at stage $\ell$.
\begin{enumerate}[\hspace{0.2cm}1.]
\item Set $\pi_0(u) = \pi_{\rm prior}(u)$ and $\ell = 0$.
\item Given $\mu_\ell$, assume $m \sim M^\#\mu_\ell$. Assuming that $u$ and $m$ are mutually independent, we have
\[
\pi(b\mid u,m) = \pi_{\rm noise}(b - f(u) - m),
\]
and by marginalization,
\begin{eqnarray*}
\pi_{\ell+1}(b\mid u) &= \int_Y \pi_{\rm noise}(b - f(u) - m)(M^\#\mu_\ell)(\dee m)\\
&= \int_X \pi_{\rm noise}(b - f(u) - M(z))\pi_\ell(z)\dee z.
\end{eqnarray*}
Hence using Bayes' theorem, update the posterior distribution:
\begin{eqnarray}
\label{eq:update_leb}
\pi_{\ell+1}(u) \propto \pi_{\rm prior}(u)\int_X \pi_{\rm noise}(b - f(u) - M(z))\pi_\ell(z)\dee z.
\end{eqnarray}
\item Set $\ell \mapsto \ell+1$ and go to 2.\qed
\end{enumerate}

We can give an explicit expression for the above density $\pi_{\ell}(u)$:
\[
\fl \pi_\ell(u) \propto \pi_{\rm prior}(u)\int_X\cdots\int_X \left(\prod_{i=1}^\ell\pi_{\rm noise}(b - f(z_{i+1}) - M(z_i))\pi_{\rm prior}(z_i)\right)\,\dee z_1\ldots\dee z_\ell
\]
where we define $z_{\ell+1} = u$.

The above algorithm can be generalized to the case when no Lebesgue densities exist, such as will be the case on infinite dimensional Banach spaces, see the Appendix. 

\section{The Linear Gaussian Case}
\label{sec:gaussian}
We analyze the convergence of the general algorithm in the case where both the accurate model $F$ and the approximate model $f$ are linear, and the noise and prior distributions are Gaussian. With these assumptions the posterior distributions forming the sequence remain Gaussian, and are hence characterized by the sequences of means and covariances. Convergence properties of the iteration can be established by using the updating formulas for these sequences. Though the iteration is not immediately implementable, since the full matrix for the accurate model $F$ is required to calculate each covariance matrix, the explicit convergence results give insight into the implementable variants of the algorithm introduced in Section \ref{sec:sampling}, as well as for non-linear forward maps.

In subsection \ref{ssec:linear_evolution}, we first describe how the posterior density evolves, establishing the iterations that the means and covariances satisfy. In subsection \ref{ssec:conv_infinite} we show that if the model error is sufficiently small, the sequences of posterior means and covariances converge geometrically fast. Moreover, despite the repeated incorporation of the data into the posterior, the limiting distribution does not become singular. Additionally, in subsection \ref{ssec:conv_finite} we show that in finite dimensions the assumption of small model error is not needed in order to establish convergence of the covariance.

\subsection{Evolution of the Posterior Distribution}
\label{ssec:linear_evolution}
 When the noise distribution $\pi_{\rm noise}$ and prior distribution $\pi_{\rm prior}$ are Gaussian, the measure $\mu_\ell$ is Gaussian at each stage $\ell$, and we can write down expressions for the evolution of its mean and covariance.
Let $X,Y$ be separable Hilbert spaces and $\bA:X\rightarrow Y$ a linear operator. Assume that the data $b \in Y$ arise from $\bA$ via
\[
b = \bA u + \eps,\;\;\;\eps \sim \mathcal{N}(0,\mGamma)
\]
where $\mGamma:Y\rightarrow Y$ is a symmetric positive definite covariance operator. Let $\mA:X\rightarrow Y$ denote an approximation to $\bA$ so that the expression for $b$ may be rewritten
\begin{eqnarray*}
b &= \mA u + (\bA - \mA)u + \eps.
\end{eqnarray*}
We define the model error operator $\mM:X\rightarrow Y$ by $\mM = \bA - \mA$, so that $\mM u$ represents the (unknown) model error. We assume that the model error is Gaussian, with unknown mean $w$ and covariance $\mSigma$. Additionally we assume that it is independent of the observation noise. The data is now given by
\begin{eqnarray*}
b &= \mA u + \hat{\eps},\;\;\;\hat{\eps} \sim \mathcal{N}(w,\mSigma + \mGamma).
\end{eqnarray*}
Let $\mu_0 = N(m_0,\mC_0)$ denote the prior distribution on $u$. We first estimate $w$ and $\mSigma$ by pushing forward $\mu_0$ by the model error operator $\mM$:
\[
\mM^{\#}\mu_0  = \mathcal{N}(\mM m_0, \mM \mC_0 \mM^*) \equiv \mathcal{N}(\hat{w}_1,\hat{\mSigma}_1).
\]
Then, assuming for now that the model error has this distribution, the resulting posterior distribution on $u$ can be calculated as $\mu_1 = N(m_1,\mC_1)$, where
\begin{eqnarray*}
\mC_1 &= \big(\mA^*(\mGamma + \mM \mC_0 \mM^*)^{-1}\mA + \mC_0^{-1}\big)^{-1},\\
m_1 &= \mC_1\big(\mA^*(\mGamma + \mM\mC_0 \mM^*)^{-1}(b - \mM m_0) + \mC_0^{-1}m_0\big).
\end{eqnarray*}
As described previously, in order to obtain a better approximation of the model error, the above step can be repeated so that the measure pushed forward to approximate the model error is closer to the posterior distribution. 
Therefore, in the next step, we approximate the distribution of the model error by 
\[
\mM^{\#}\mu_1 = \mathcal{N}(\mM m_1,\mM\mC_1\mM^*) \equiv \mathcal{N}(\hat{w}_2,\hat{\mSigma}_2),
\]
and make a further posterior approximation. Iterating this process, we see that the posterior mean and covariance evolve via
\begin{eqnarray}
\label{eq:ck_standard}
\mC_{\ell+1} &= \big(\mA^*(\mGamma + \mM \mC_\ell \mM^*)^{-1}\mA + \mC_0^{-1}\big)^{-1},\\
m_{\ell+1} &= \mC_{\ell+1}\big(\mA^*(\mGamma + \mM\mC_\ell \mM^*)^{-1}(b - \mM m_\ell) + \mC_0^{-1}m_0\big)\nonumber.
\end{eqnarray}
We wish to show that these sequences are convergent.

We may write the above iteration in an equivalent form as 
\begin{eqnarray*}
\mC_{\ell+1} &= \mC_0 - \mC_0\mA^*(\mGamma + \mM\mC_\ell \mM^* + \mA\mC_0\mA^*)^{-1}\mA\mC_0,\\
m_{\ell+1} &= m_0 + \mC_0\mA^*(\mGamma + \mM\mC_\ell \mM^* + \mA\mC_0\mA^*)^{-1}(b - \mA m_0 - \mM m_\ell),
\end{eqnarray*}
using results from \cite{lps,mandelbaum}, assuming that $\mC_0$ is trace-class and $\mGamma$ is trace-class or white. This form has the advantage that the unbounded operator $\mC_0^{-1}$ does not appear, and so we need not worry about its domain of definition. In what follows we simply assume that this equivalent expression
for the evolution of the means and covariances is valid.

In the following subsections, we consider two different cases. In the first one, we limit the data into a finite dimensional space $Y$ but let the space $X$ to be a Hilbert space. The convergence of the algorithm is demonstrated under certain restrictive conditions: the modeling error needs to be small enough. In the second case, we also limit the unknown $u$ to a finite dimensional space, and show that in this case the convergence proof can be obtained without the restrictions needed in the former case. 
We emphasize that, although we establish convergence of the iteration in
various settings, the limiting distribution does not coincide with the
true posterior distribution found in the absence of model error. Nonetheless
our numerical experiments will show that the limiting distribution can
significantly improve point estimates of the underlying value used to generate the data.

\subsection{Convergence in Infinite Dimensions}
\label{ssec:conv_infinite}
We introduce a scalar parameter $\delta$ controlling the accuracy of the approximation $\mA$ of $\bA$, writing $\delta \mM$ in place of $\mM$. By writing explicitly the dependence of the mean and covariance on $\delta$, we have
\begin{eqnarray*}
\mC_{\ell+1}(\delta) &= \mC_0 - \mC_0 \mA^*\big(\mGamma + \delta^2 \mM\mC_\ell(\delta) \mM^* + \mA\mC_0 \mA^*\big)^{-1}\mA\mC_0,\\
m_{\ell+1}(\delta) &= m_0 + \mC_0\mA^*\big(\mGamma + \delta^2\mM\mC_\ell(\delta)\mM^* + \mA\mC_0\mA^*\big)^{-1}\big(b - \mA m_0 - \delta \mM m_\ell(\delta)\big).
\end{eqnarray*}
Let $\mathcal{L}(X)$ denote the space of bounded linear operators on $X$, equipped with the operator norm. Let $S_+(X)\subseteq\mathcal{L}(X)$ denote the set of positive bounded linear operators on $X$, and $\overline{S_+(X)}$ the set of non-negative bounded linear operators on $X$. We may write the iteration for $\mC_\ell(\delta)$ as
\begin{eqnarray}
\label{eq:ck}
\mC_{\ell+1}(\delta) = {\mathscr F}\big(\mC_\ell(\delta),\delta\big)
\end{eqnarray}
where ${\mathscr F}:\overline{S_+(X)}\times\mathbb{R}\rightarrow \overline{S_+(X)}$ is given by
\[
{\mathscr F}(\mB,\delta) = \mC_0 - \mC_0 \mA^*\big(\mGamma + \delta^2 \mM\mB\mM^* + \mA\mC_0 \mA^*\big)^{-1}\mA\mC_0.
\]
We show that under certain assumptions, for all $\delta$ sufficiently small, ${\mathscr F}(\,\cdot\,,\delta)$ has a unique stable fixed point $\mC(\delta)$. The assumptions we make are as follows.
\begin{assumptions}
\label{assump}
\begin{enumerate}[(i)]
\item $\mC_0 \in S_+(X)$ and is trace-class. 
\item $Y$ is finite dimensional, and $\mGamma \in S_+(Y)$.
\item $\bA,\,\mA:X\rightarrow Y$ are bounded.
\end{enumerate}
\end{assumptions}

We first establish the following result concerning convergence of the sequence of covariance operators: 
\begin{proposition}
\label{prop:ckconv}
Let Assumptions \ref{assump} hold. Then there is a $\beta > 0$ such that for all $\delta < 1/\beta$, a unique $\mC(\delta) \in S_+(X)$ exists with
\[
\mC(\delta) = {\mathscr F}\big(\mC(\delta),\delta\big).
\]
Moreover, $\mC(\delta)$ is a stable fixed point of ${\mathscr F}(\,\cdot\,,\delta)$, and there is a constant $\alpha_1$ such that
\[
\|\mC_\ell(\delta) - \mC(\delta)\|_{\mathcal{L}(X)} \leq \alpha_1(\beta\delta)^{2\ell}\;\;\;\mbox{for all }\ell \geq 1.
\]
In particular, for $\delta < 1/\beta$, the sequence $\{\mC_\ell(\delta)\}_{\ell\geq 1}$ converges geometrically fast.
\end{proposition}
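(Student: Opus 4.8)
The plan is to read the covariance recursion (\ref{eq:ck}) as a fixed-point iteration $\mC_{\ell+1}(\delta)={\mathscr F}(\mC_\ell(\delta),\delta)$ for a map that becomes a contraction once $\delta$ is small enough, apply the Banach fixed point theorem on a suitable complete metric space, and then extract strict positivity of the limit from the structure of ${\mathscr F}$. The geometric rate and the stability of the fixed point will then come for free from the size of the contraction constant.

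The first step is to rewrite ${\mathscr F}$ in a form that avoids the unbounded operator $\mC_0^{-1}$ and involves $\bA$ only through $\mM$. Writing $T:=\mA\mC_0^{1/2}$ (bounded, with $TT^*=\mA\mC_0\mA^*$) and, for $\mB\in\overline{S_+(X)}$, $\mR_\mB:=\mGamma+\delta^2\mM\mB\mM^*$, the push-through identity $T^*(\mR_\mB+TT^*)^{-1}=(\mI+T^*\mR_\mB^{-1}T)^{-1}T^*\mR_\mB^{-1}$ gives
\begin{equation}
\label{eq:Fwoodbury}
{\mathscr F}(\mB,\delta)=\mC_0^{1/2}\big(\mI+T^*\mR_\mB^{-1}T\big)^{-1}\mC_0^{1/2}.
\end{equation}
Since $\mB\succeq0$ we have $\mR_\mB\succeq\mGamma\succ0$, and because $Y$ is finite-dimensional (Assumptions~\ref{assump}(ii)) this yields the uniform bound $\|\mR_\mB^{-1}\|\le\|\mGamma^{-1}\|<\infty$; this is the one place finite-dimensionality of the data space enters. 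From (\ref{eq:Fwoodbury}) one reads off $0\preceq{\mathscr F}(\mB,\delta)\preceq\mC_0$, so ${\mathscr F}(\,\cdot\,,\delta)$ maps the order interval $K:=\{\mB\in\mathcal{L}(X):0\preceq\mB\preceq\mC_0\}$ into itself; $K$ is closed in $\mathcal{L}(X)$, hence a complete metric space in the operator norm, and all iterates $\mC_\ell(\delta)$ lie in $K$ (starting from $\mC_0(\delta)=\mC_0\in K$).

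Next I would establish the contraction estimate. For $\mB_1,\mB_2\in K$ set $P_i:=\mR_{\mB_i}+TT^*$ ($i=1,2$), so $\|P_1^{-1}\|,\|P_2^{-1}\|\le\|\mGamma^{-1}\|$. The second resolvent identity gives $P_1^{-1}-P_2^{-1}=\delta^2P_1^{-1}\mM(\mB_2-\mB_1)\mM^*P_2^{-1}$, whence
\[
{\mathscr F}(\mB_1,\delta)-{\mathscr F}(\mB_2,\delta)=-\delta^2\,\mC_0\mA^*P_1^{-1}\mM(\mB_2-\mB_1)\mM^*P_2^{-1}\mA\mC_0,
\]
and, taking operator norms,
\[
\big\|{\mathscr F}(\mB_1,\delta)-{\mathscr F}(\mB_2,\delta)\big\|\le(\beta\delta)^2\,\|\mB_1-\mB_2\|,\qquad\beta:=\|\mC_0\|\,\|\mA\|\,\|\mM\|\,\|\mGamma^{-1}\|,
\]
where $\|\mM\|<\infty$ by Assumptions~\ref{assump}(iii). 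For $\delta<1/\beta$ the factor $(\beta\delta)^2$ is $<1$, so ${\mathscr F}(\,\cdot\,,\delta)$ is a contraction of $K$; Banach's theorem gives a unique fixed point $\mC(\delta)\in K$, and $\|\mC_\ell(\delta)-\mC(\delta)\|\le(\beta\delta)^{2\ell}\|\mC_0-\mC(\delta)\|\le\|\mC_0\|(\beta\delta)^{2\ell}$ for all $\ell\ge0$ (using $0\preceq\mC_0-\mC(\delta)\preceq\mC_0$), which is the asserted rate with $\alpha_1=\|\mC_0\|$; the contraction constant being $<1$ is precisely stability of the fixed point. To upgrade $\mC(\delta)\in\overline{S_+(X)}$ to $\mC(\delta)\in S_+(X)$, apply (\ref{eq:Fwoodbury}) at $\mB=\mC(\delta)$ and use $\mI+T^*\mR_{\mC(\delta)}^{-1}T\preceq(1+\|T\|^2\|\mGamma^{-1}\|)\mI$ to obtain $\mC(\delta)\succeq(1+\|T\|^2\|\mGamma^{-1}\|)^{-1}\mC_0$; since $\mC_0\in S_+(X)$ by Assumptions~\ref{assump}(i), $\mC(\delta)$ is strictly positive, self-adjoint and bounded, i.e.\ in $S_+(X)$.

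The algebraic identities (push-through and second resolvent) and the norm bookkeeping are routine. The step carrying the real content is the reformulation (\ref{eq:Fwoodbury}): it is what makes $K$ an invariant complete metric space without ever touching $\mC_0^{-1}$, and it simultaneously delivers the lower bound $\mC(\delta)\succeq c\,\mC_0$ needed for non-degeneracy of the limit. The other indispensable ingredient is Assumptions~\ref{assump}(ii): finite-dimensionality of $Y$ is what makes $\|\mGamma^{-1}\|$ finite, hence both the uniform resolvent bounds and the explicit constant $\beta$ available; with $\mGamma$ merely injective on an infinite-dimensional data space this contraction argument, and with it the geometric rate, would break down.
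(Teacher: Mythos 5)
Your argument is correct, and at top level it is the same strategy as the paper's: show that ${\mathscr F}(\,\cdot\,,\delta)$ is a contraction with constant $(\beta\delta)^2$ on a complete subset of $\mathcal{L}(X)$, invoke the Banach fixed-point theorem to get existence, uniqueness, stability and the geometric rate, and then argue separately that the limit is non-degenerate. The routes to the two key estimates differ, though. The paper first proves the uniform bound of Lemma \ref{lem:ckbound} on $\mK(\mB,\delta)=\mC_0\mA^*(\mGamma+\delta^2\mM\mB\mM^*+\mA\mC_0\mA^*)^{-1}$, computes the Fr\'echet derivative ${\mathsf D}_\mB{\mathscr F}(\mB,\delta)\mV=\delta^2\mK(\mB,\delta)\mM\mV\mM^*\mK(\mB,\delta)^*$, and obtains the Lipschitz bound from the mean value theorem on the convex set $\overline{S_+(X)}$, with $\beta=K_{\max}\|\mM\|$; you obtain the same Lipschitz bound in one step from the second resolvent identity, with the cruder but equally admissible constant $\beta=\|\mC_0\|\,\|\mA\|\,\|\mM\|\,\|\mGamma^{-1}\|$ (the paper's constant, built from $\|(\mGamma+\mA\mC_0\mA^*)^{-1}\|$ rather than $\|\mGamma^{-1}\|$, is sharper and hence gives a wider range of admissible $\delta$, and it is the same $\beta$ reused in Proposition \ref{prop:mkconv}; the statement, however, only requires some $\beta>0$). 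Both proofs hinge on exactly the ingredient you flag: finite-dimensional $Y$ with $\mGamma\in S_+(Y)$ making the relevant resolvents uniformly bounded. For non-degeneracy the paper compares the fixed point with the exact posterior covariance, $0<\mC_{\post}\leq\mC(\delta)$, via the monotonicity argument of Lemma \ref{lem:ckbound}, whereas your Woodbury-type reformulation ${\mathscr F}(\mB,\delta)=\mC_0^{1/2}(\mI+T^*\mR_\mB^{-1}T)^{-1}\mC_0^{1/2}$, $T=\mA\mC_0^{1/2}$, yields the explicit two-sided bound $(1+\|T\|^2\|\mGamma^{-1}\|)^{-1}\mC_0\preceq\mC(\delta)\preceq\mC_0$, uniform in $\delta$ and arguably cleaner; it also hands you the invariant order interval $K=[0,\mC_0]$ and the $\delta$-independent constant $\alpha_1=\|\mC_0\|$ (the paper's $\alpha_1=\|\mC_1(\delta)-\mC_0\|$ depends on $\delta$). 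One sentence worth adding: since your identity holds for every $\mB\in\overline{S_+(X)}$, any fixed point in $\overline{S_+(X)}$ satisfies $\mC={\mathscr F}(\mC,\delta)\preceq\mC_0$ and so lies in $K$, which upgrades your uniqueness on $K$ to uniqueness on all of $S_+(X)$ as the proposition asserts.
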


From this geometric convergence of $\{\mC_\ell(\delta)\}_{\ell\geq 1}$ we can deduce that also the means $m_\ell(\delta)$ converge: Define the maps $G_\ell:X\times\mathbb{R}\rightarrow X$, $\ell \geq 0$, by 
\[
G_\ell(m,\delta) = m_0 + \mC_0\mA^*\big(\mGamma + \delta^2\mM\mC_\ell(\delta)\mM^* + \mA\mC_0\mA^*\big)^{-1}\big(b - \mA m_0 - \delta \mM m\big)
\]
so that the update for $m_\ell(\delta)$ is given by
\[
m_{\ell+1}(\delta) = G_\ell\big(m_\ell(\delta),\delta\big).
\]
Define also the limiting map $G:X\times\mathbb{R}\rightarrow X$ by
\[
G(m,\delta) = m_0 + \mC_0\mA^*\big(\mGamma + \delta^2\mM\mC(\delta)\mM^* + \mA\mC_0\mA^*\big)^{-1}\big(b - \mA m_0 - \delta \mM m\big).
\]
Then we have the following result:

\begin{proposition}
\label{prop:mkconv}
Let Assumptions \ref{assump} hold, and let $\beta$ be as in Proposition \ref{prop:ckconv}. Then for all $\delta < 1/\beta$, there exists a unique $m(\delta) \in X$ with
\[
m(\delta) = G\big(m(\delta),\delta\big).
\]
Moreover, there is an $\alpha_2 > 0$ such that
\[
\|m_\ell(\delta) - m(\delta)\|_X \leq \alpha_2(\beta\delta)^\ell\;\;\;\mbox{ for all }\ell \geq 1.
\]
Hence, for $\delta < 1/\beta$, the sequence $\{m_\ell(\delta)\}_{\ell\geq 1}$ converges geometrically fast.
\end{proposition}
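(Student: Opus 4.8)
The plan is to view the update $m_{\ell+1}(\delta) = G_\ell(m_\ell(\delta),\delta)$ as a perturbation of the contraction $m \mapsto G(m,\delta)$, and to control the perturbation using the geometric convergence of $\{\mC_\ell(\delta)\}_{\ell \geq 1}$ already established in Proposition~\ref{prop:ckconv}. First I would observe that both $G_\ell(\cdot,\delta)$ and $G(\cdot,\delta)$ are affine in $m$, with linear part equal to
\[
-\delta\, \mC_0\mA^*\big(\mGamma + \delta^2\mM\mC_\ell(\delta)\mM^* + \mA\mC_0\mA^*\big)^{-1}\mM
\]
in the first case and the analogous expression with $\mC(\delta)$ in place of $\mC_\ell(\delta)$ in the second. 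Using Assumptions~\ref{assump} — trace-class $\mC_0$, finite-dimensional $Y$ so that $\mGamma \in S_+(Y)$ and hence $\mGamma + \delta^2\mM\mC_\ell(\delta)\mM^* + \mA\mC_0\mA^* \succeq \mGamma$ is boundedly invertible uniformly in $\ell$ and in $\delta$ on a bounded set — each of these linear parts has operator norm bounded by $C\delta$ for a constant $C$ independent of $\ell$. Consequently, for $\delta$ small enough (shrinking $1/\beta$ if necessary, but the statement already restricts to $\delta < 1/\beta$, and one checks the bound holds there after possibly enlarging $\beta$), $G(\cdot,\delta)$ is a contraction on $X$ with some Lipschitz constant $\kappa < 1$, so the Banach fixed point theorem gives existence and uniqueness of $m(\delta)$.

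For the convergence rate I would write the telescoped bound
\[
\|m_{\ell+1}(\delta) - m(\delta)\|_X \leq \|G_\ell(m_\ell(\delta),\delta) - G(m_\ell(\delta),\delta)\|_X + \|G(m_\ell(\delta),\delta) - G(m(\delta),\delta)\|_X.
\]
The second term is at most $\kappa\|m_\ell(\delta) - m(\delta)\|_X$. For the first term, subtracting the two affine maps evaluated at the same point $m_\ell(\delta)$, the difference only involves the resolvents at $\mC_\ell(\delta)$ versus $\mC(\delta)$; using the second resolvent identity and the uniform invertibility noted above, this difference is bounded by $C'\delta^2\|\mC_\ell(\delta) - \mC(\delta)\|_{\mathcal{L}(X)}\big(1 + \|m_\ell(\delta)\|_X\big)$. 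Proposition~\ref{prop:ckconv} bounds $\|\mC_\ell(\delta) - \mC(\delta)\|_{\mathcal{L}(X)}$ by $\alpha_1(\beta\delta)^{2\ell}$, and a preliminary a~priori bound shows $\sup_\ell \|m_\ell(\delta)\|_X < \infty$ (from the affine recursion with contraction factor $\kappa$ and bounded inhomogeneous term), so the first term is $O\big((\beta\delta)^{2\ell}\big)$. Feeding this into the recursion
\[
\|m_{\ell+1}(\delta) - m(\delta)\|_X \leq \kappa\|m_\ell(\delta) - m(\delta)\|_X + C''(\beta\delta)^{2\ell}
\]
and iterating — comparing $\kappa$ against $(\beta\delta)^2$ — yields $\|m_\ell(\delta) - m(\delta)\|_X \leq \alpha_2\,\rho^\ell$ where $\rho = \max\{\kappa,(\beta\delta)^2\}$; since $\kappa = O(\delta)$, one has $\rho \leq (\beta\delta)$ after possibly enlarging $\beta$, giving the stated rate $(\beta\delta)^\ell$.

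The main obstacle is bookkeeping the constants so that a single $\beta$ works simultaneously for the covariance estimate of Proposition~\ref{prop:ckconv}, for the contraction property of $G(\cdot,\delta)$, and for the final rate — in particular ensuring $\kappa < 1$ and $\kappa \le \beta\delta$ on the same interval $\delta < 1/\beta$. This is routine but requires care: one enlarges $\beta$ (which only shrinks the admissible range of $\delta$) until all three conditions hold. A secondary technical point is justifying the uniform lower bound $\mGamma + \delta^2\mM\mC_\ell(\delta)\mM^* + \mA\mC_0\mA^* \succeq \mGamma$ and the resulting uniform operator-norm bound on the resolvents; this follows immediately from $\mM\mC_\ell(\delta)\mM^* \in \overline{S_+(Y)}$ and $\mA\mC_0\mA^* \in \overline{S_+(Y)}$, both of which are valid since $\mC_\ell(\delta), \mC_0 \in \overline{S_+(X)}$.
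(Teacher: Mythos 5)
Your proposal is correct, and it reaches the conclusion by a slightly different route than the paper. Both arguments rest on the same three ingredients: the affine structure $G_\ell(m,\delta)=\mH_\ell(\delta)m+g_\ell(\delta)$, the uniform resolvent bound of Lemma \ref{lem:ckbound} (your ``uniform invertibility'' observation is exactly that lemma, so the norm of the linear part is bounded by $\delta K_{\max}\|\mM\|=\beta\delta$, not merely $C\delta$), and the geometric convergence $\|\mC_\ell(\delta)-\mC(\delta)\|_{\mathcal{L}(X)}\leq\alpha_1(\beta\delta)^{2\ell}$ from Proposition \ref{prop:ckconv} transferred to $\mH_\ell,g_\ell$ via the smooth dependence of $\mK(\mB,\delta)$ on $\mB$ (your second resolvent identity, the paper's Fr\'echet differentiability). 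Where you differ is in the organization: you first obtain $m(\delta)$ as the unique fixed point of the contraction $G(\cdot,\delta)$ by Banach, and then bound $\|m_{\ell+1}(\delta)-m(\delta)\|_X$ directly by the perturbed-contraction recursion $\leq\beta\delta\|m_\ell(\delta)-m(\delta)\|_X+O((\beta\delta)^{2\ell})$; the paper instead bounds consecutive differences $\|m_{i+1}(\delta)-m_i(\delta)\|_X$, shows the sequence is Cauchy, and only afterwards verifies that the limit is a fixed point and that the fixed point is unique. Your arrangement buys existence and uniqueness up front and avoids the separate limit-identification step; the paper's buys an explicit Cauchy estimate without invoking the fixed point theorem. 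One small tidy-up: your hedge about ``possibly enlarging $\beta$'' is unnecessary and, strictly speaking, not permitted by the statement, which fixes $\beta$ to be the constant of Proposition \ref{prop:ckconv}. It is also not needed: with that $\beta$ the contraction constant is already $\kappa\leq\beta\delta$, and iterating your recursion with perturbation $(\beta\delta)^{2\ell}$ gives $\sum_{j}(\beta\delta)^{\ell-1-j}(\beta\delta)^{2j}\leq(\beta\delta)^{\ell-1}/(1-\beta\delta)$, i.e.\ the rate $(\beta\delta)^\ell$ with the same $\beta$, exactly as claimed.
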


To prove the above propositions we first prove the following lemma. In the proof, the following notation is used: Given symmetric non-negative linear operators $\mB_1,\mB_2 \in \mathcal{L}(X)$, we write $\mB_1 \leq \mB_2$ to mean that $\mB_2 - \mB_1$ is non-negative.
\begin{lemma}
\label{lem:ckbound}
Let Assumptions \ref{assump} hold. Then the family of operators $\mK(\mB,\delta):Y\rightarrow X$ given by
\[
\mK(\mB,\delta) = \mC_0\mA^*(\mGamma + \delta^2\mM\mB\mM^* + \mA\mC_0\mA^*)^{-1}
\]
is bounded uniformly over $\mB \in \overline{S_+(X)}$ and $\delta \in \mathbb{R}$.
\end{lemma}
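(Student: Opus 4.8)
The plan is to exploit the structure of $\mK(\mB,\delta)$ by factoring it through the operator $\mA\mC_0\mA^*$, which is the "noise-free" piece of the operator being inverted. Write $\mH = \mGamma + \delta^2\mM\mB\mM^* + \mA\mC_0\mA^* : Y \to Y$. Since $\mGamma \in S_+(Y)$ is positive definite, $\delta^2\mM\mB\mM^* \in \overline{S_+(Y)}$ is non-negative (as $\mB \geq 0$), and $\mA\mC_0\mA^* \in \overline{S_+(Y)}$ is non-negative (as $\mC_0 \geq 0$), we have $\mH \geq \mGamma > 0$, hence $\mH$ is boundedly invertible with $\|\mH^{-1}\|_{\mathcal{L}(Y)} \leq \|\mGamma^{-1}\|_{\mathcal{L}(Y)}$ uniformly in $\mB$ and $\delta$. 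This is where finite-dimensionality of $Y$ (Assumption (ii)) is used: it guarantees $\mGamma^{-1}$ is bounded.

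Next I would bound $\|\mC_0\mA^*\mH^{-1}\|_{\mathcal{L}(Y,X)}$. The naive bound $\|\mC_0\mA^*\|\,\|\mH^{-1}\| \leq \|\mC_0\|\,\|\mA\|\,\|\mGamma^{-1}\|$ already suffices for the claimed result, since $\mC_0$ is bounded (indeed trace-class, Assumption (i)) and $\mA$ is bounded (Assumption (iii)), and none of these three quantities depends on $\mB$ or $\delta$. So in fact
\[
\|\mK(\mB,\delta)\|_{\mathcal{L}(Y,X)} \leq \|\mC_0\|_{\mathcal{L}(X)}\,\|\mA\|_{\mathcal{L}(X,Y)}\,\|\mGamma^{-1}\|_{\mathcal{L}(Y)},
\]
which is the desired uniform bound.

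Two caveats are worth addressing in the write-up. First, one should note that no positivity or size restriction on $\delta$ is needed here — even for $\delta$ large, $\delta^2\mM\mB\mM^*$ only adds a non-negative operator, which makes $\mH$ larger and $\mH^{-1}$ smaller, so the bound is untouched. Second, I would remark that a sharper factored bound is available if one later needs it — writing $\mK = (\mC_0\mA^*\mH^{-1/2})\mH^{-1/2}$ and using $\mA\mC_0\mA^* \leq \mH$ to control $\mH^{-1/2}\mA\mC_0^{1/2}$ via an operator-monotonicity argument — but for the statement as given the crude product bound is enough. The only genuine subtlety, and the thing to state carefully, is the uniform invertibility of $\mH$: it rests entirely on $\mGamma$ being a strictly positive operator on a finite-dimensional $Y$, so that $\mH \geq \mGamma$ yields $\mH^{-1} \leq \mGamma^{-1}$ with a bound independent of the perturbing non-negative terms. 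There is no real obstacle here; the lemma is essentially a uniform-coercivity observation, and the main work is simply assembling the three boundedness hypotheses correctly.
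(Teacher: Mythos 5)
Your proof is correct and follows essentially the same route as the paper: both arguments rest on the Loewner-order monotonicity of the inverse, discarding the non-negative perturbation to obtain a bound on $\|(\mGamma + \delta^2\mM\mB\mM^* + \mA\mC_0\mA^*)^{-1}\|_{\mathcal{L}(Y)}$ uniform in $\mB$ and $\delta$, and then using boundedness of $\mC_0\mA^*$. The only (immaterial) difference is that you compare with $\mGamma$ alone, giving the bound $\|\mGamma^{-1}\|_{\mathcal{L}(Y)}$, whereas the paper retains $\mA\mC_0\mA^*$ and bounds by the slightly sharper $\|(\mGamma + \mA\mC_0\mA^*)^{-1}\|_{\mathcal{L}(Y)}$; either constant suffices since the subsequent analysis only uses the supremum $K_{\max}$.
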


\begin{proof}
We have that
\[
\mGamma + \mA\mC_0\mA^* \leq \mGamma + \mQ + \mA\mC_0 \mA^*,
\]
for any $\mQ \in \overline{S_+(Y)}$, which implies that
\[
(\mGamma + \mQ + \mA\mC_0 \mA^*)^{-1} \leq (\mGamma + \mA\mC_0A^*)^{-1},
\]
and consequently
\[
\|(\mGamma + \mQ + \mA\mC_0 \mA^*)^{-1}\|_{\mathcal{L}(Y)} \leq \|(\mGamma + \mA\mC_0\mA^*)^{-1}\|_{\mathcal{L}(Y)}.
\]
By choosing $\mQ = \delta^2\mM\mB\mM^*$ the claim follows.
\end{proof}
In what follows we will denote $K_{\max} = \sup\big\{\|\mK(\mB,\delta)\|_{\mathcal{L}(Y,X)}\;|\;\mB \in \overline{S_+(X)}, \delta \in \mathbb{R}\big\}$. Furthermore, we define the parameter $\beta = K_{\max}\| \mM\|_{{\mathcal L}(Y)}$.

\begin{proof}[Proof of Proposition \ref{prop:ckconv}]
We first show that for $\delta$ sufficiently small, the map ${\mathscr F}(\,\cdot\,,\delta)$ is a contraction on $\overline{S_+(X)}$. To do this, we look at the Fr\'{e}chet derivative of the map, which may be calculated explicitly. For  $B \in \overline{S_+(X)}$ and $\mV \in \mathcal{L}(X)$, we have
\begin{eqnarray*}
\fl {\mathsf D}_\mB{\mathscr F}&(\mB,\delta)\mV\\
\fl &= \delta^2\mC_0\mA^*(\mGamma + \delta^2\mM\mB\mM^* + \mA\mC_0\mA^*)^{-1}\mM\mV\mM^*(\mGamma + \delta^2\mM\mB\mM^*+\mA\mC_0\mA^*)^{-1}\mA\mC_0\\
\fl &= \delta^2\mK(\mB,\delta)\mM\mV\mM^*\mK(\mB,\delta)^*
\end{eqnarray*}
where $\mK(\mB,\delta)$ is as defined in Lemma \ref{lem:ckbound}. The norm of the derivative can be estimated as
\begin{eqnarray*}
\big\|\mD_\mB{\mathscr F}\big(\mB,\delta\big)\big\|_{\mathcal{L}(X)\rightarrow\mathcal{L}(X)} &= \sup_{\|\mV\|_{\mathcal{L}(X)} = 1} \big\|\mD_\mB{\mathscr F}\big(\mB,\delta\big)\mV\big\|_{\mathcal{L}(X)}\\
&\leq \sup_{\|\mV\|_{\mathcal{L}(X)} = 1} \delta^2\|\mK(\mB,\delta)\|_{{\mathcal L}(Y,X)}^2\|\mM\|_{\mathcal{L}(Y)}^2 \|\mV\|_{\mathcal{L}(X)}\\
&\leq (\beta\delta)^2
\end{eqnarray*}
by the estimate of Lemma \ref{lem:ckbound}. Since the above bound is uniform in $\mB$, we may use the mean value theorem to deduce that for all $\mB_1,\mB_2 \in \overline{S_+(X)}$,
\[
\|{\mathscr F}(\mB_1,\delta) - {\mathscr F}(\mB_2,\delta)\|_{\mathcal{L}(X)} \leq (\beta\delta)^2\|\mB_1-\mB_2\|_{\mathcal{L}(X)}
\]
and so ${\mathscr F}(\,\cdot\,,\delta)$ is a contraction for $\delta < 1/\beta$. The set $\overline{S_+(X)}$ is a complete subset of the space $\mathcal{L}(X)$, and so by the Banach fixed-point theorem there exists a unique $\mC(\delta) \in \overline{S_+(X)}$ such that
\[
\mC(\delta) = {\mathscr F}(\mC(\delta),\delta),
\]
and we have $\mC_\ell(\delta)\rightarrow \mC(\delta)$. Moreover, 
\begin{eqnarray*}
 \|\mC_\ell(\delta) - \mC(\delta)\|_{\mathcal{L}(X)}  &=& \|{\mathscr F}(\mC_{\ell-1}(\delta),\delta) - {\mathscr F}(\mC(\delta),\delta)\|_{\mathcal{L}(X)}  \\
 &\leq& (\beta\delta)^2\|\mC_{\ell-1}(\delta) - \mC(\delta)\|_{\mathcal{L}(X)},
\end{eqnarray*}
and recursively,
\[
\|\mC_\ell(\delta) - \mC(\delta)\|_{\mathcal{L}(X)}  \leq (\beta\delta)^{2\ell}  \|\mC_1(\delta) - \mC_0\|_{{\mathcal L}(X)} = \alpha_1(\beta\delta)^{2\ell}.
\]

We finally show that we actually have $\mC(\delta) \in S_+(X)$ and so the covariance does not become degenerate in the limit. We denote $\mC_{\post} = {\mathscr F}(\mC_{\post},0)$ the exact posterior covariance in the absence of model error, noting that $\mC_{\post} \in S_+(X)$ as we assume $\mC_0 \in S_+(X)$ and $\mGamma \in S_+(Y)$. From a similar argument as in the proof of Lemma \ref{lem:ckbound}, we have
\begin{eqnarray*}
0 < \mC_{\post} &= \mC_0 - \mC_0\mA^*(\mGamma + \mA\mC_0\mA^*)^{-1}\mA\mC_0\\
&\leq \mC_0 - \mC_0\mA^*(\mGamma + \delta^2\mM\mC(\delta)\mM^* + \mA\mC_0\mA^*)^{-1}\mA\mC_0\\
&= \mC(\delta),
\end{eqnarray*}
which gives the result.
\end{proof}

\begin{proof}[Proof of Proposition \ref{prop:mkconv}]
We may express $G_\ell$ in the form of an affine mapping,
\[
G_\ell(m,\delta) = \mH_\ell(\delta)m + g_\ell(\delta)
\]
where $\mH_\ell(\delta)$ and $g_\ell(\delta)$ are given by
\begin{eqnarray*}
\mH_\ell(\delta) &= -\delta \mC_0\mA^*\big(\mGamma + \delta^2\mM\mC_\ell(\delta)\mM^* + \mA\mC_0\mA^*\big)^{-1}\mM  \\
&= -\delta\mK(\mC_\ell(\delta),\delta)\mM,\\
g_\ell(\delta) &= m_0 + \mC_0\mA^*\big(\mGamma + \delta^2\mM\mC_\ell(\delta)\mM^* + \mA\mC_0\mA^*\big)^{-1}(b - \mA m_0) \\
 &= m_0 + \mK(\mC_\ell(\delta),\delta)(b - \mA m_0),
\end{eqnarray*}
respectively. From the estimates of Lemma \ref{lem:ckbound}, we obtain the uniform bounds
\begin{eqnarray*}
\|\mH_\ell(\delta)\|_{\mathcal{L}(X)} &\leq  \beta\delta<1,\\
\|g_\ell(\delta)\|_X &\leq \|m_0\|_X + \big\|\mK(\mC_\ell(\delta),\delta)\big\|_{\mathcal{L}(Y)}\|b - \mA m_0\|_X\\
&= \|m_0\|_X +\beta \|b - \mA m_0\|_X =  L.
\end{eqnarray*}

From the convergence of the sequence $\{\mC_\ell(\delta)\}_{\ell\geq 1}$ in the previous proposition, we see that $\{\mH_\ell(\delta)\}_{\ell\geq 1}$ and $\{g_\ell(\delta)\}_{\ell\geq 1}$ also converge, the limits being denoted by $\mH(\delta)$ and $g(\delta)$, respectively. Explicitly,
\begin{eqnarray*}
\mH(\delta) &=  -\delta \mK(\mC(\delta),\delta)\mM,\\
g(\delta) &= m_0 +  \mK(\mC(\delta),\delta)(b - \mA m_0).
\end{eqnarray*}
Moreover, since $\mB\mapsto \mK(\mB,\delta)$ is Fr\'{e}chet differentiable, this convergence occurs at the same rate as the convergence of $\{\mC_\ell(\delta)\}_{\ell\geq 1}$. 

Next we show that $\{m_\ell(\delta)\}_{\ell \geq 1}$ remains bounded for sufficiently small $\delta$. From the bounds above, we have
\begin{eqnarray*}
\|m_\ell(\delta)\|_X &\leq \|\mH_{\ell-1}(\delta)\|_{\mathcal{L}(X)}\|m_{\ell-1}(\delta)\|_X + \|g_{\ell-1}(\delta)\|_X\\
&\leq \beta\delta\|m_{\ell-1}\|_X + L,
\end{eqnarray*}
and therefore, by repeatedly applying the estimate, we obtain
\begin{eqnarray*}
\|m_\ell(\delta)\|_X &\leq (\beta\delta)^\ell\|m_0\|_X + L\sum_{j=0}^{\ell-1}(\beta\delta)^j\\
&\leq (\beta\delta)^\ell\|m_0\|_X + \frac{L}{1-\beta\delta},
\end{eqnarray*}
which provides a uniform bound for $\delta < 1/\beta$.

To prove the convergence, we write first for $i \geq 1$ the estimate
\begin{eqnarray*}
\|m_{i+1}(\delta) - m_i(\delta)\|_X &= \|\mH_i(\delta)m_i(\delta) + g_i(\delta) - \mH_{i-1}(\delta)m_{i-1}(\delta) - g_{i-1}(\delta)\|_X\\
&\leq \|\mH_i(\delta)m_i(\delta) - \mH_i(\delta)m_{i-1}(\delta)\|_X\\
&\hspace{1cm} + \|\mH_i(\delta)m_{i-1}(\delta) - \mH_{i-1}(\delta)m_{i-1}(\delta)\|_X\\
&\hspace{1cm} + \|g_i(\delta) - g_{i-1}(\delta)\|_X\\
&\leq \|\mH_i(\delta)\|_{\mathcal{L}(X)}\|m_i(\delta) - m_{i-1}(\delta)\|_X\\
&\hspace{1cm} + \|\mH_i(\delta) - \mH_{i-1}(\delta)\|_{\mathcal{L}(X)}\|m_{i-1}(\delta)\|_X\\
&\hspace{1cm} + \|g_i(\delta) - g_{i-1}(\delta)\|_X,
\end{eqnarray*}
and further, by the geometric convergence of the sequences $\{\mH_i(\delta)\}$ and $\{g_{i}(\delta)\}$, and the uniform boundedness, for some $\gamma>0$,
\[
\|m_{i+1}(\delta) - m_i(\delta)\|_X \leq \beta\delta\|m_i(\delta) - m_{i-1}(\delta)\|_X + \gamma(\beta\delta)^{2i}.
\]
By by repeatedly applying the estimate, we arrive at
\begin{eqnarray*}
\|m_{i+1}(\delta) - m_i(\delta)\|_X&\leq (\beta\delta)^{i+1}\|m_0\|_X + \gamma\sum_{j=0}^i(\beta\delta)^j\big((\beta\delta)^2\big)^{i-j}\\
&= (\beta\delta)^{i+1}\|m_0\|_X + \gamma\frac{\big((\beta\delta)^2\big)^{i+1} - (\beta\delta)^{i+1}}{(\beta\delta)^2 - \beta\delta},
\end{eqnarray*}
From this bound, it follows that $\{m_\ell(\delta)\}_{\ell \geq 1}$ is a Cauchy sequence: For $k > \ell$, we have
\begin{eqnarray*}
\|m_k(\delta) - m_\ell(\delta)\|_X &\leq \sum_{i=\ell}^{k-1} \|m_{i+1}(\delta) - m_i(\delta)\|_X\\
&\leq \|m_0\|_X\sum_{i=\ell}^{k-1}(\beta\delta)^{i+1} + \gamma\sum_{i=\ell}^{k-1}\frac{\big((\beta\delta)^2\big)^{i+1} - (\beta\delta)^{i+1}}{(\beta\delta)^2 - \beta\delta}\\
&= \|m_0\|_X\frac{(\beta\delta)^{\ell+1} - (\beta\delta)^{k+1}}{1-\beta\delta} + \gamma\frac{(\beta\delta)^{\ell+1}-(\beta\delta)^{k+1}}{(\beta\delta - 1)((\beta\delta)^2 - \beta\delta)}\\
&\hspace{1cm} + \gamma\frac{\big((\beta\delta)^2\big)^{\ell+1} - \big((\beta\delta)^2\big)^{k+1}}{(\delta^2 - 1)(\delta^2 - \beta\delta)}
\end{eqnarray*}
which tends to zero as $k,\ell\rightarrow\infty$, provided $\delta$ is small enough. Thus the sequence $\{m_\ell(\delta)\}$ converges, and we denote the limit by $m(\delta)$. Taking the limit as $k\rightarrow\infty$ in the above inequality, we have
\begin{eqnarray*}
\fl\|m_\ell(\delta) - m(\delta)\|_X &\leq \frac{(\beta\delta)^{\ell+1}}{\beta\delta - 1} + \gamma\frac{(\beta\delta)^{\ell+1}}{(\beta\delta - 1)((\beta\delta)^2 - \beta\delta)} + \gamma\frac{\big((\beta\delta)^2\big)^{\ell+1}}{(\delta^2 - 1)((\beta\delta)^2 - \beta\delta)}\\
&= \mathcal{O}\big((\beta\delta)^\ell\big)
\end{eqnarray*}
for all $\ell \geq 1$, and it follows that $\{m_\ell(\delta)\}_{\ell\geq 1}$ converges geometrically with rate $\beta\delta$.

To show that the limit $m(\delta)$ is indeed a fixed point of $G(\cdot,\delta)$, we first we note that
\begin{eqnarray*}
\fl\|G_\ell(m_\ell(\delta),\delta) - G(m(\delta),\delta)\|_X &= \|\mH_\ell(\delta)m_\ell(\delta) + g_\ell(\delta) - \mH(\delta)m(\delta) - g(\delta)\|_X\\
&\leq \|\mH_\ell(\delta) - \mH(\delta)\|_{\mathcal{L}(X)}\|m_\ell(\delta)\|_X \\
&\hspace{1cm} + \|\mH(\delta)\|_{\mathcal{L}(X)}\|m_\ell(\delta) - m(\delta)\|_X + \|g_\ell(\delta) - g(\delta)\|_X\\
&\rightarrow 0,
\end{eqnarray*}
as $\ell\rightarrow \infty$, and so it follows that
\[
m(\delta) = \lim_{\ell\rightarrow\infty} m_{\ell+1}(\delta)
= \lim_{\ell\rightarrow\infty} G_\ell(m_\ell(\delta),\delta)
= G(m(\delta),\delta).
\]
All that remains is to check that the fixed point is unique. Supposing that $h(\delta)$ is another fixed point,  it follows that
\begin{eqnarray*}
\|m(\delta) - h(\delta)\|_X &= \|G(m(\delta),\delta) - G(h(\delta),\delta)\|_X\\
&\leq \|\mH(\delta)\|_{\mathcal{L}(X)}\|m(\delta) - h(\delta)\|_X\\
&\leq \beta\delta\|m(\delta) - h(\delta)\|_X.
\end{eqnarray*}
Hence, for $\delta < 1/\beta$ we must have that $m(\delta) = h(\delta)$, and the result follows.
\end{proof}

\subsection{Convergence in Finite Dimensions}
\label{ssec:conv_finite}
Above we had to assume that the prior distribution was sufficiently close to the posterior in order to guarantee convergence; in finite dimensions we may drop this assumption and still get convergence of the covariances. Here we assume that $X$ and $Y$ are Euclidian spaces,  $X = \mathbb{R}^N$ and $Y = \mathbb{R}^J$. 

In the following, we use the following convention: For symmetric matrices 
$\mB_1,\mB_2$, we write $\mB_1 \geq \mB_2$, or $\mB_1 > \mB_2$ to indicate that $\mB_1 - \mB_2$ is non-negative definite, or positive definite, respectively. We denote by $S_+^N \subseteq \mathbb{R}^{N\times N}$  the set of positive definite $N\times N$ matrices. In this section, the adjoint operators are denoted as transposes.

We start by showing that the iterative updating formula (\ref{eq:ck_standard}) gives a convergent sequence. However, instead of the covariance matrices, it is more convenient to work with the precision matrices, defined as $\mB_\ell := \mC_\ell^{-1}$. Observe that formula (\ref{eq:ck_standard}) can be written as
\[
\mC_{\ell+1}^{-1} = \mA^\mT(\mGamma + \mM \mC_\ell \mM^\mT)^{-1}\mA + \mC_0^{-1},
\]
which motivates the following result.

\begin{proposition}
Let $\mB_0 = \mC_0^{-1} \in S^N_+$ be a positive definite precision matrix,  and let the sequence $\{\mB_{\ell}\}_{\ell \geq 0}$ be generated iteratively by the formula $\mB_{\ell+1} = R(\mB_\ell)$, where $R:S^N_+\rightarrow S^N_+$ is given by
\[
R(\mB) = \mA^\mT(\mGamma + \mM \mB^{-1} \mM^\mT)^{-1}\mA + \mB_0.
\]
Then the sequence $\{\mB_\ell\}_{\ell\geq 0}$ is increasing in the sense of quadratic forms, and there exists a positive definite $\mB \in S^N_{+}$ such that $\mB_\ell\uparrow \mB$.

Consequently, the sequence of covariances $\{\mC_\ell\}_{\ell\geq 0}$ defined by (\ref{eq:ck_standard}) satisfies $\mC_\ell\downarrow \mC := \mB^{-1}$.
\end{proposition}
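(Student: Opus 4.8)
The plan is to read off from the structure of $R$ that it preserves the Loewner order, combine this with the trivial lower bound $R(\mB)\geq\mB_0$ and a uniform upper bound coming from positive definiteness of $\mGamma$, and then invoke the monotone convergence theorem for symmetric matrices.

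First I would record the order facts needed: on the positive definite matrices inversion reverses the Loewner order (if $\mB_1\leq\mB_2$ then $\mB_2^{-1}\leq\mB_1^{-1}$), while $\mX\mapsto\mA^\mT\mX\mA$, $\mX\mapsto\mM\mX\mM^\mT$ and $\mX\mapsto\mGamma+\mX$ all preserve it. Composing these through the definition of $R$: $\mB_1\leq\mB_2$ gives $\mB_1^{-1}\geq\mB_2^{-1}$, hence $\mGamma+\mM\mB_1^{-1}\mM^\mT\geq\mGamma+\mM\mB_2^{-1}\mM^\mT$, hence $(\mGamma+\mM\mB_1^{-1}\mM^\mT)^{-1}\leq(\mGamma+\mM\mB_2^{-1}\mM^\mT)^{-1}$, and finally $R(\mB_1)\leq R(\mB_2)$; thus $R$ is monotone on $S^N_+$. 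Since $R(\mB)=\mB_0+\mA^\mT(\cdots)^{-1}\mA\geq\mB_0$ for every $\mB$, in particular $\mB_1=R(\mB_0)\geq\mB_0$, and an induction using monotonicity of $R$ gives $\mB_{\ell+1}=R(\mB_\ell)\geq R(\mB_{\ell-1})=\mB_\ell$; so $\{\mB_\ell\}$ is increasing in the sense of quadratic forms.

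Next I would establish a uniform upper bound: since $\mM\mB_\ell^{-1}\mM^\mT\geq0$ we have $\mGamma+\mM\mB_\ell^{-1}\mM^\mT\geq\mGamma$ and therefore $(\mGamma+\mM\mB_\ell^{-1}\mM^\mT)^{-1}\leq\mGamma^{-1}$, which is legitimate because $\mGamma$ is positive definite. Hence
\[
\mB_{\ell+1}=\mA^\mT(\mGamma+\mM\mB_\ell^{-1}\mM^\mT)^{-1}\mA+\mB_0\leq\mA^\mT\mGamma^{-1}\mA+\mB_0=:\overline{\mB}
\]
for all $\ell\geq0$, and $\mB_0\leq\overline{\mB}$ trivially, so the whole sequence lies below $\overline{\mB}$.

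An increasing sequence of symmetric matrices bounded above in the Loewner order converges: for each $v\in\mathbb{R}^N$ the scalars $v^\mT\mB_\ell v$ are nondecreasing and bounded by $v^\mT\overline{\mB}v$, hence convergent, and polarization upgrades this to convergence of every entry, so $\mB_\ell\to\mB$ for some symmetric $\mB$ with $\mB_\ell\leq\mB$; since $\mB\geq\mB_0\in S^N_+$ we get $\mB\in S^N_+$, giving $\mB_\ell\uparrow\mB$. (Continuity of $R$ on $S^N_+$ — inversion is continuous and $\mGamma+\mM\mB^{-1}\mM^\mT$ never leaves the positive definite matrices — then lets one pass to the limit in $\mB_{\ell+1}=R(\mB_\ell)$ and identify $\mB=R(\mB)$.) Finally $\mC_\ell=\mB_\ell^{-1}$ for all $\ell$ by induction from $\mC_0=\mB_0^{-1}$ and the identity preceding the statement; antitonicity of inversion turns $\{\mB_\ell\}\uparrow$ into $\{\mC_\ell\}\downarrow$, and continuity of inversion at the positive definite $\mB$ gives $\mC_\ell\to\mB^{-1}=:\mC$, i.e.\ $\mC_\ell\downarrow\mC$. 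The only point demanding care, more bookkeeping than genuine obstacle, is tracking the direction of every inequality through the two nested applications of order-reversal in the monotonicity step; the uniform bound and the monotone-convergence argument are then routine.
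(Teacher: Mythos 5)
Your proposal is correct and follows essentially the same route as the paper: monotonicity of the sequence via the order-reversal of inversion (you package it as order-preservation of $R$, the paper runs the induction directly on $\mB_{\ell+1}-\mB_\ell$), the same uniform bound $\mB_\ell \leq \mA^\mT\mGamma^{-1}\mA + \mB_0$, and the same monotone-convergence-plus-polarization argument for the limit. Your added remarks on the fixed-point identification and the passage to $\mC_\ell \downarrow \mB^{-1}$ are fine refinements of details the paper leaves implicit.
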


\begin{proof}
We first show that $\mB_{\ell+1} \geq \mB_\ell$ for all $\ell$ using induction. Write
\[
r(\mB) = \mA^\mT(\mGamma + \mM \mB^{-1} \mM^\mT)^{-1}\mA,
\]
so that $R(\mB) = r(\mB) + \mB_0$. If $\mB > 0$, then $r(\mB) \geq 0$, proving that $\mB_1 = r(\mB_0) + \mB_0 \geq \mB_0$.

Now assume that $\mB_{\ell} \geq \mB_{\ell-1}$. Then we have
\begin{eqnarray*}
\mB_{\ell+1} - \mB_\ell &= R(\mB_\ell) - R(\mB_{\ell-1})\\
&= r(\mB_\ell) - r(\mB_{\ell-1})\\
&= \mA^\mT\Big(\big(\mGamma + \mM \mB_\ell^{-1} \mM^\mT\big)^{-1} - \big(\mGamma + \mM \mB_{\ell-1}^{-1}\mM^\mT\big)^{-1}\Big)\mA.
\end{eqnarray*}
To prove the claim, it suffices to show that the bracketed difference is non-negative definite.  Consider the difference
\[
 \big(\mGamma + \mM\mB_{\ell-1}^{-1}\mM^\mT\big) - \big(\mGamma + \mM\mB_\ell^{-1}\mM^\mT\big) 
 = \mM\big(\mB_{\ell-1}^{-1} - \mB_\ell^{-1}\big)\mM^\mT \geq 0
\]
by the induction assumption. Therefore,
\[
 \big(\mGamma + \mM\mB_{\ell-1}^{-1}\mM^\mT\big)^{-1} \leq  \big(\mGamma + \mM\mB_\ell^{-1}\mM^\mT\big)^{-1},
\]
which implies the desired non-negative definiteness.

To prove that the sequence $\{\mB_\ell\}_{\ell\geq 0}$ is bounded as quadratic forms, denote the quadratic form as
\[
 \mQ_\ell(u,v) = u^\mT \mB_\ell v,\quad u,v\in\R^N.
\]
Since
\[
 \big(\mGamma + \mM\mB_{\ell-1}^{-1}\mM^\mT\big)^{-1} \leq \mGamma^{-1},
\]
we have 
\[
 \mQ_\ell(u,u) \leq (\mA u)^\mT \mGamma^{-1} \mA u + u^\mT \mB_0 u,
\]
proving the boundedness of the sequence.

In particular, it follows that for each $u\in\R^N$, 
\[
 \mQ_\ell(u,u)\rightarrow u^\mT \mB u, \quad\mbox{as $\ell\to\infty$}
\]
for some symmetric positive definite matrix $\mB\in \R^{N\times N}$. The fact that the matrix entries of $\mB_\ell$ converge to the corresponding entries of $\mB$ follows from the polar identity,
\[
 u^\mT\mB_\ell v = \frac 14\left(\mQ_\ell(u+v) - \mQ_\ell(u-v)\right),
\]
with $u,v$ being the canonical basis vectors.  This completes the proof.
\end{proof}

\section{The General Case}
\label{sec:sampling}

In general, the sequence of distributions $\pi_\ell$  are not Gaussian, and so it 
is considerably harder to analyze the convergence as we did in the previous section. In this section we consider how the algorithm may be implemented in practice, and in particular, how to produce an approximation to the posterior distribution using a finite number of full model evaluations. This approximate distribution can be used for generating samples using only approximate model evaluations,
leading to a significantly lower computational cost over sampling using the true posterior based on the full model. 

In subsection \ref{ssec:update_densities} we outline the general framework for sampling from the approximate posterior sequence and updating the density, making use of particle approximations. In subsection \ref{ssec:iteration_op} we reformulate the iteration (\ref{eq:update_leb}) in terms of operators on the set of probability measures, and provide results on properties of these operators. Convergence in the large particle limit is shown, using the new formulation of the update. In subsection \ref{ssec:gauss_sampling} a particular rejection sampling method, based on a Gaussian mixture proposal, 
is studied. Importance sampling is then considered in subsection \ref{ssec:importance} and similar convergence is shown. 

\subsection{Updating the Densities via Sampling}
\label{ssec:update_densities}

We consider the algorithm in Section~\ref{ssec:algorithm}, and in particular, address the question of how to generate a sequence of approximate samples from the 
iteratively defined densities  $\mu_\ell$ given by (\ref{eq:update_leb}). We shall use particle approximations to do this. Assume that
\[
 {\mathscr S}_\ell = \big\{(u_\ell^1,w_\ell^1),(u_\ell^2,w_\ell^2),\ldots,(u_\ell^N,w_\ell^N)\big\},\quad \ell = 0,1,2,\ldots
\]
is the current approximate sample of the unknowns with relative weights $w_\ell^j$. For $\ell=0$, the sample is obtained by independent sampling from the prior, and $w_0^j = 1/N$. We then compute the modeling error sample,
\[
 {\mathscr M}_\ell = \big\{m_\ell^1,m_\ell^2,\ldots,m_\ell^N\big\},\quad \ell = 0,1,2,\ldots
\]
by defining
\[
 m_\ell^j = M(u_\ell^j).
\]
Consider now the model (\ref{appr model}). Assuming that the modeling error is independent of the unknown $u$, we may write a conditional likelihood model,
\[
 \pi(b\mid u,m) \propto \pi_{\rm noise}(b - f(u) - m).
\]
Let $\nu_\ell(m)$ denote the probability density of the modeling error based on our current information. Then, the updated likelihood model based on the approximate model  is
\[
 \pi_{\ell+1}(b\mid u) = \int \pi(b\mid u,m)\nu_\ell(m) dm,
\]
and, using a Monte Carlo integral approximation, postulating that the realizations $m_\ell^j$ inherit the weights of the sample points $u_\ell^j$, we obtain
\[
 \pi_{\ell+1}(b\mid u) \approx \sum_{j=1}^Nw_\ell^j \pi(b\mid u,m_\ell^j).
\]
The current approximation for the posterior density is
\[
 \pi_{\ell+1}(u\mid b) \propto \pi_{\rm prior}(u)\sum_{j=1}^Nw_\ell^j \pi(b\mid u,m_\ell^j),
\]
suggesting an updating scheme for ${\mathscr S}_\ell\rightarrow {\mathscr S}_{\ell+1}$: 
\begin{itemize}
\item[(a)] Draw  an index $k_j$ by replacement from $\{1,2,\ldots,N\}$, using the probabilities $w_\ell^j$;
\item[(b)] Draw the sample point
\begin{eqnarray}
\label{eq:update_draw}
 u_{\ell+1}^j\sim \pi_{\rm prior}(u)\pi(b\mid u,m_\ell^{k_j}).
\end{eqnarray}
\end{itemize}
Part (b) above is straightforward, in particular, if the model is Gaussian and $f$ is linear, such as in the linearized model for EIT, since the measure (\ref{eq:update_draw}) is then a Gaussian. We will demonstrate the effectiveness of this approach in Section~\ref{ssec:eit}. Otherwise we may consider other sampling methods such as importance sampling; this is what is done in the following subsections.

\subsection{A convergence result for particle approximations}
\label{ssec:iteration_op}
In this section, we rewrite the updating formula in terms of mappings of measures, and analyze the convergence of the particle approximation under certain limited conditions.

Let $\mu_\ell$ denote the current approximation of the posterior density for $u$. The updated likelihood based on the modeling error is
\[
 \pi_{\ell+1}( b\mid u) \propto \int_X \pi_{\rm noise}(b - f(u) - M(z))\mu_{\ell}(dz),
\]
and therefore, the updating, by Bayes' formula, is given by
\begin{equation}\label{def of P}
 \mu_{\ell+1}(du) \propto \mu_{\rm prior}(du) \int_X \pi_{\rm noise}(b - f(u) - M(z))\mu_{\ell}(dz) = P\mu_{\ell}(du).
\end{equation}
Furthermore, we write the normalization formally as an operator,
\[
 L\mu = \frac{\mu}{\mu(1)},  \quad \mu(1) = \int_X\mu(du).  
 \]
The model updating algorithm can therefore be written concisely as
\[
 \mu_{\ell+1} = LP\mu_\ell,\quad \mu_0 = \mu_{\rm prior}.
\]    

Let $\mathsf{M}(X)$ denote the set of finite measures on $X$.  Denote by $\mathsf{P}(X)$ the set of probability measures on $X$, and for $p \in (0,1)$ denote by
$\mathsf{M}_p(X)$ the set of finite measures with total mass lying in the interval $[p,p^{-1}]$.

Let $\mu$ and $\nu$ denote two random $\mathsf{M}(X)$-valued measures, i.e., $\mu^\omega,\nu^\omega\in \mathsf{M}(X)$ for $\omega\in\Omega$, where $\Omega$ is a probability space. Denoting by ${\mathbb E}$ the expectation, we define
the distance between random measures through
\[
 d(\mu,\nu)^2 = \sup_{\|\varphi\|_\infty=1} {\mathbb E} |\mu(\varphi)-\nu(\varphi)|^2,
\]
where the functions $\varphi$ are continuous over $X$. For non-random measures, the definition coincides with the total variation distance.

In the following two lemmas, which we need for the large particle convergence
result that follows them, 
we make a restrictive assumption about the noise distribution. 
\begin{assumptions}
\label{ass:particle}
There exists $\kappa \in (0,1)$ such that for all $\eps \in Y$, $\kappa \leq \pi_{\mathrm{noise}}(\eps) \leq \kappa^{-1}$.
\end{assumptions}

Under this assumption, we show the following results concerning the mappings $P$ and $L$ defined before.
\begin{lemma}
Let Assumptions \ref{ass:particle} hold. Then  $P:\mathsf{P}(X)\rightarrow\mathsf{M}_\kappa(X)$, and
\[
d(P\mu,P\nu) \leq \kappa^{-1} d(\mu,\nu).
\]
\end{lemma}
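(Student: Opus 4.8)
The plan is to prove the two assertions separately. For the first assertion, $P : \mathsf{P}(X) \to \mathsf{M}_\kappa(X)$, I would take $\mu \in \mathsf{P}(X)$ and estimate the total mass $P\mu(1) = \int_X \int_X \pi_{\mathrm{noise}}(b - f(u) - M(z))\,\mu(dz)\,\mu_{\mathrm{prior}}(du)$. Since Assumptions \ref{ass:particle} give $\kappa \leq \pi_{\mathrm{noise}} \leq \kappa^{-1}$ pointwise, and since $\mu$ and $\mu_{\mathrm{prior}}$ are probability measures with total mass $1$, the double integral is bounded above by $\kappa^{-1}$ and below by $\kappa$. Hence $P\mu(1) \in [\kappa, \kappa^{-1}]$, which is exactly the statement $P\mu \in \mathsf{M}_\kappa(X)$.

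For the Lipschitz bound, I would fix a test function $\varphi$ with $\|\varphi\|_\infty = 1$ and compute the difference $P\mu(\varphi) - P\nu(\varphi)$. Writing $g_u(z) := \pi_{\mathrm{noise}}(b - f(u) - M(z))$, we have
\[
P\mu(\varphi) - P\nu(\varphi) = \int_X \varphi(u)\Big(\int_X g_u(z)\,\mu(dz) - \int_X g_u(z)\,\nu(dz)\Big)\mu_{\mathrm{prior}}(du).
\]
Since $\|g_u\|_\infty \leq \kappa^{-1}$, the inner difference is bounded in absolute value by $\kappa^{-1}|\mu(\kappa g_u) - \nu(\kappa g_u)|$, where $\kappa g_u$ has sup-norm at most $1$; by the definition of $d$, this inner difference has second moment controlled by $\kappa^{-2} d(\mu,\nu)^2$ uniformly in $u$. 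Using $|\varphi(u)| \leq 1$ and the fact that $\mu_{\mathrm{prior}}$ is a probability measure, together with Jensen's inequality (or Cauchy--Schwarz) to move the expectation inside the $u$-integral, gives ${\mathbb E}|P\mu(\varphi) - P\nu(\varphi)|^2 \leq \kappa^{-2} d(\mu,\nu)^2$. Taking the supremum over $\varphi$ yields $d(P\mu, P\nu) \leq \kappa^{-1} d(\mu,\nu)$.

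The only mild subtlety — the main point requiring care rather than a genuine obstacle — is the interchange of the expectation ${\mathbb E}$ over $\omega$ with the spatial integral against $\mu_{\mathrm{prior}}$, and the fact that the bound on the inner term must be uniform in $u$ so that it survives integration; both follow from the pointwise two-sided bound on $\pi_{\mathrm{noise}}$ and an application of Jensen's inequality to $|\!\int (\cdots)\mu_{\mathrm{prior}}(du)|^2 \leq \int |\cdots|^2 \mu_{\mathrm{prior}}(du)$. Everything else is a direct consequence of Assumptions \ref{ass:particle} and the definitions.
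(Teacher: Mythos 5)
Your proof is correct, and it rests on the same two ingredients as the paper's: the two-sided bound $\kappa \le \pi_{\mathrm{noise}} \le \kappa^{-1}$ for the statement about total mass, and the definition of $d$ for the contraction constant. The route to the Lipschitz bound differs slightly in structure. The paper first exchanges the order of integration, writing $(P\mu)(\varphi) = \mu(\psi)$ with the single test function $\psi(z) = \int_X \pi_{\mathrm{prior}}(u)\,\pi_{\mathrm{noise}}(b - f(u) - M(z))\,\varphi(u)\,\dee u$, which satisfies $\|\psi\|_\infty \le \kappa^{-1}$ whenever $\|\varphi\|_\infty \le 1$; the bound then follows from a single application of the definition of $d$, with no further inequalities. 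You instead keep $u$ on the outside, apply the definition of $d$ to the family $g_u(\cdot)$ for each fixed $u$, and then stitch the pointwise bounds together with Jensen (or Cauchy--Schwarz) plus a Tonelli interchange of ${\mathbb E}$ with the $\mu_{\mathrm{prior}}$-integral --- exactly the steps you flag as needing care, and they do go through since the integrand is nonnegative and your bound is uniform in $u$. Both routes yield the same constant $\kappa^{-1}$; the paper's is marginally more economical in that Fubini replaces Jensen, while yours avoids introducing the auxiliary function $\psi$. One shared technicality, glossed over in both arguments, is that the supremum defining $d$ ranges over continuous test functions, so one tacitly assumes $\psi$, respectively $z \mapsto g_u(z)$, is admissible (or that the supremum may be extended to bounded measurable functions); this is not specific to your proof.
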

\begin{proof}
First note that $\kappa \leq \pi_{\mathrm{noise}} <\kappa^{-1}$ implies that $\kappa \leq (P\mu)(1) \leq \kappa^{-1}$, and so $P$ does indeed map ${\mathsf P}(X)$  into $\mathsf{M}_\kappa(X)$. Exchanging the order of integration, we see for any bounded measurable $\varphi$,
\[
(P\mu)(\varphi) = \int_X\underbrace{\left(\int_X \pi_{\mathrm{prior}}(u)\pi_{\mathrm{noise}}(b-f(u) - M(z))\varphi(u)\dee u\right)}_{=:\psi(z)}\mu(\dee z)
\]
and so
\[
|(P\mu)(\varphi)-(P\nu)(\varphi)|^2 = |\mu(\psi) - \nu(\psi)|^2.
\]
Using that $\pi_{\mathrm{noise}} <\kappa^{-1}$, we see that $\|\varphi\|_\infty \leq 1$ implies that $\|\psi\|_\infty \leq \kappa^{-1}$, and so
\begin{eqnarray*}
d(P\mu,P\nu)^2 &\leq \sup_{\|\psi\|_\infty \leq \kappa^{-1}} {\mathbb E}|\mu(\psi) - \nu(\psi)|^2\\
&\leq \sup_{\|\psi\|_\infty \leq 1} \kappa^{-2}{\mathbb E} |\mu(\psi) - \nu(\psi)|^2\\
&= \kappa^{-2} d(\mu,\nu)^2,
\end{eqnarray*}
implying the claim.
\end{proof}

A similar result for the mapping $L$ can be obtained. 

\begin{lemma}
Let Assumptions \ref{ass:particle} hold. Then it follows that $L:\mathsf{M}_\kappa(X)\rightarrow\mathsf{P}(X)$, and furthermore,
for $\mu,\nu \in \mathsf{M}_\kappa(X)$, we have
\[
d(L\mu,L\nu) \leq 2\kappa^{-2}d(\mu,\nu).
\]
\end{lemma}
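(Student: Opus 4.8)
The plan is to exploit the explicit formula $L\mu = \mu/\mu(1)$ together with the two-sided bound $\kappa \le \mu(1) \le \kappa^{-1}$ that holds by definition on $\mathsf{M}_\kappa(X)$. First I would note that for $\mu \in \mathsf{M}_\kappa(X)$ one has $\mu(1) \ge \kappa > 0$, so $L\mu$ is well defined, and $(L\mu)(1) = \mu(1)/\mu(1) = 1$; hence $L\mu \in \mathsf{P}(X)$ and $L:\mathsf{M}_\kappa(X)\to\mathsf{P}(X)$.

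For the Lipschitz estimate, fix a continuous $\varphi$ with $\|\varphi\|_\infty \le 1$ and write the difference of normalisations in the form
\[
(L\mu)(\varphi) - (L\nu)(\varphi) = \frac{\mu(\varphi) - \nu(\varphi)}{\mu(1)} + \nu(\varphi)\,\frac{\nu(1) - \mu(1)}{\mu(1)\,\nu(1)} .
\]
I would then bound the first term by $\kappa^{-1}|\mu(\varphi)-\nu(\varphi)|$, using $\mu(1)\ge\kappa$, and the second term by $\kappa^{-1}|\mu(1)-\nu(1)|$, using $|\nu(\varphi)| \le \nu(1)\|\varphi\|_\infty$ so that the factor $\nu(1)$ cancels and $\|\varphi\|_\infty \le 1$. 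This yields
\[
|(L\mu)(\varphi) - (L\nu)(\varphi)| \le \kappa^{-1}\big(\,|\mu(\varphi)-\nu(\varphi)| + |\mu(1)-\nu(1)|\,\big).
\]

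Next I would square this bound, apply $(a+b)^2 \le 2a^2 + 2b^2$, take expectations, and take the supremum over $\|\varphi\|_\infty = 1$. The one point to keep in mind is that the total-mass difference can be written $\mu(1)-\nu(1) = \mu(\mathbf 1) - \nu(\mathbf 1)$ for the constant function $\mathbf 1$, which is an admissible test function since $\|\mathbf 1\|_\infty = 1$; hence ${\mathbb E}|\mu(1)-\nu(1)|^2 \le d(\mu,\nu)^2$ just like the first term. Combining these, $d(L\mu,L\nu)^2 \le 2\kappa^{-2}d(\mu,\nu)^2 + 2\kappa^{-2}d(\mu,\nu)^2 = 4\kappa^{-2}d(\mu,\nu)^2$, so $d(L\mu,L\nu) \le 2\kappa^{-1}d(\mu,\nu)$, which in particular gives the claimed bound $d(L\mu,L\nu)\le 2\kappa^{-2}d(\mu,\nu)$ because $\kappa \in (0,1)$.

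No step here is a genuine obstacle: the argument reduces to the algebraic identity above plus the elementary inequality $(a+b)^2\le 2a^2+2b^2$. The only small things to be careful about are that the total-mass difference $|\mu(1)-\nu(1)|$ must itself be controlled by $d(\mu,\nu)$ (immediate via the constant test function), and that we stay on $\mathsf{M}_\kappa(X)$ precisely so that the denominators $\mu(1),\nu(1)$ remain bounded away from $0$, making the division harmless.
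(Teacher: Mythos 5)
Your proof is correct and takes essentially the same route as the paper, which simply defers to Lemma 5.17 of the cited Dashti--Stuart notes (the standard normalisation lemma with $g\equiv 1$): the same decomposition of the difference of ratios, the elementary inequality $(a+b)^2\le 2a^2+2b^2$, and the constant test function to control the mass difference. As a minor bonus, with $g\equiv 1$ you obtain the sharper constant $2\kappa^{-1}$, which indeed implies the stated $2\kappa^{-2}$ bound since $\kappa\in(0,1)$.
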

\begin{proof}
The proof is essentially identical to that of Lemma 5.17 in \cite{lecturenotes}, with $1$ in place of $g$. We skip the details here.
\end{proof}

We use the above results to analyze the convergence of particle approximations of the measures. We  introduce the sampling operator $S^N:\mathsf{P}(X)\rightarrow\mathsf{P}(X)$,
\[
S^N\mu = \frac{1}{N}\sum_{j=1}^N \delta_{u^j},\quad u^1,\ldots,u^N \sim \mu\;\mathrm{i.i.d.}
\]
and we have
\[
(S^N\mu)(\varphi) = \frac{1}{N}\sum_{j=1}^N \varphi(u^j),\quad u^1,\ldots,u^N \sim \mu\;\mathrm{i.i.d.}
\]
Observe that $S^N\mu$ is a random measure, as it depends on the sample.
It is shown in \cite{lecturenotes}, Lemma 5.15, that the operator $S^N$ satisfies
\[
\sup_{\mu \in \mathsf{P}(X)} d(S^N\mu,\mu) \leq \frac{1}{\sqrt{N}}.
\]

Define the sequence of particle approximations $\{\mu_\ell^N\}_{\ell\geq 0}$ to $\{\mu_\ell\}_{\ell\geq 0}$ by
\begin{eqnarray}
\label{eq:iteration_exact}
\nonumber\mu_0^N &= S^N \mu_0,\\
\mu_{\ell+1}^N &= S^NLP\mu_\ell^N.
\end{eqnarray}

in light of the previous lemmas, now we prove the following result regarding convergence of this approximation as $N\rightarrow\infty$:

\begin{proposition}
\label{prop:exact_sampling}
Let Assumptions \ref{ass:particle} hold. Define $\{\mu_\ell\}_{\ell\geq 0}$, $\{\mu_\ell^N\}_{\ell\geq 0}$ as above. Then, for each $\ell$,
\[
d(\mu_\ell^N,\mu_\ell) \leq \frac{1}{\sqrt{N}}\sum_{k=0}^\ell (2\kappa^{-3})^k.
\]
In particular, $d(\mu_\ell^N,\mu_\ell)\rightarrow 0$ as $N\rightarrow\infty$.
\end{proposition}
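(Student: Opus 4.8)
The strategy is the standard telescoping/triangle-inequality argument for propagated errors in particle filters, combining the one-step contraction estimates for $P$ and $L$ from the two preceding lemmas with the sampling error bound for $S^N$. I would argue by induction on $\ell$. Write $\Psi := LP$, so that $\mu_{\ell+1} = \Psi\mu_\ell$ and $\mu_{\ell+1}^N = S^N\Psi\mu_\ell^N$. The base case $\ell=0$ is immediate from $d(\mu_0^N,\mu_0) = d(S^N\mu_0,\mu_0) \le 1/\sqrt N$. For the inductive step, insert the intermediate measure $\Psi\mu_\ell^N$ and use the triangle inequality:
\[
d(\mu_{\ell+1}^N,\mu_{\ell+1}) \le d\big(S^N\Psi\mu_\ell^N,\Psi\mu_\ell^N\big) + d\big(\Psi\mu_\ell^N,\Psi\mu_\ell\big).
\]

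The first term is a pure sampling error: since $\Psi\mu_\ell^N \in \mathsf{P}(X)$ (it is the output of $L$), the bound $\sup_{\mu\in\mathsf P(X)} d(S^N\mu,\mu) \le 1/\sqrt N$ from \cite{lecturenotes}, Lemma 5.15, applies and controls it by $1/\sqrt N$ — one must be slightly careful that this holds conditionally on the randomness already present in $\mu_\ell^N$, i.e. that the fresh i.i.d. draws used by $S^N$ are independent of that history, which is true by construction; taking expectations then preserves the bound. The second term is handled by composing the two Lipschitz estimates: $P:\mathsf P(X)\to\mathsf M_\kappa(X)$ with constant $\kappa^{-1}$, and $L:\mathsf M_\kappa(X)\to\mathsf P(X)$ with constant $2\kappa^{-2}$, so $\Psi = LP$ is Lipschitz with constant $2\kappa^{-3}$ on $\mathsf P(X)$; since $\mu_\ell^N,\mu_\ell \in \mathsf P(X)$, we get $d(\Psi\mu_\ell^N,\Psi\mu_\ell) \le 2\kappa^{-3}\, d(\mu_\ell^N,\mu_\ell)$.

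Combining, the recursion reads $d(\mu_{\ell+1}^N,\mu_{\ell+1}) \le \tfrac{1}{\sqrt N} + 2\kappa^{-3}\, d(\mu_\ell^N,\mu_\ell)$. Unrolling this linear recursion with the base case gives
\[
d(\mu_\ell^N,\mu_\ell) \le \frac{1}{\sqrt N}\sum_{k=0}^\ell (2\kappa^{-3})^k,
\]
which is exactly the claimed bound; for fixed $\ell$ the right-hand side is $\mathcal O(N^{-1/2})\to 0$, giving the final assertion. I do not expect a serious obstacle here — the only genuinely delicate point is the measurability/independence bookkeeping needed to justify applying the unconditional bound $d(S^N\nu,\nu)\le 1/\sqrt N$ to the random measure $\nu = \Psi\mu_\ell^N$ (one should condition on the $\sigma$-algebra generated by the first $\ell$ resampling steps, apply Lemma 5.15 of \cite{lecturenotes} pointwise, and then take the outer expectation using the tower property), and checking that the Lipschitz lemmas indeed apply with the right domains at each stage, i.e. that $P\mu_\ell^N \in \mathsf M_\kappa(X)$ so that $L$ may be applied to it — which is precisely the first conclusion of the lemma on $P$. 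Everything else is the routine geometric-sum bookkeeping above.
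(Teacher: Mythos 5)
Your proposal is correct and follows essentially the same argument as the paper: the same triangle-inequality decomposition through the intermediate measure $LP\mu_{\ell}^N$, the same one-step Lipschitz bounds for $P$ and $L$ giving the factor $2\kappa^{-3}$, the same $1/\sqrt{N}$ sampling bound, and the same unrolled recursion. The extra care you note about independence of the fresh samples and the domain $\mathsf{M}_\kappa(X)$ is a reasonable refinement but does not change the argument.
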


\begin{proof}
The triangle inequality for $d(\cdot,\cdot)$ yields
\begin{eqnarray*}
e_\ell := d(\mu_\ell^N,\mu_\ell) \leq d(S^NLP\mu_{\ell-1}^N,LP\mu_{\ell-1}^N) + d(LP\mu_{\ell-1}^N,LP\mu_{\ell-1}),
\end{eqnarray*}
and applying the bounds given by the previous lemmas, we obtain
\begin{eqnarray*}
e_\ell &\leq \frac{1}{\sqrt{N}} + 2\kappa^{-2}d(P\mu_{\ell-1}^N,P\mu_{\ell-1})\\
&\leq \frac{1}{\sqrt{N}} + 2\kappa^{-3} d(\mu_{\ell-1}^N,\mu_{\ell-1})\\
&= \frac{1}{\sqrt{N}} + 2\kappa^{-3} e_{\ell-1}.
\end{eqnarray*}
The result follows since $e_0 = d(S^N\mu_0,\mu_0) \leq 1/\sqrt{N}$.
\end{proof}

\subsection{Particle approximation with Gaussian densities}
\label{ssec:gauss_sampling}

In this section, we consider the particle approximation when the approximate model is linear, while the accurate model need not be. This is the situation in the computed examples that will be discussed later.

Suppose that the approximate model $f$ is linear, $f(u) = \mA u$, and the noise and prior distributions are Gaussian,
\[
\pi_{\mathrm{noise}} = \mathcal{N}(0,\mGamma),\quad \pi_{\mathrm{prior}} = \mathcal{N}(m_0,\mC_0).
\]
Then the measure (\ref{eq:update_draw}) is a Gaussian mixture:
\[
\pi_{\mathrm{prior}}(u)\pi(b\mid u,m_\ell^j) = \frac{1}{N}\sum_{j=1}^N \mathcal{N}(u\mid p_\ell^j,\mC)
\]
where the means and covariance are given by
\begin{eqnarray*}
\mC &= (\mA^\mT\mGamma^{-1}\mA + \mC_0)^{-1},\\
p_\ell^j &= \mC(\mA^\mT \mGamma^{-1}(b-m_\ell^j) + \mC_0^{-1}m_0).
\end{eqnarray*}
The collection of samples $\mathscr{S}_\ell$ can then be evolved via the following algorithm.

{\bf Algorithm (Linear Approximate Model):}
\begin{enumerate}[\hspace{0.2cm}1.]
\item Set $\ell = 0$. Define the covariance operator $\mC = (\mA^\mT \mGamma^{-1}\mA + \mC_0)^{-1}$. Draw an initial ensemble of particles $\{u_\ell^j\}_{j=1}^N$ from the prior measure $\mu_0(\dee u) = \pi_{\mathrm{prior}}(u)\dee u$, and define the collection $\mathscr{S}_\ell = \{u_\ell^1,u_\ell^2,\ldots,u_\ell^N\}$. 

\item Define the means $p_\ell^j = \mC(\mA^\mT \mGamma^{-1}(b-M(u_\ell^j)) + \mC_0^{-1}m_0)$, $j=1,\ldots,N$.
\item For each $j=1,\ldots,N$
\begin{enumerate}[(i)]
\item Sample $k_j$ uniformly from the set $\{1,\ldots,N\}$
\item Sample $u_{\ell+1}^j \sim N(p_\ell^{k_j},\mC)$
\end{enumerate}

\item Set $\mathscr{S}_{\ell+1} = \{u_{\ell+1}^1,u_{\ell+1}^2,\ldots,u_{\ell+1}^N\}$.
\item Set $\ell \mapsto \ell+1$ and go to 2.
\end{enumerate}

\begin{remark}
For more general models, one could use a method such as rejection sampling in order to produce exact samples from the measure (\ref{eq:update_draw}). A suitable proposal distribution for this rejection sampling could be, for example, a Gaussian mixture with appropriately chosen means and covariances \cite{CSbook}.

Two natural candidates for non-Gaussian priors, that retain some of the simplicity of the Gaussian models without being as limited, are:
\begin{enumerate}
\item Hierarchical, conditionally Gaussian prior models,
\[
 \pi_{\rm prior}(u\mid\theta) \sim {\mathcal N}(\mu_\theta,\mC_\theta),
\]
where the mean and covariance depend on a hyperparameter vector $\theta$ that follows a hyperprior distribution,
\[
 \theta\sim\pi_{\rm hyper}.
\]
The hypermodels allow the introduction of sparsity promoting priors, similar to total variation; \cite{hypermodels,MEG}.
\item Gaussian mixtures, which allow a fast sampling from non-Gaussian distributions through a local approximation by Gaussian or other simple distributions \cite{west}.
\end{enumerate}
\end{remark}

\begin{remark}
We point out here that the approximation result of Proposition~\ref{prop:exact_sampling} does not apply to the Gaussian likelihood, as the noise density is not bounded from below.
\end{remark}

\subsection{Importance Sampling and Convergence}
\label{ssec:importance}

In this section we consider an approximate sampling based updating scheme of the probability densities using importance sampling.
This method effectively turns a collection of prior samples into samples from the posterior by weighting them appropriately, using the fact that the posterior is absolutely continuous with respect to the prior.

Assume that at stage $\ell$ of the approximation scheme, we have a collection of $N$ particles and the corresponding weights, 
${\mathscr S}_\ell = \{(u_\ell^j,w_\ell^j)\}_{j=1}^N$. The associated particle approximation $\mu_\ell^N$ of the probability distribution acting on a test function $\varphi$ is
\[
\mu_\ell^N(\varphi) = \sum_{j=1}^N w_\ell^j\varphi(u_\ell^j).
\]
We evolve this distribution by acting on it with $P$ and $L$. By the definition (\ref{def of P}) of $P$, we first get an approximation
\begin{eqnarray*}
P\mu_\ell^N(du) & = \left(\sum_{j=1}^N w_\ell^j\pi_{\mathrm{noise}}(b-f(u)-M(u_\ell^j))\right)\mu_{\rm prior}(\dee u)  \\
&=: g_\ell(u)\mu_{\rm prior}(\dee u).
\end{eqnarray*}
To generate an updated sample based on this approximation, we 
use independent sampling to draw $u_{\ell+1}^j \sim \mu_0$, $j=1,\ldots,N$, and define the particle approximation by
\[
\mu_{\ell+1}^N(\varphi) = \sum_{j=1}^N w_{\ell+1}^j\varphi(u_{\ell+1}^j),\;\;\;w_{\ell+1}^j = \frac{g_\ell(u_{\ell+1}^j)}{\sum_{j=1}^N g_\ell(u_{\ell+1}^j)}.
\] 
Denoting by $T^N:\mathsf{P}(X)\rightarrow\mathsf{P}(X)$ the importance sampling step, consisting of independent sampling and weighting, we may define an iterative algorithm symbolically as 
\begin{eqnarray}
\label{eq:iteration_importance}
\nonumber\mu_0^N &= T^N \mu_0,\\
\mu_{\ell+1}^N &= T^NLP\mu_\ell^N.
\end{eqnarray}

Explicitly, the algorithm can be described as follows.

{\bf Algorithm (Importance sampling)}
\begin{enumerate}[\hspace{0.2cm}1.]
\item Set $\ell = 0$. Draw an initial ensemble of particles $\{u_\ell^j\}_{j=1}^N$ from the prior measure $\mu_0(\dee u) = \pi_{\mathrm{prior}}(u)\dee u$, and initialize the weights $w_\ell^j = 1/N$ for each $j=1,\ldots,N$. Define the collection $\mathscr{S}_\ell = \{(u_\ell^1,w_\ell^1),(u_\ell^2,w_\ell^2),\ldots,(u_\ell^N,w_\ell^N)\}$.
\item Define 
\[
g_\ell(u) = \sum_{j=1}^N w_\ell^j\pi_{\mathrm{noise}}(b-f(u)-M(u_\ell^j)).
\]
\item Sample $u_{\ell+1}^j \sim \mu_0$, $j=1,\ldots,N$ i.i.d. and define the weights
\[
w_{\ell+1}^j = \frac{g_\ell(u_{\ell+1}^j)}{\sum_{j=1}^N g_\ell(u_{\ell+1}^j)},\quad j=1,\ldots,N.
\]
\item Set $\mathscr{S}_{\ell+1} = \{(u_{\ell+1}^1,w_{\ell+1}^1),(u_{\ell+1}^2,w_{\ell+1}^2),\ldots,(u_{\ell+1}^N,w_{\ell+1}^N)\}$.
\item Set $\ell \mapsto \ell+1$ and go to 2.
\end{enumerate}

As in the previous section, we establish a convergence result for $N\to\infty$ only under the restrictive condition of Assumption~\ref{ass:particle}.
We recall the following result from \cite{intrinsicdim}:
\begin{lemma}
Let $\mu \in \mathsf{P}(X)$ be absolutely continuous with respect to the prior measure $\mu_0$,
\[
\mu(\dee u) \propto g(u)\mu_0(\dee u),
\]
where $\mu_0(g^2) < \infty$. Define the quantity $\rho \geq 1$ by $\rho = \mu_0(g^2)/\mu_0(g)^2$. Then
\[
d(T^N\mu,\mu) \leq 2\sqrt{\frac{\rho}{N}}.
\]
\end{lemma}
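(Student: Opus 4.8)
The plan is to reduce the claim to two elementary Monte Carlo variance estimates, the only subtlety being the random normalising denominator produced by the self-normalised importance weights. First I would rescale the unnormalised density so that the Radon--Nikodym derivative of $\mu$ is normalised: setting $h = g/\mu_0(g)$, we have $\mu(\dee u) = h(u)\,\mu_0(\dee u)$ with $\mu_0(h) = 1$ and $\mu_0(h^2) = \rho$, and the importance weights defining $T^N$ are unchanged by this rescaling. Writing $u^1,\dots,u^N \sim \mu_0$ i.i.d.\ for the sample drawn by $T^N$, one then has $T^N\mu(\varphi) = A_N/D_N$, where $A_N = \frac1N\sum_{j=1}^N h(u^j)\varphi(u^j)$ and $D_N = \frac1N\sum_{j=1}^N h(u^j)$, and $\mathbb{E}A_N = \mu_0(h\varphi) = \mu(\varphi)$ while $\mathbb{E}D_N = \mu_0(h) = 1$.

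The key observation is that $T^N\mu$ is a probability measure, so $|T^N\mu(\varphi)| \leq \|\varphi\|_\infty \leq 1$, combined with the exact identity
\[
T^N\mu(\varphi) - \mu(\varphi) = \big(A_N - \mathbb{E}A_N\big) - \big(D_N - \mathbb{E}D_N\big)\,T^N\mu(\varphi),
\]
which follows at once from $D_N\,T^N\mu(\varphi) = A_N$. Taking $L^2(\Omega)$-norms and applying the triangle inequality together with $|T^N\mu(\varphi)| \leq 1$ gives, for every $\varphi$ with $\|\varphi\|_\infty \leq 1$,
\[
\big(\mathbb{E}\,|T^N\mu(\varphi) - \mu(\varphi)|^2\big)^{1/2} \leq \big(\mathbb{E}\,|A_N - \mathbb{E}A_N|^2\big)^{1/2} + \big(\mathbb{E}\,|D_N - \mathbb{E}D_N|^2\big)^{1/2}.
\]
Since $A_N$ and $D_N$ are empirical averages of i.i.d.\ terms, $\mathbb{E}|A_N - \mathbb{E}A_N|^2 \leq \frac1N\mu_0(h^2\varphi^2) \leq \frac1N\mu_0(h^2) = \rho/N$ using $\varphi^2 \leq 1$, and likewise $\mathbb{E}|D_N - \mathbb{E}D_N|^2 \leq \frac1N\mu_0(h^2) = \rho/N$. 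Hence the right-hand side above is at most $2\sqrt{\rho/N}$ uniformly in $\varphi$, and taking the supremum over $\|\varphi\|_\infty = 1$ yields $d(T^N\mu,\mu) \leq 2\sqrt{\rho/N}$, as claimed.

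The step I expect to be the only genuine obstacle is the control of the random denominator $D_N$, which could a priori be small and ruin a naive estimate of the ratio $A_N/D_N$: the identity above is precisely the device that circumvents this, since it re-expresses the contribution of $D_N$ as a product with the uniformly bounded quantity $T^N\mu(\varphi)$, so that no lower bound on $D_N$, no truncation argument, and no estimate of the probability that $D_N$ is atypically small are needed. The hypothesis $\mu_0(g^2) < \infty$, i.e.\ $\rho < \infty$, is used exactly here, to ensure the two variances are finite; everything else is the textbook computation of the variance of a self-normalised importance-sampling estimator.
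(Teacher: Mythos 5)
Your proof is correct: the rescaling to $h=g/\mu_0(g)$, the identity $T^N\mu(\varphi)-\mu(\varphi)=(A_N-\mathbb{E}A_N)-(D_N-\mathbb{E}D_N)\,T^N\mu(\varphi)$ exploiting $|T^N\mu(\varphi)|\leq\|\varphi\|_\infty$, and the two i.i.d.\ variance bounds $\leq\rho/N$ give exactly the claimed estimate uniformly in $\varphi$. The paper itself does not prove this lemma but imports it from the cited reference \cite{intrinsicdim}, and your argument is essentially the standard proof given there, so there is nothing further to reconcile; the only cosmetic caveat is that one implicitly needs the self-normalised weights to be well defined (e.g.\ $g>0$ $\mu_0$-a.s.), which holds in the paper's setting by Assumptions \ref{ass:particle}.
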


By Assumption \ref{ass:particle}, there exists $\kappa \in (0,1)$ such that $\kappa \leq g_\ell(\eps) \leq \kappa^{-1}$ for all $\eps \in Y$,  implying that
\[
\frac{\mu_0(g_\ell^2)}{\mu_0(g_\ell)^2} \leq \kappa^{-4}.
\]
In particular, by applying the above lemma to the measure $\hat{\mu}_{\ell+1} =LP\mu_{\ell}$, we see that
\[
d(T^N LP \mu_{\ell},LP{\mu}_{\ell}) \leq \frac{2\kappa^{-2}}{\sqrt{N}}.
\]
We are ready to prove the following proposition establishing the convergence of the particle approximations as $N\rightarrow\infty$:
\begin{proposition}
Let Assumptions \ref{ass:particle} hold for the noise distribution, and let $\{\mu_\ell\}_{\ell\geq 0}$ be the sequence of the model error approximations of the posterior, and  $\{\mu_\ell^N\}_{\ell\geq 0}$ a sequence of importance sampling approximations obtained as above. Then, for each $\ell$,
\[
d(\mu_\ell^N,\mu_\ell) \leq \frac{2\kappa^{-2}}{\sqrt{N}}\sum_{k=0}^{\ell-1} (2\kappa^{-3})^k + \frac{(2\kappa^{-3})^\ell}{\sqrt{N}}.
\]
In particular, $d(\mu_\ell^N,\mu_\ell)\rightarrow 0$ as $N\rightarrow\infty$.
\end{proposition}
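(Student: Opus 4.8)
The plan is to mimic the proof of Proposition~\ref{prop:exact_sampling}, replacing the sampling operator $S^N$ with the importance sampling operator $T^N$ and using the corresponding error bound. Specifically, I would set $e_\ell := d(\mu_\ell^N,\mu_\ell)$ and use the triangle inequality to split the one-step error into a sampling-error contribution and a propagated-error contribution:
\[
e_\ell \leq d\big(T^NLP\mu_{\ell-1}^N,\,LP\mu_{\ell-1}^N\big) + d\big(LP\mu_{\ell-1}^N,\,LP\mu_{\ell-1}\big).
\]

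For the first term, I would note that $LP\mu_{\ell-1}^N$ has a density with respect to $\mu_0$ of the form $g/\mu_0(g)$ where $g = g_{\ell-1}$ is bounded above and below by $\kappa^{-1}$ and $\kappa$ respectively (by Assumption~\ref{ass:particle}, since $g_{\ell-1}$ is a convex combination of values of $\pi_{\mathrm{noise}}$), so that $\rho = \mu_0(g^2)/\mu_0(g)^2 \leq \kappa^{-4}$; applying the recalled lemma from \cite{intrinsicdim} then gives $d(T^NLP\mu_{\ell-1}^N,LP\mu_{\ell-1}^N) \leq 2\kappa^{-2}/\sqrt{N}$, which is exactly the estimate already displayed before the proposition statement. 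For the second term, I would apply the two lemmas of Section~\ref{ssec:iteration_op}: the Lipschitz bound for $L$ gives a factor $2\kappa^{-2}$ and the Lipschitz bound for $P$ gives a factor $\kappa^{-1}$, so $d(LP\mu_{\ell-1}^N,LP\mu_{\ell-1}) \leq 2\kappa^{-3}e_{\ell-1}$. Combining, I obtain the linear recursion
\[
e_\ell \leq \frac{2\kappa^{-2}}{\sqrt{N}} + 2\kappa^{-3}e_{\ell-1},
\]
with initial value $e_0 = d(T^N\mu_0,\mu_0) \leq 2\kappa^{-2}/\sqrt N$ (again from the recalled lemma, now with $g\equiv 1$ so $\rho=1$, or more simply $2/\sqrt N \le 2\kappa^{-2}/\sqrt N$). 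Unrolling this recursion gives $e_\ell \leq \frac{2\kappa^{-2}}{\sqrt N}\sum_{k=0}^{\ell-1}(2\kappa^{-3})^k + \frac{(2\kappa^{-3})^\ell}{\sqrt N}$, where the last term tracks the initial error $e_0$ propagated through $\ell$ steps; stating it this way (rather than absorbing it) exactly matches the claimed bound since the claim also isolates an $e_0$-type term with the slightly sharper constant $1$ in place of $2\kappa^{-2}$. Taking $N\to\infty$ with $\ell$ fixed gives the convergence claim.

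The only real subtlety — and the step I expect to need the most care — is verifying that the hypothesis $\mu_0(g_\ell^2)<\infty$ of the recalled lemma holds and that the bound $\rho\le\kappa^{-4}$ is applied to the correct measure; one must be careful that $g_\ell$ in the importance-sampling iteration is the density of $P\mu_\ell^N$ relative to $\mu_{\rm prior}$, that $LP\mu_\ell^N$ then has density $g_\ell/\mu_0(g_\ell)$ relative to $\mu_0$, and that the two-sided bound on $\pi_{\mathrm{noise}}$ transfers to a two-sided bound on the convex combination $g_\ell$. Everything else is a routine transcription of the argument in Proposition~\ref{prop:exact_sampling}; I would present it in three short displayed-equation steps followed by the geometric-series closed form, noting at the end that the discrepancy between the $\frac{(2\kappa^{-3})^\ell}{\sqrt N}$ term here and a naive $e_0$-propagation term is accounted for by the sharper initial estimate $d(T^N\mu_0,\mu_0)\le 1/\sqrt N$ available when $g\equiv 1$.
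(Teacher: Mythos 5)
Your proposal is correct and follows the paper's proof essentially verbatim: the same triangle-inequality decomposition, the same $2\kappa^{-2}/\sqrt{N}$ importance-sampling bound via $\rho\le\kappa^{-4}$, the same $2\kappa^{-3}$ factor from the Lipschitz lemmas for $L$ and $P$, and the same recursion. The one point to tighten is the initial step: the sharper bound $e_0\le 1/\sqrt{N}$ holds because with $g\equiv 1$ the weights are uniform, so $T^N\mu_0 = S^N\mu_0$ and the sampling-operator bound applies, which is exactly how the paper closes the argument and yields the constant $1$ (rather than $2\kappa^{-2}$) in the $(2\kappa^{-3})^\ell/\sqrt{N}$ term.
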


\begin{proof}
From the triangle inequality for $d(\,\cdot\,,\,\cdot\,)$, we have
\begin{eqnarray*}
e_\ell := d(\mu_\ell^N,\mu_\ell) \leq d(T^NLP\mu_{\ell-1}^N,LP\mu_{\ell-1}^N) + d(LP\mu_{\ell-1}^N,LP\mu_{\ell-1}).
\end{eqnarray*}
The bounds derived above yield
\begin{eqnarray*}
e_\ell &\leq \frac{2\kappa^{-2}}{\sqrt{N}} + 2\kappa^{-2}d(P\mu_{\ell-1}^N,P\mu_{\ell-1})\\
&\leq \frac{2\kappa^{-2}}{\sqrt{N}} + 2\kappa^{-3} d(\mu_{\ell-1}^N,\mu_{\ell-1})\\
&= \frac{2\kappa^{-2}}{\sqrt{N}} + 2\kappa^{-3}e_{\ell-1}.
\end{eqnarray*}
Since we have $e_0 = d(T^N\mu_0,\mu_0) = d(S^N\mu_0,\mu_0) \leq 1/\sqrt{N}$, the result follows.
\end{proof}

\begin{remark}
In theory, the importance sampling method described above can be used with very weak assumptions on the forward maps and prior/noise distributions. However in practice it may be ineffective if the posterior is significantly far from the prior, such as when the size of the observational noise is small. To overcome this issue, one could instead consider Sequential Monte Carlo or Sequential Importance Sampling methods to evolve prior samples into posterior samples by introducing a sequence of intermediate measures \cite{BJMS15,maceachern1999sequential}.
\end{remark}

\section{Numerical Illustrations}
\label{sec:numerics}
In this section, we demonstrate the convergence properties established in the
preceding sections by means of computed examples. Furthermore, we demonstrate
the enhanced reconstructions obtained by modelling error as advocated
in this paper. The first example is a linear inverse source problem, elucidating the geometric convergence in the linear Gaussian case. The second example is the EIT problem with linearized approximate model with a coarse FEM mesh, allowing for straightforward particle updates. In the last example we consider the problem of recovering the permeability field in the steady state Darcy flow model, again with a linearized approximate model.

\subsection{Inverse Source Problem}
\label{ssec:num_inv}
As a proof of concept, we start by considering a simple one-dimensional inverse source problem.
Let $\Omega = (0,1)$ and define $X = L^2(\Omega)$. Given $u \in X$, let $p = P(u) \in H^1_0(\Omega)$  be the solution to the Laplace equation,
\[
\cases{\pushright{- p'' = u}&$x\in\Omega$\\
\pushright{p=0}&$x\in\partial\Omega$
}
\]
The inverse problem is to estimate the source $u$ from pointwise observations of $p$. Therefore, define the observation operator $\mathcal{O}:H^1_0(\Omega)\rightarrow\mathbb{R}^J$ by
\[
\mathcal{O}(u) = \big(u(q_1),\ldots,u(q_J)\big)
\]
for some set of points $\{q_1,\ldots,q_J\} \subseteq \Omega$. We define the exact forward operator $\bA^{\rm exact} = \mathcal{O}\circ P$. For numerical simulations, the exact forward model $\bA^{\rm exact}$ is approximated by a high fidelity proxy, $\bA$, obtained by approximating the solution $p$ through a finite difference solution on a fine mesh. The coarse mesh approximation of $\bA^{\rm exact}$, used in the inverse model, is denoted by $\mA = \mA_n$. In our computed example, we use $2^{10} - 1=1\,023$ equally spaced interior points for $\bA$, while the coarse mesh model $\mA_n$ is computed with of $2^n - 1$ equally spaced interior points, $n < 10$.

We let $q_j = j/16$, $j=1,\ldots,15=J$ be equally spaced observation points, and to generate the simulated data, we corrupt the high fidelity data with a small amount of white noise, $\eps \sim \mathcal{N}(0,\mGamma)$,  where we set $\mGamma = 10^{-8}\mI_J$. The prior is chosen to be a standard Brownian motion, specifically we take $\mu_0 = \mathcal{N}(0,\mC_0)$ with 
\begin{eqnarray*}
\mC_0 &= (-\Delta)^{-1},\quad \Delta = \frac{\dee^2}{\dee x^2},\\
\mathcal{D}(-\Delta) &= \bigg\{u \in H^2(\Omega)\,\bigg|\,u(0) = 0,\quad\frac{\dee u}{\dee x}(1) = 0\bigg\},
\end{eqnarray*}
and the true source used for data generation is drawn from the prior. Numerically the precision operator $\mC_0^{-1}$ is implemented as the finite difference Laplacian matrix. We perform $L = 30$ iterations in each simulation.

The posterior mean and covariance, $m_{\post}, \mC_{\post}$, corresponding to the high fidelity model, and the corresponding mean and covariance, $m_{\post}^n, \mC_{\post}^n$, based on the approximate model are given by
\begin{eqnarray*}
m_{\post} &= m_0 + \mC_0\bA^\mT(\mGamma + \bA\mC_0\bA^\mT)^{-1}(b-\bA m_0),\\
\mC_{\post} &= \mC_0 - \mC_0\bA^\mT(\mGamma + \bA\mC_0\bA^\mT)^{-1}\bA\mC_0,\\
m_{\post}^n &= m_0 + \mC_0\mA_n^\mT(\mGamma + \mA_n\mC_0\mA_n^\mT)^{-1}(b-\mA_n m_0),\\
\mC_{\post}^n &= \mC_0 - \mC_0\mA_n^\mT(\mGamma + \mA_n\mC_0\mA_n^\mT)^{-1}\mA_n\mC_0,
\end{eqnarray*}
respectively. The approximate posterior mean and covariances (\ref{eq:ck_standard}) obtained by the modeling error approach, after $\ell$ iterations, are denoted by $m_\ell$ and $\mC_\ell$, respectively.

Table \ref{tab:table_pde} shows the approximation errors arising from both approximations of $\mC_\post$ and $m_\post$ with different discretization levels. The table shows that the modeling error approach produces a better approximation of the posterior mean than the model ignoring the modeling error, while the approximate covariances are 
slightly less accurate as approximations of the posterior covariance than those
found without allowing for the modeling error correction. These experiments
confirm our assertion at the start of the paper, namely that allowing for
model error can result in improved point estimates (here the posterior mean)
but that the iteration introduced does not converge to the true posterior
distribution (as evidenced by the error in the covariance at fixed $n$ and large $L$.)

To demonstrate the convergence rate, Figure \ref{fig:errors_pde} shows the 
mean and covariance errors for various approximation levels as functions of the number of iterations. The plots, as well as the tabulated values, Table \ref{tab:table_pde_slope}, of the logarithmic slopes of the approximation errors verify the geometric convergence rates, with their dependence on the approximation level. Observe that the logarithm of the convergence rate for the covariance, a quadratic quantity, is twice that of the mean.

\begin{table}
\begin{center}
\caption{The approximation errors of the approximate posterior means and covariances for various approximation levels with and without the inclusion of the modeling error correction. The matrix norms are the Frobenius norms.}
\label{tab:table_pde}
\begin{tabular}{l|c|c|c|c}
$n$ & $\|m_L - m_\post\|$ & $\|m_\post^n - m_\post\|$ & $\|\mC_L - \mC_\post\|$ & $\|\mC_\post^n - \mC_\post\|$\\
 \hline
4 & 0.1906 & 0.2986 & 0.0676 & 0.0381\\
5 & 0.0455 & 0.0739 & 0.0170 & 0.0095\\
6 & 0.0111 & 0.0182 & 0.0043 & 0.0024 \\
7 & 0.0028 & 0.0045 & 0.0011 & 0.0006 \\
8 & 0.0007 & 0.0011 & 0.00026 & 0.00015 \\
9 & 0.0002 & 0.0003 & 0.00006 & 0.00004 \\
\end{tabular}
\end{center}
\end{table}

\begin{table}
\begin{center}
\caption{The convergence rates quantified in terms of the slope of the logarithmic plot of mean and covariance deviation from the limit values $m_L$ and $\mC_L$ before the deviations plateau, indicating that the algorithm has converged.}
\label{tab:table_pde_slope}
\begin{tabular}{l|c|c|c}
$n$ & Slope of $\log\|m_\ell - m_L\|$ & Slope of $\log\|\mC_\ell - \mC_L\|$ & $\|\mA_n-\mA_\star\|$\\
 \hline
4 & -1.80 & -3.60 & 0.101\\
5 & -3.13 & -6.26 & 0.0706\\
6 & -4.57 & -9.00 & 0.0491\\
7 & -6.30 & -11.9 & 0.0335\\
8 & -7.73 & -14.8 & 0.0219\\
9 & -9.34 & -14.9 & 0.0127
\end{tabular}
\end{center}
\end{table}

\begin{figure}
\centerline{
\includegraphics[width=5.2cm]{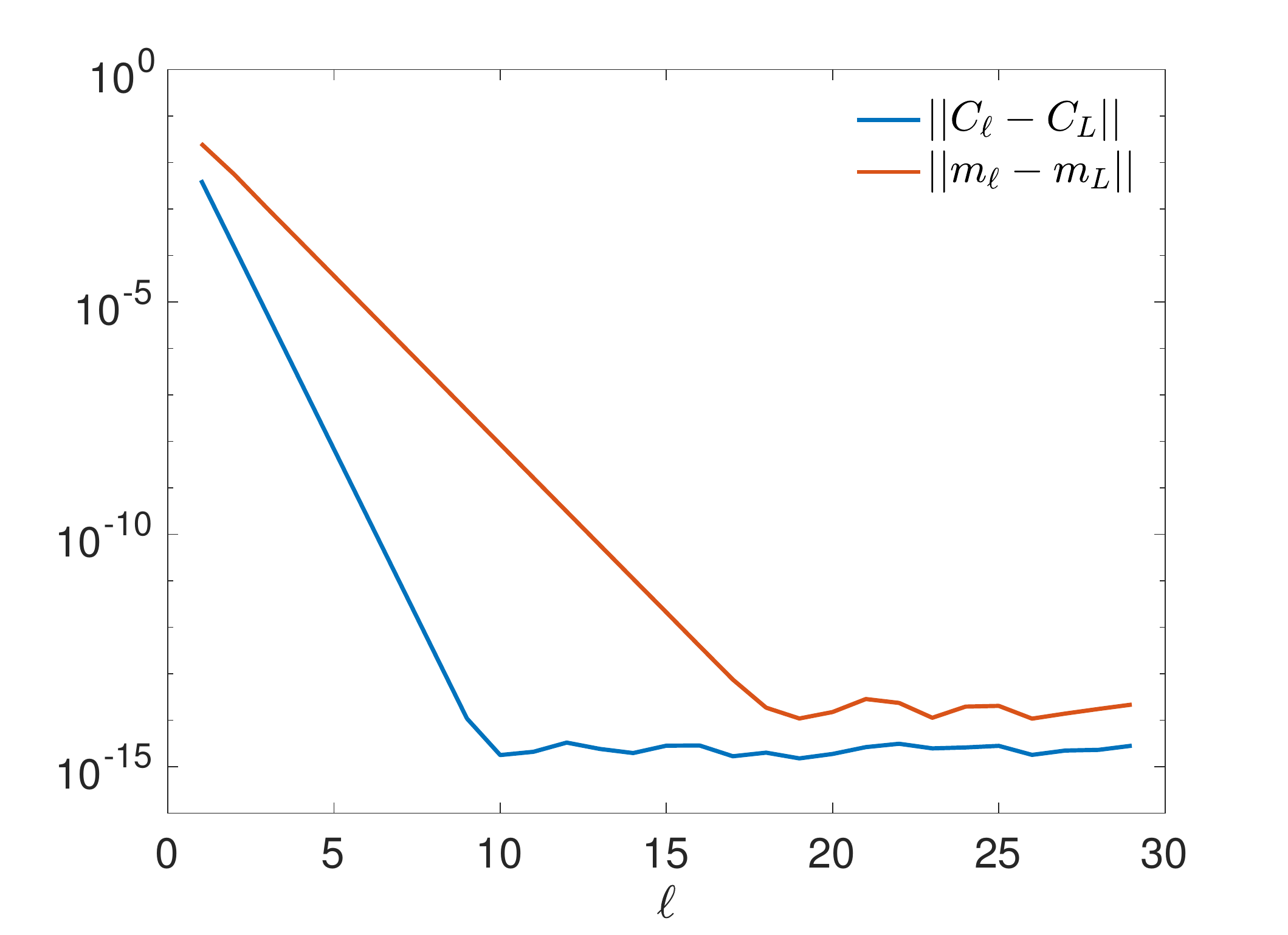}
\includegraphics[width=5.2cm]{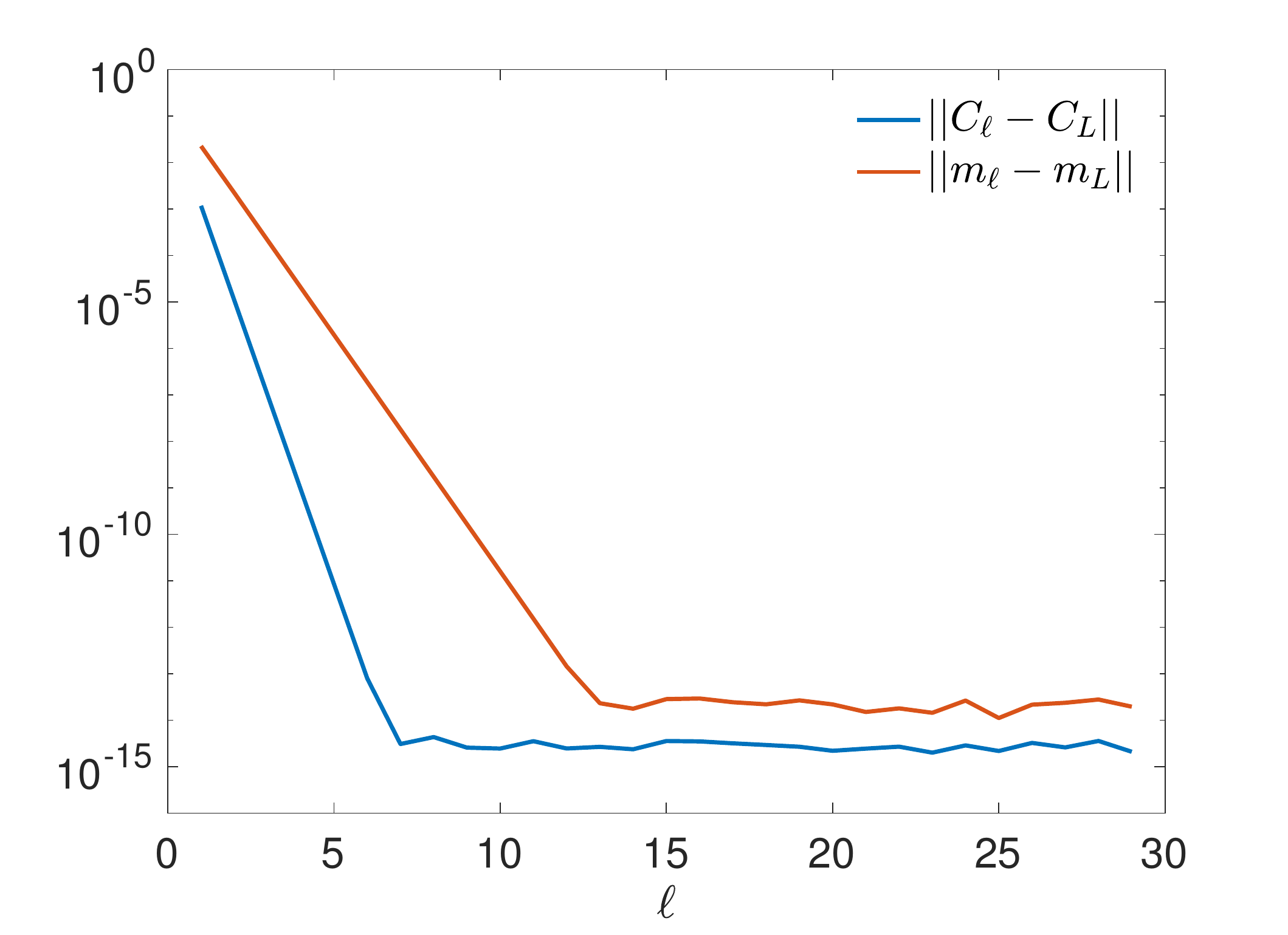}
\includegraphics[width=5.2cm]{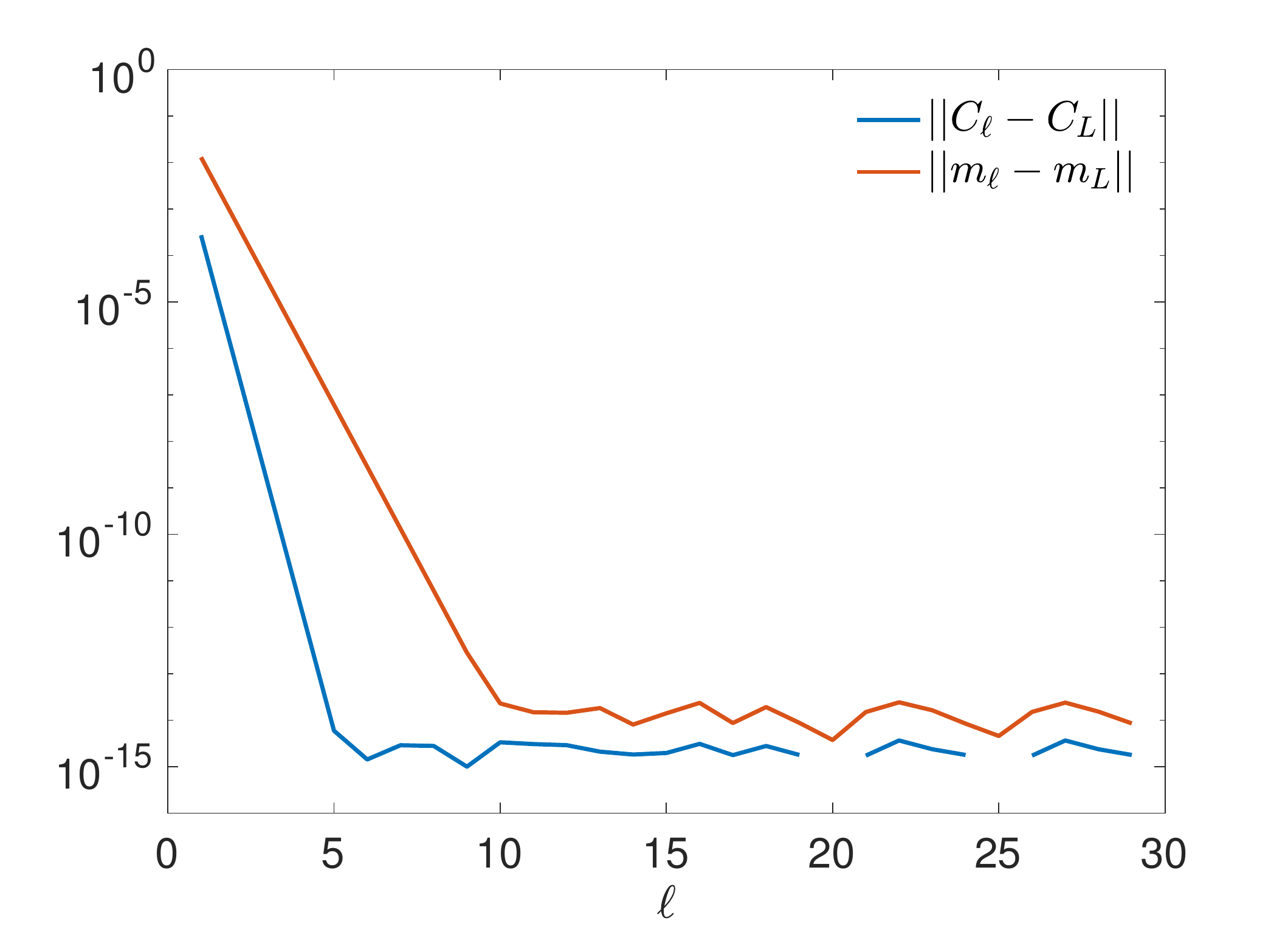}
}
\centerline{
\includegraphics[width=5.2cm]{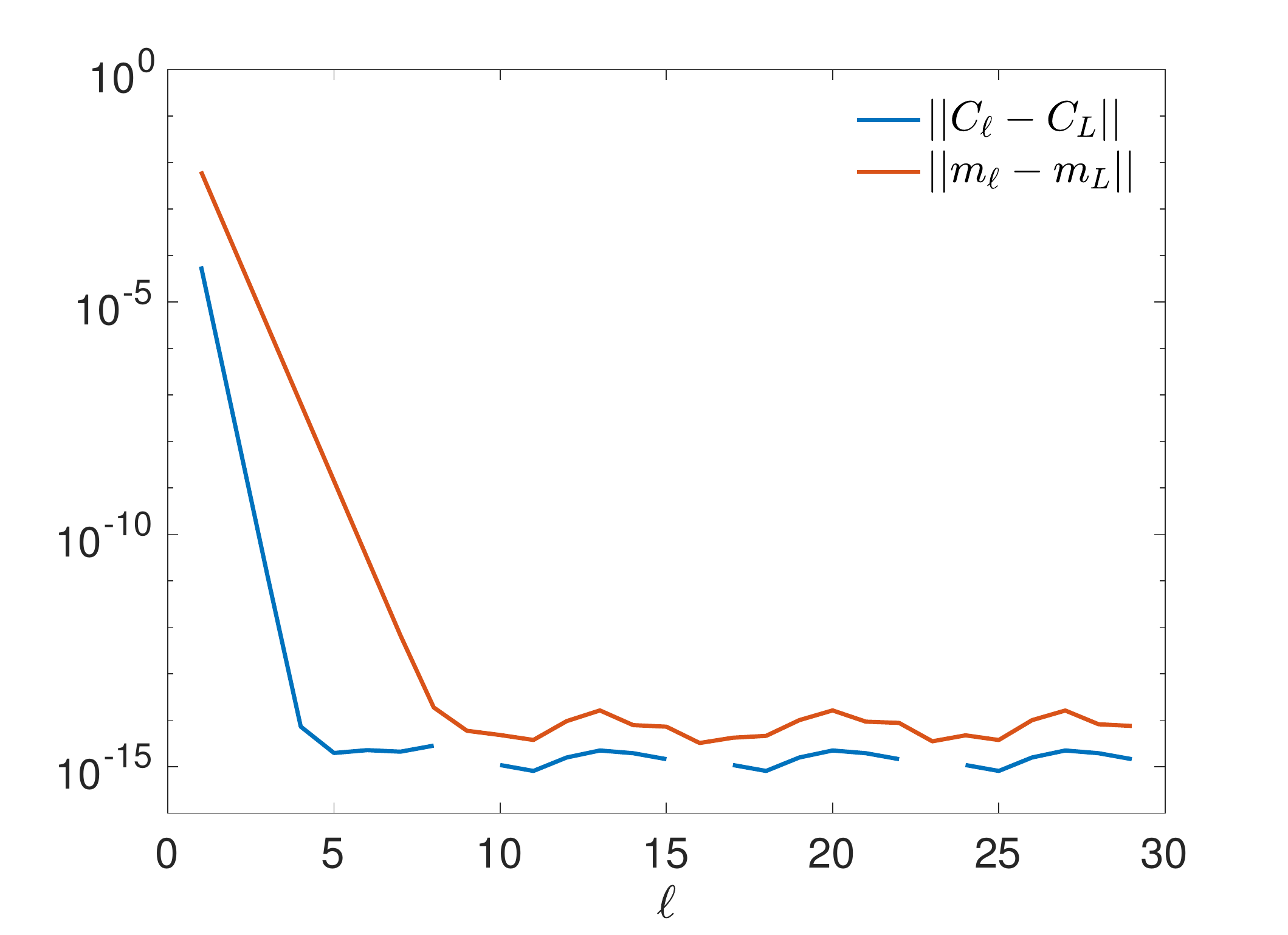}
\includegraphics[width=5.2cm]{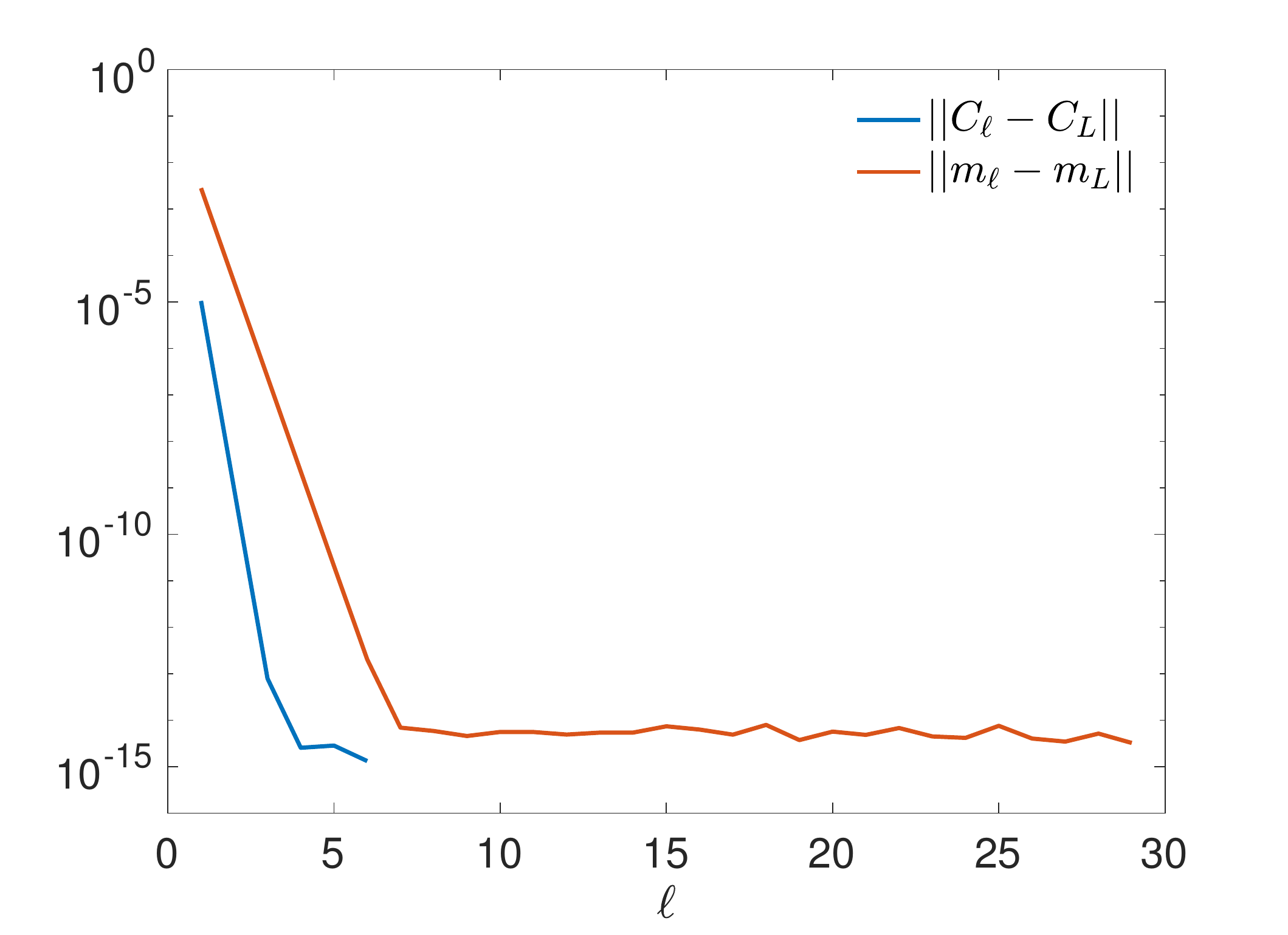}
\includegraphics[width=5.2cm]{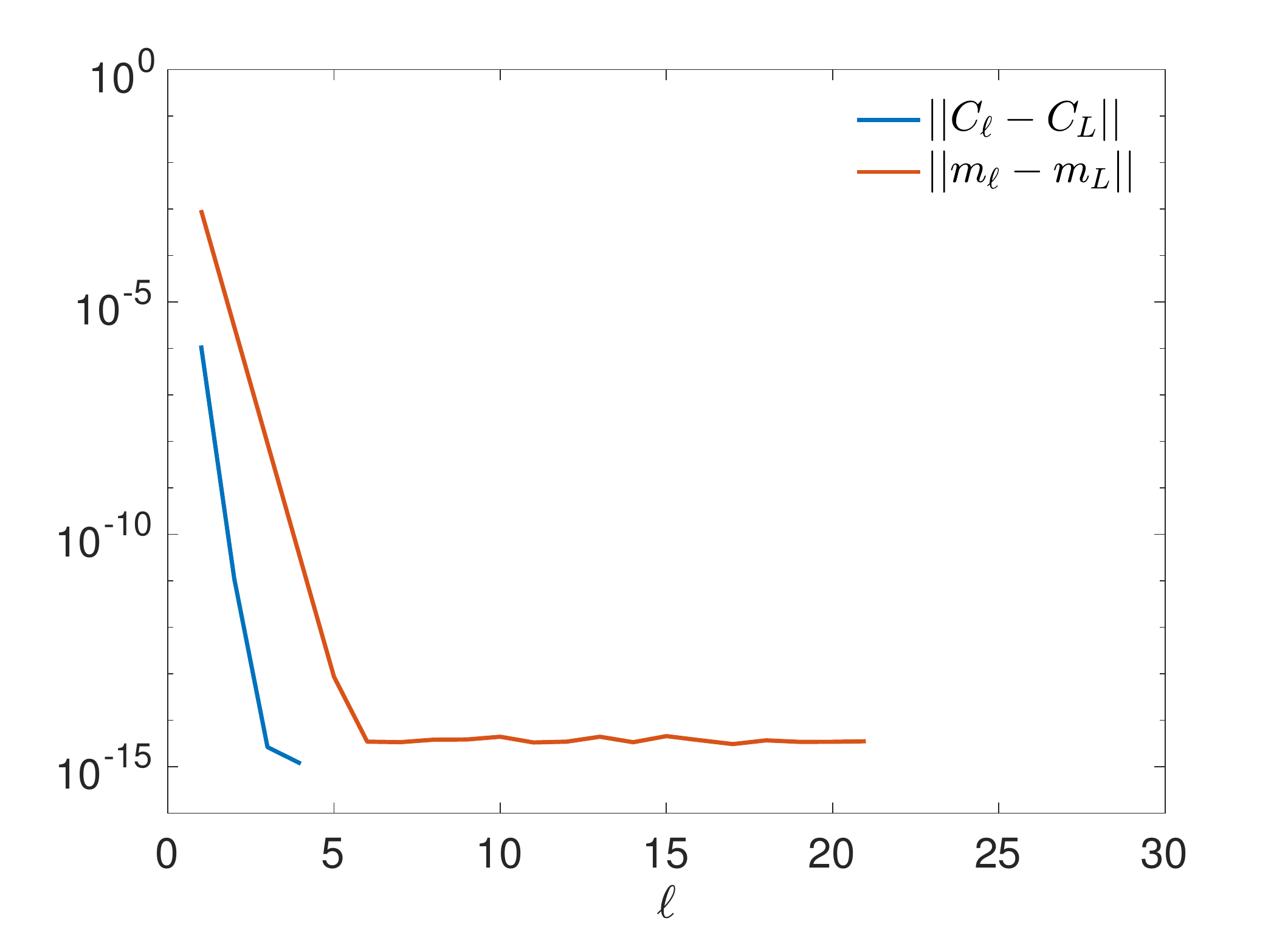}
}
\caption{The trace of the errors $\|m_\ell - m_L\|$ and $\|\mC_\ell - \mC_L\|$, illustrating their convergence rates. From left to right, top to bottom, the approximation level $n$ is increased from 4 to 9.}
\label{fig:errors_pde}
\end{figure}

\subsection{Electrical Impedance Tomography (EIT)}
\label{ssec:eit}

In this section, we revisit the modeling error due to coarse discretization of a PDE model in the context of Electrical Impedance Tomography (EIT).  Let $\Omega\subset \R^d$, $d=2,3$, denote a bounded connected set with boundary $\partial\Omega$, and let $\sigma:\Omega\to \R$ be a function modeling the electric conductivity in $\Omega$, $0<\sigma_m\leq\sigma\leq\sigma_M<\infty$. We assume that $S$ electrodes are attached to the boundary $\partial\Omega$, and we model them as open disjoint surface patches $e_s\subset\partial\Omega$, $1\leq s\leq S$.
Assuming that an electric current $I_s$ is injected through $e_s$ into the body modeled by $\Omega$, the electric voltage potential $v$ in $\Omega$, and the electrode voltages $V_s$ on the electrodes can be found as a solution of the Complete Electrode Model (CEM) boundary value problem \cite{SIC},
\begin{eqnarray*}
\cases{\pushright{\nabla\cdot\big(\sigma\nabla v\big) = 0} & $x \in \Omega$\\
\pushright{\sigma\frac{\partial v}{\partial n} = 0} & $x \in \partial\Omega\setminus\bigcup_{s=1}^S e_s$\\
\pushright{v+z_s \sigma\frac{\partial v}{\partial n} = V_s} & $x \in e_s$, $1\leq s\leq S$\\
\pushright{\int_{e_s} \sigma\frac{\partial v}{\partial n} dS = I_s} & $1\leq s\leq S$.}
\end{eqnarray*}
Here, the parameters $z_s>0$ are the presumably known contact impedances, and the currents satisfy the Kirchhoff's law, or conservation of charge condition,
\[
 \sum_{s=1}^S I_s \in\R^S_0 =\bigg\{V\in\R^L\,\bigg|\, \sum_{s=1}^S V_s = 0\bigg\}.
\] 
The solution of the boundary value problem is the unique solution $(v,V)\in H^1(\Omega)\times \R^S_0$ of the weak form variational problem
\[
 {\mathscr B}((w,W),(v,V)) = \sum_{s=1}^S I_s W_s = \langle (w,W),b_I\rangle,
 \quad\mbox{for all $(w,W)\in H^1(\Omega)\times\R^S_0$,}
\]
where $b_I = (0,I)\in H^1(\Omega)\times\R^S_0$, and
\[
 {\mathscr B}((w,W),(v,V)) = \int_\Omega \sigma\nabla w \cdot \nabla v dx + \sum_{s=1}^S \frac 1{z_s}\int_{e_s}(w - W_s)(v-V_s) dS.
\] 
To discretize the problem, assume that $\Omega$ is approximated by the union of triangular or tetrahedral elements, the mesh containing $n_{\rm f}$ nodes (`f' for fine), and let $\{\psi_j\}_{j=1}^{n_{\rm f}}$ denote a nodal-based piecewise polynomial Lagrange basis. Further, let $\{\phi_s\}_{s=1}^{S-1}$ denote a basis of $\R^S_0$. We define the basis functions $\overline \psi_j\in H^1(\Omega)\times \R^S_0$ as
\[
 \overline \psi_j = (\psi_j,0),\quad 1\leq j\leq n_{\rm f},\quad \overline\psi_{n_{\rm f}+s} = (0,\phi_s),\quad 1\leq s\leq S-1.
\]
We approximate the potential-voltage pair $(v,V)$ as
\[
 (v,V) = \sum_{j=1}^{n_{\rm f}+S-1} \alpha_j\overline \psi_j,
\]
and discretize the forward problem by choosing $(w,W) = \overline\psi_k$, to arrive at the Galerkin approximation,
\begin{equation}\label{eq for alpha}
 \sum_{j=1}^{n_{\rm f}+S-1} {\mathscr B}(\overline\psi_k,\overline \psi_j) \alpha_j =  \langle \overline\psi_k, b_I\rangle, \quad 1\leq k\leq n_{\rm f}+S-1.
 \end{equation}
Further, to parametrize the conductivity, we define a discretization of $\Omega$ by triangular or tetrahedral elements, independent of the discretization above, with $K$ nodes, and denote by $\{\eta_j\}_{j=1}^K$ the nodal-based piecewise polynomial Lagrange basis functions. We then parametrize the conductivity by writing
\[
 \sigma(x) = \sigma_0{\rm exp}\left(\sum_{j=1}^K u_j \eta_j(x)\right),\quad x\in\Omega,
\]  
where $\sigma_0>0$ is a fixed background conductivity.  The matrix $[{\mathscr B}(\overline\psi_k,\overline \psi_j)]$ defining the system (\ref{eq for alpha}) is parametrized by the vector $u\in\R^K$, and we write the equation in matrix form concisely as
\[
 \mA^{n_{\rm f}}_u \alpha = b(I),
\]
where we have indicated explicitly the dependency on the discretization by the number $n_{\rm f}$ of nodes. Solving this system for $\alpha$, extracting the last $S-1$ components $\alpha_{n_{\rm f} + s}$, $1\leq s\leq S-1$, and representing the voltage in terms of the basis functions $\phi_s$ defines the forward map
\[
  u \mapsto  V = \sum_{s=1}^{S-1}\alpha_{n_{\rm f}+s}\phi_s = \mR_u^{n_{\rm f}} I, \quad\mbox{where  $\alpha =(\mA_u^{n_{\rm f}})^{-1}b(I)$,}
\]
where $\mR^{n_{\rm f}}_u \in\R^{S\times S}$ is the resistance matrix.  We repeat the calculation for a full frame of $S-1$ linearly independent current patterns, $I^1,\ldots, I^{S-1}\in\R^S_0$, obtaining the full frame of voltage patterns $V^1,\ldots,V^{S-1}$. Finally, the voltage patterns are stacked together in a vector, constituting the forward model for the observation,
\[
 {\mathsf V} = \left[\begin{array}{c} V^1 \\ \vdots \\ V^{S-1}\end{array}\right] = F^{n_{\rm f}}(u), \quad F^{n_{\rm f}}:\R^K\to\R^{S(S-1)}.
\]
To guarantee satisfactory accuracy of the forward model, the discretization needs to be fine enough, in particular to capture the singularities of the voltage potential $v$ at the electrode edges. To demonstrate the modeling error effect, we construct a forward map defined over a coarser FEM mesh with $n_{\rm c}$ nodes ('c' for coarse), $n_{\rm c}<n_{\rm f}$, and denote the corresponding forward map by 
\[
 {\mathsf V} = F^{n_{\rm c}}(u), \quad F^{n_{\rm c}}:\R^K\to\R^{S(S-1)}, \quad n_{\rm c}<n_{\rm f}.
\]
Observe that the discretization of $u$ is independent of the FEM mesh, and is not changed when passing to a coarser computational mesh. In our computed examples, we use a piecewise linear Lagrange basis to represent both $u$ and $v$ over the different meshes.  The three meshes that we base our simulations, generated with the mesh generator described in \cite{perssonStrang}, on are shown in Figure~\ref{fig:meshes}. The number of electrodes is $S=16$.

\begin{figure}[h!]
\centerline{
\includegraphics[width=4cm]{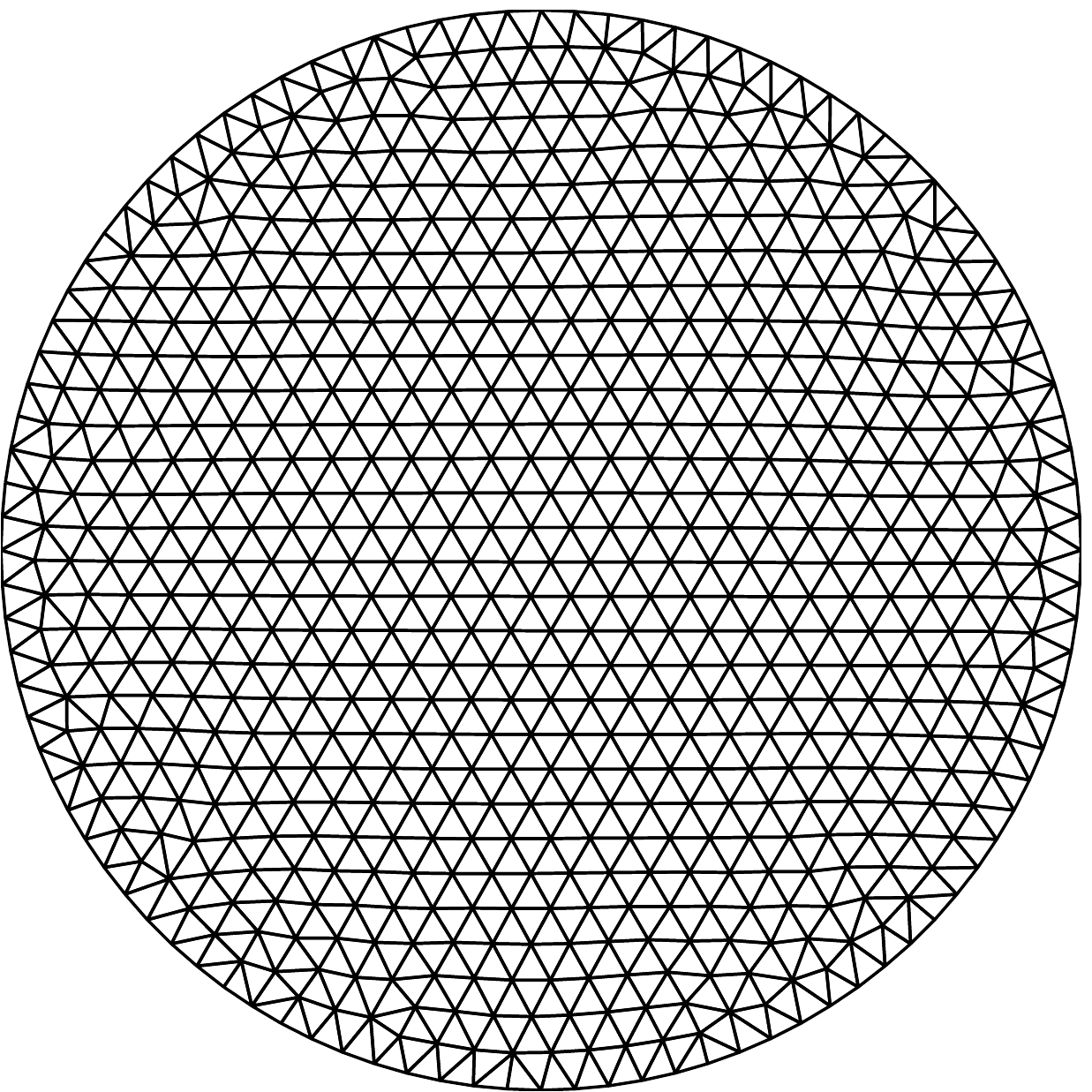}
\includegraphics[width=4cm]{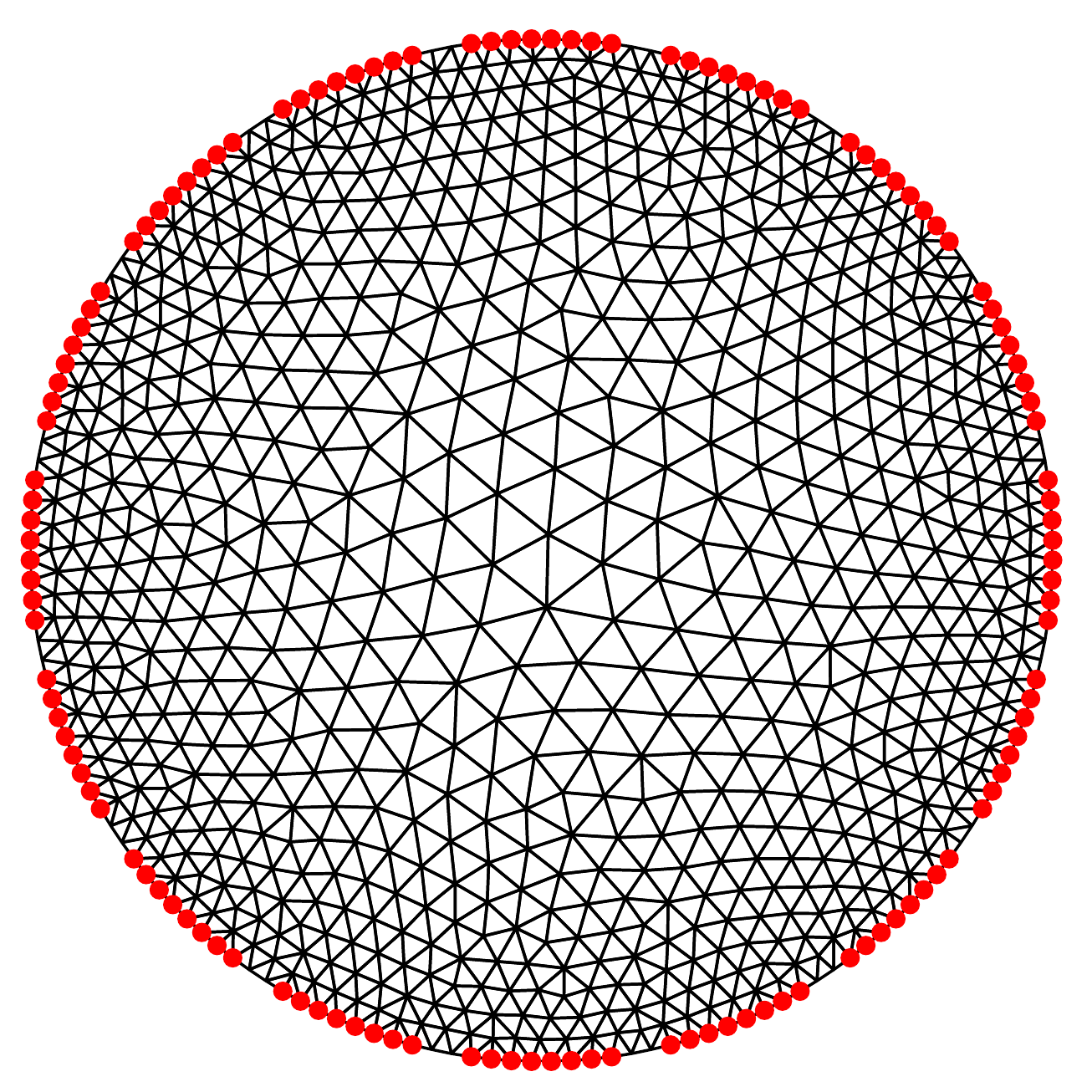}
\includegraphics[width=4cm]{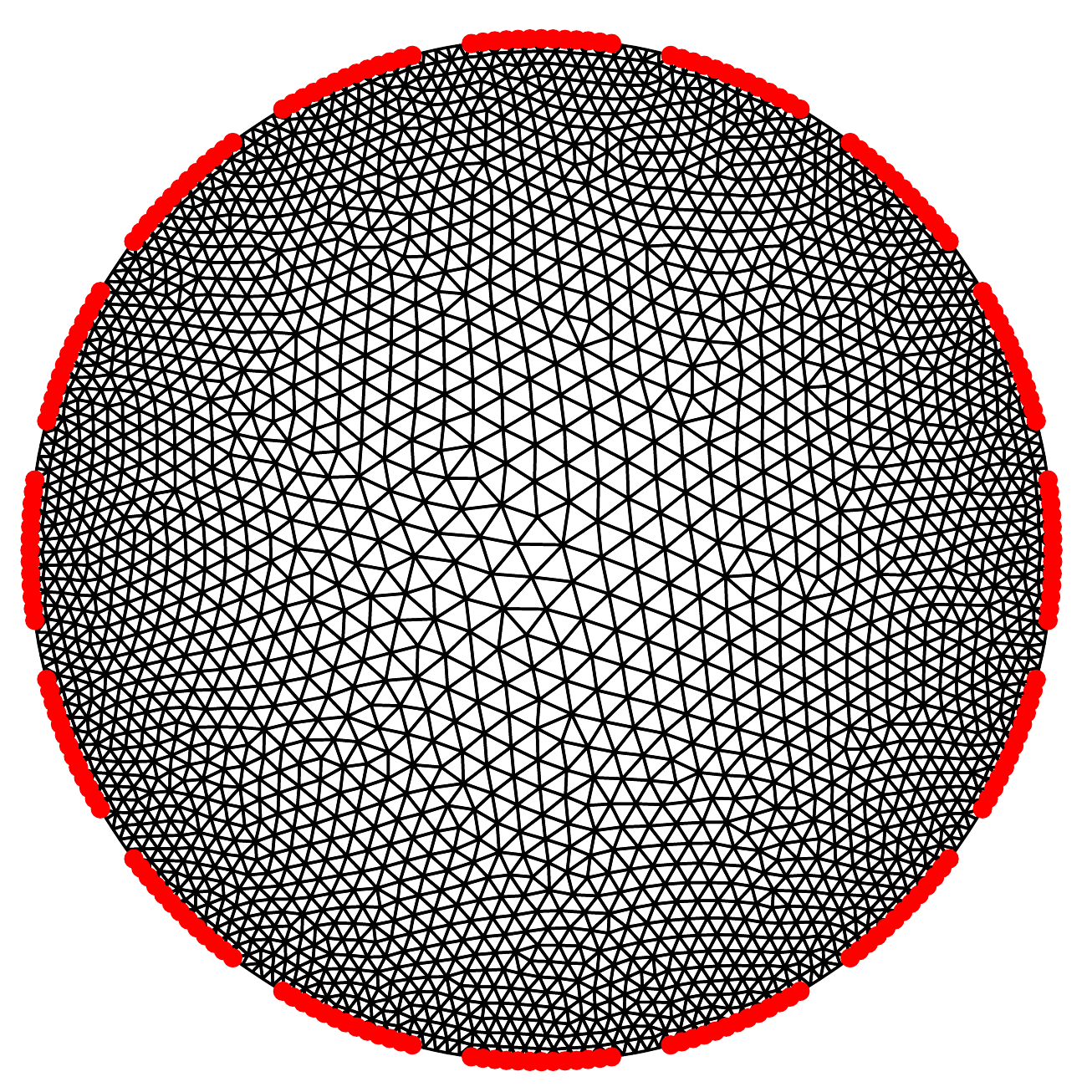}
}
\caption{\label{fig:meshes} Triangular meshes used in the numerical simulations. The number of electrodes is $L=16$, and they are indicated by red nodal points in the plot. The mesh for representing the conductivity distribution (left) has $K=733$ vertices and 1\,364 elements. The coarse mesh for the forward solver (middle) has $n_{\rm c}=877$ vertices and 1\,592 elements, and the fine scale mesh (right) consist of $n_{\rm f} =2\,418$ vertices and 4\,562 elements.
}
\end{figure}

We assign the Whittle-Mat\'{e}rn prior \cite{roininen2011, roininen2014} for the  vector $u$ defining the conductivity so that
\begin{eqnarray}
\label{eq:wmprior}
  \zeta \lambda^{-1} \left(-\lambda^2 \mL_g + \mI_K \right) u \sim {\mathcal N}(0, \mI_K),
\end{eqnarray}
where $\mL_g\in \R^{K \times  K}$ is the graph Laplacian defined on the conductivity mesh, $\lambda>0$ is a correlation length parameter, $\zeta >0$ is amplitude scaling, and $\mI_K$ is the identity matrix.  In Figure~\ref{fig:prior draws}, three independently drawn realizations of the conductivity distributions are shown. The values of the model parameters are indicated in the figure caption.

\begin{figure}[h!]
\centerline{
\includegraphics[width=4cm]{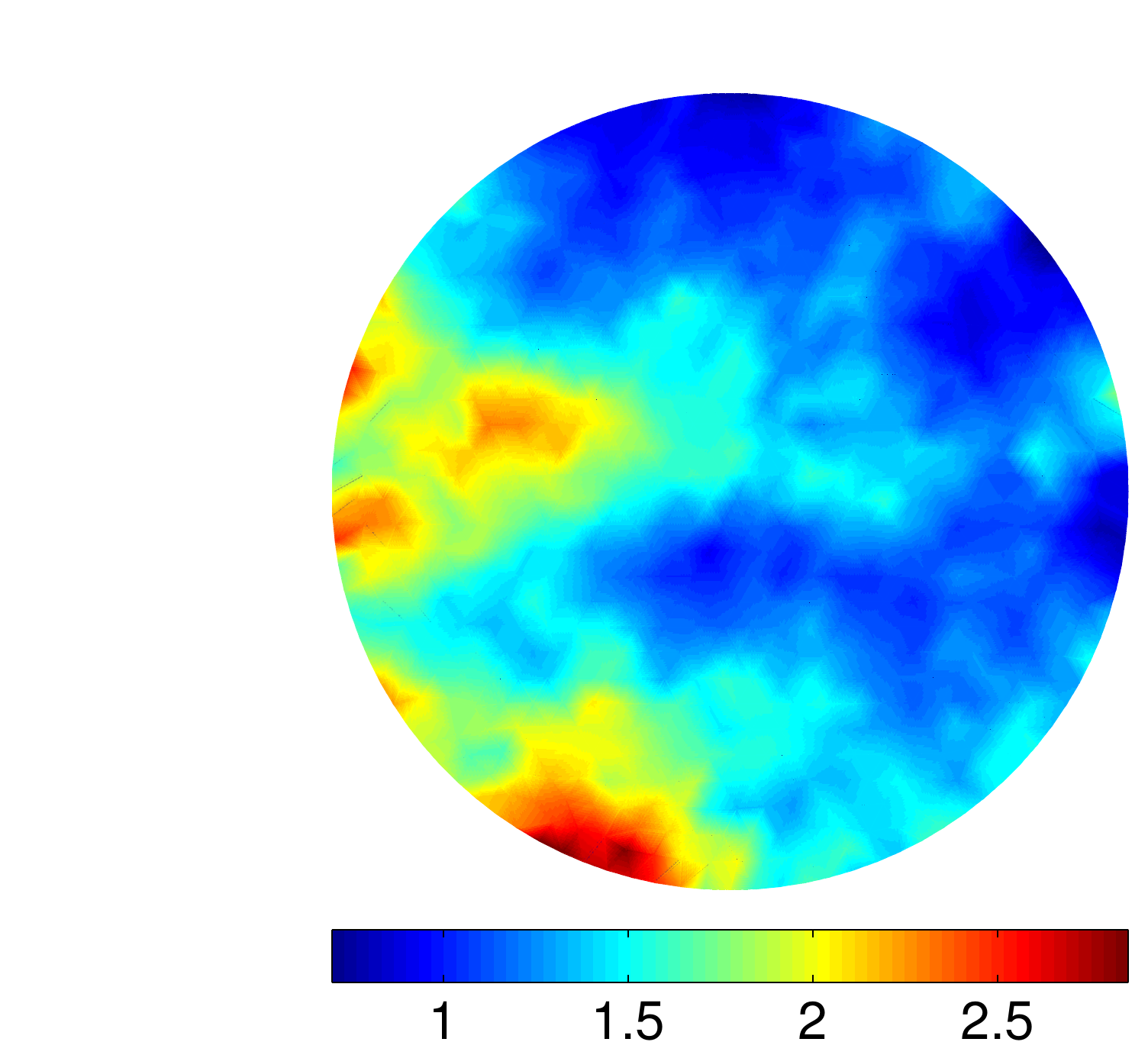}
\includegraphics[width=4cm]{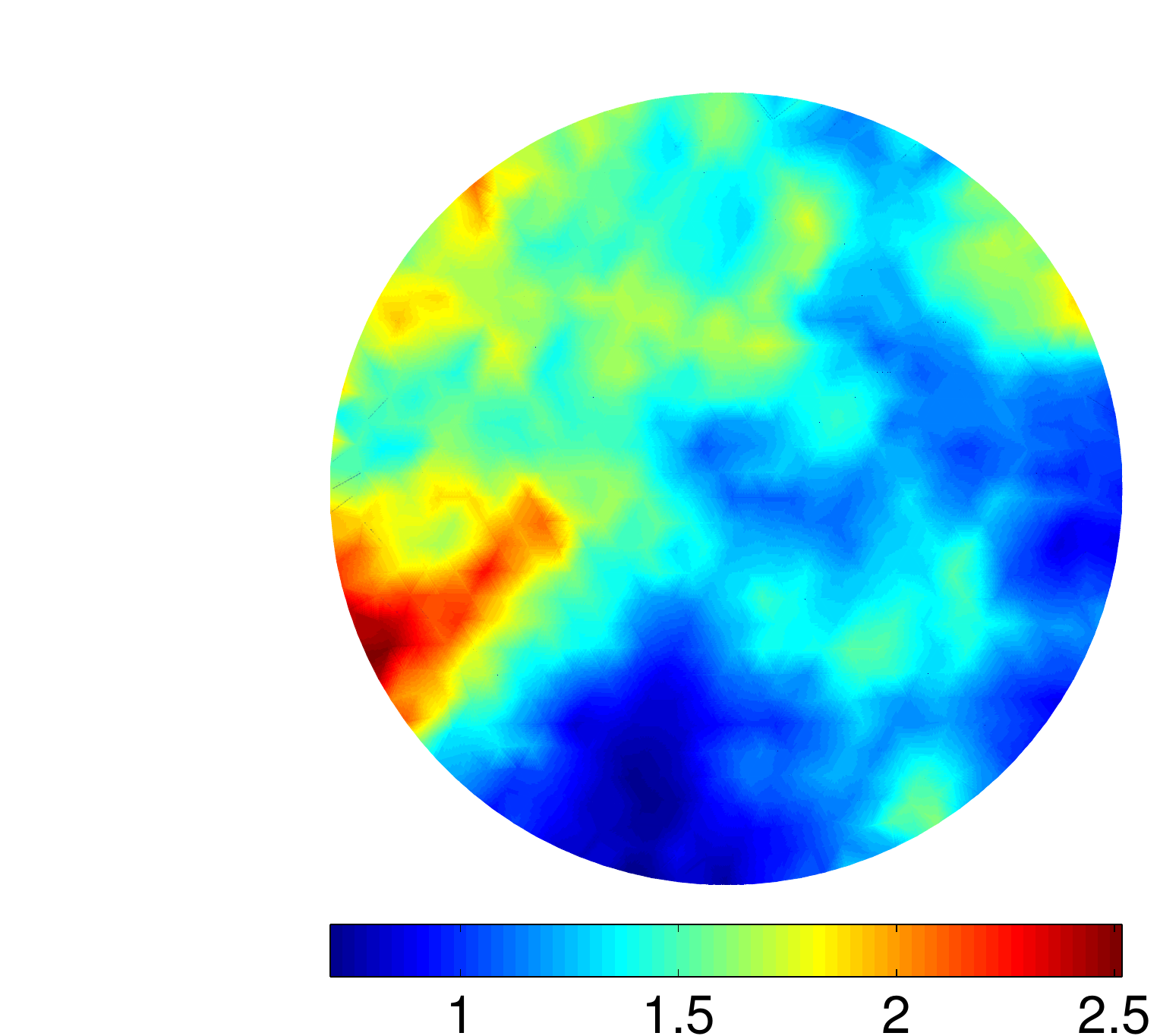}
\includegraphics[width=4cm]{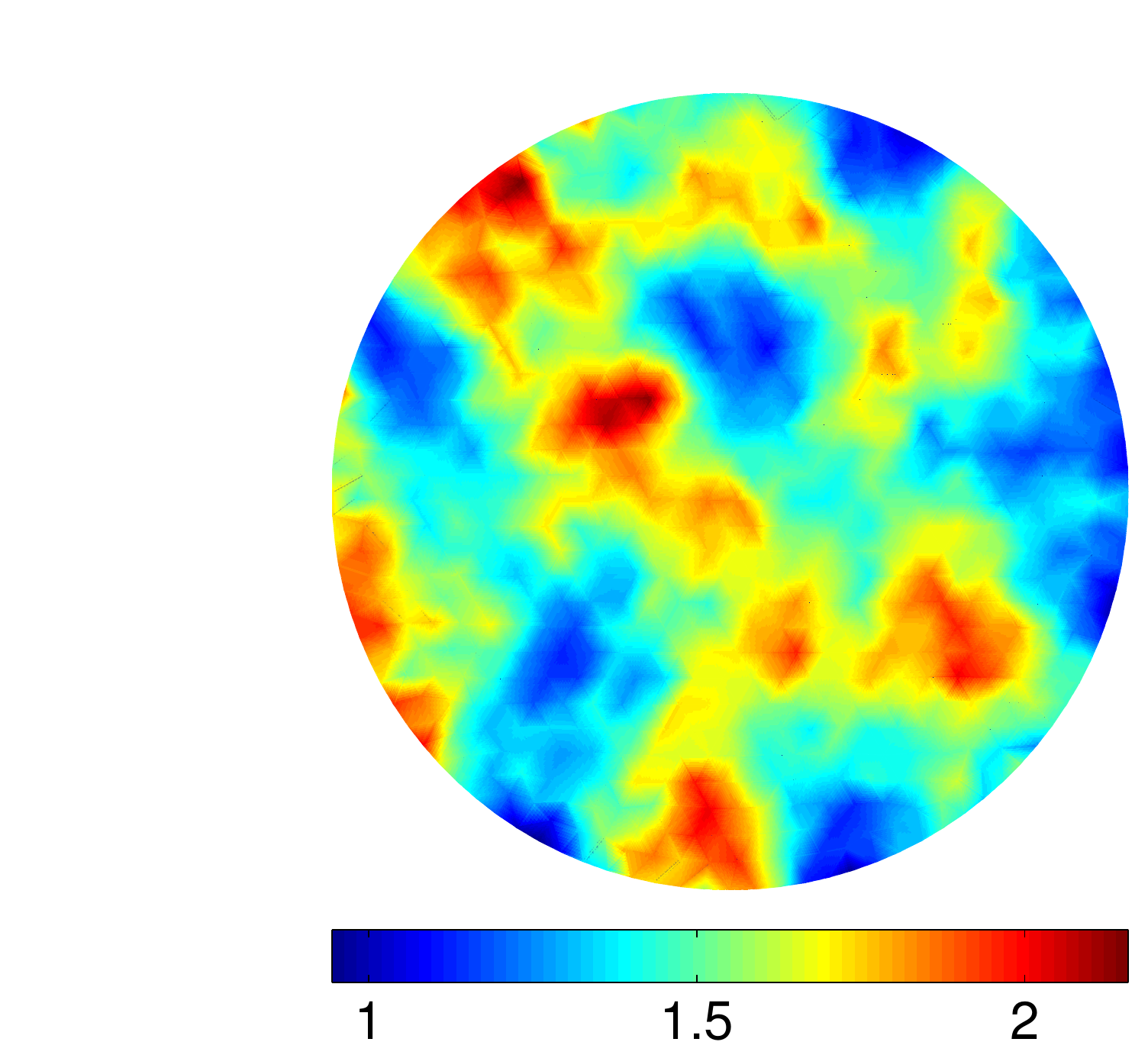}
}
\caption{\label{fig:prior draws} Conductivities $\sigma = \sigma_0{\rm exp} (u)$ corresponding to three independent draws of $u$ from the prior density. The parameter values used here are $\lambda = 0.2$ and $\zeta = 1/15$.  
The background conductivity is $\sigma_0 = 1.5$. The radius of the disc is unity, and the units are arbitrary.
}
\end{figure}

We generate the data using the fine scale model $F = F^{n_{\rm f}}$, and using the Conventional Error Model, i.e., ignoring the modeling error, compute a MAP estimate $u_{\rm MAP}$ using  the forward map $f = F^{n_{\rm c}}$ in the inverse solver. The estimate is based on a simple Gauss-Newton iteration. The additive noise covariance in this simulation is $\mGamma = \gamma^2\mI_{S(S-1)}$ with $\gamma= 10^{-3} V_{\rm max}$, where $V_{\rm max}$ is the maximum of all noiseless electrode voltages over the full frame of $S-1$ voltage patterns. The noise level is assumed to be low so that the modeling error is the predominant part in the uncertainty. In Figure~\ref{fig:true etc}, we show the conductivity distribution that was used to generate the synthetic data with the model $F$, the Conventional Error Model MAP estimate based on the coarse mesh model $f$, as well as the Enhanced Error Model estimate. In the latter, the modeling error mean and covariance are estimated from a sample of 1\,500 random draws from the prior of $u$.

\begin{figure}[h!]
\centerline{
\includegraphics[width=4cm]{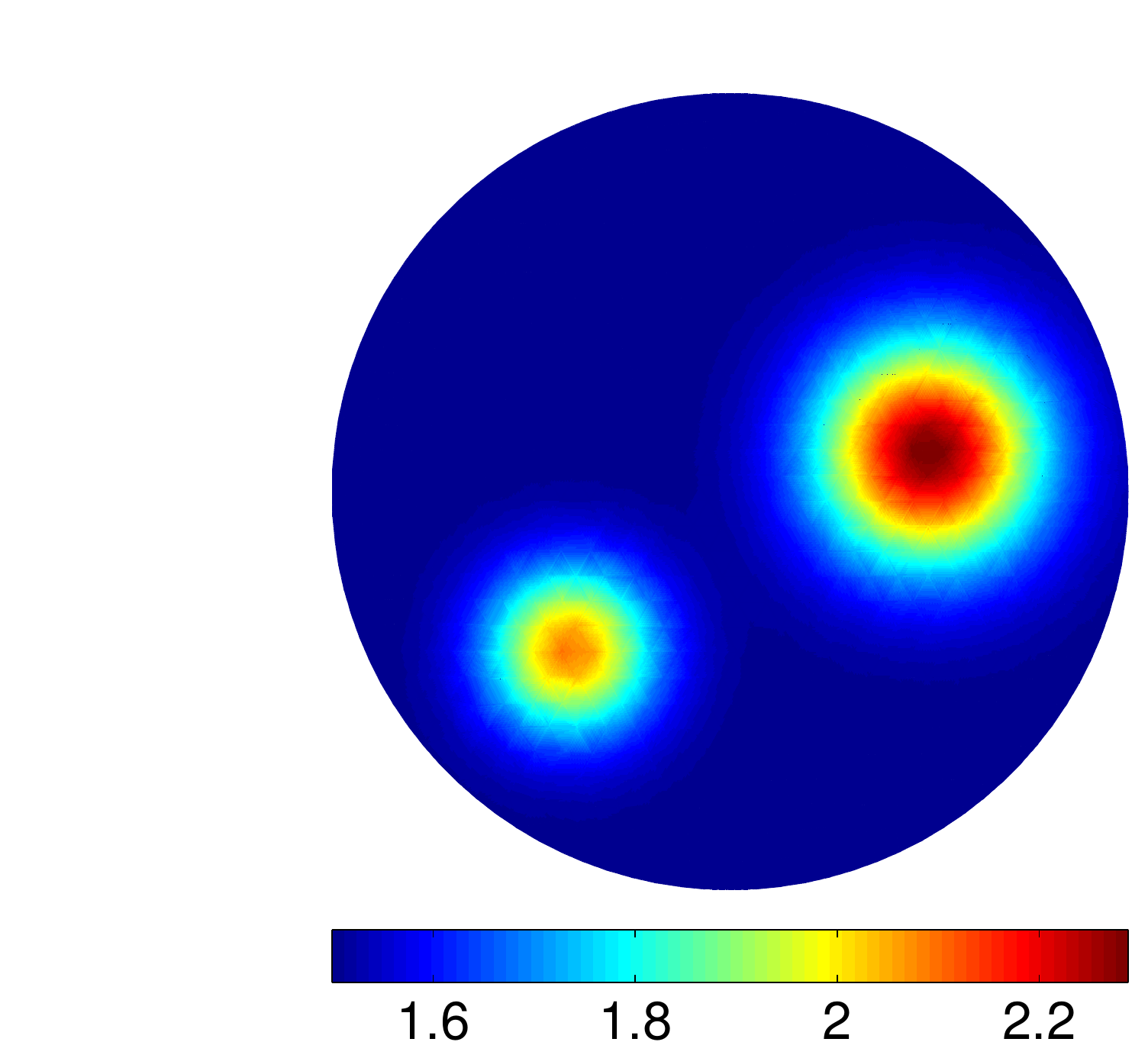}
\includegraphics[width=4cm]{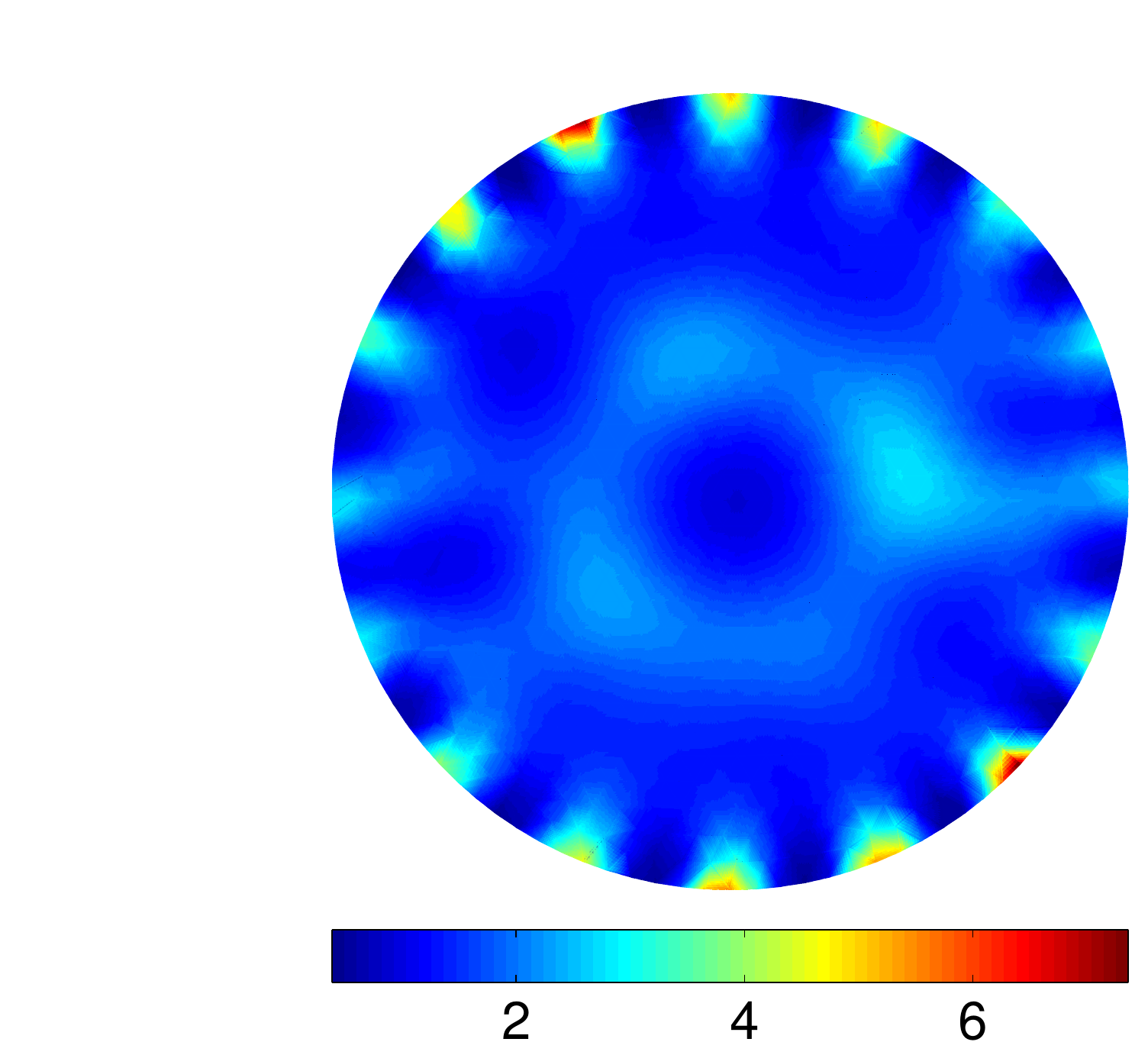}
\includegraphics[width=4cm]{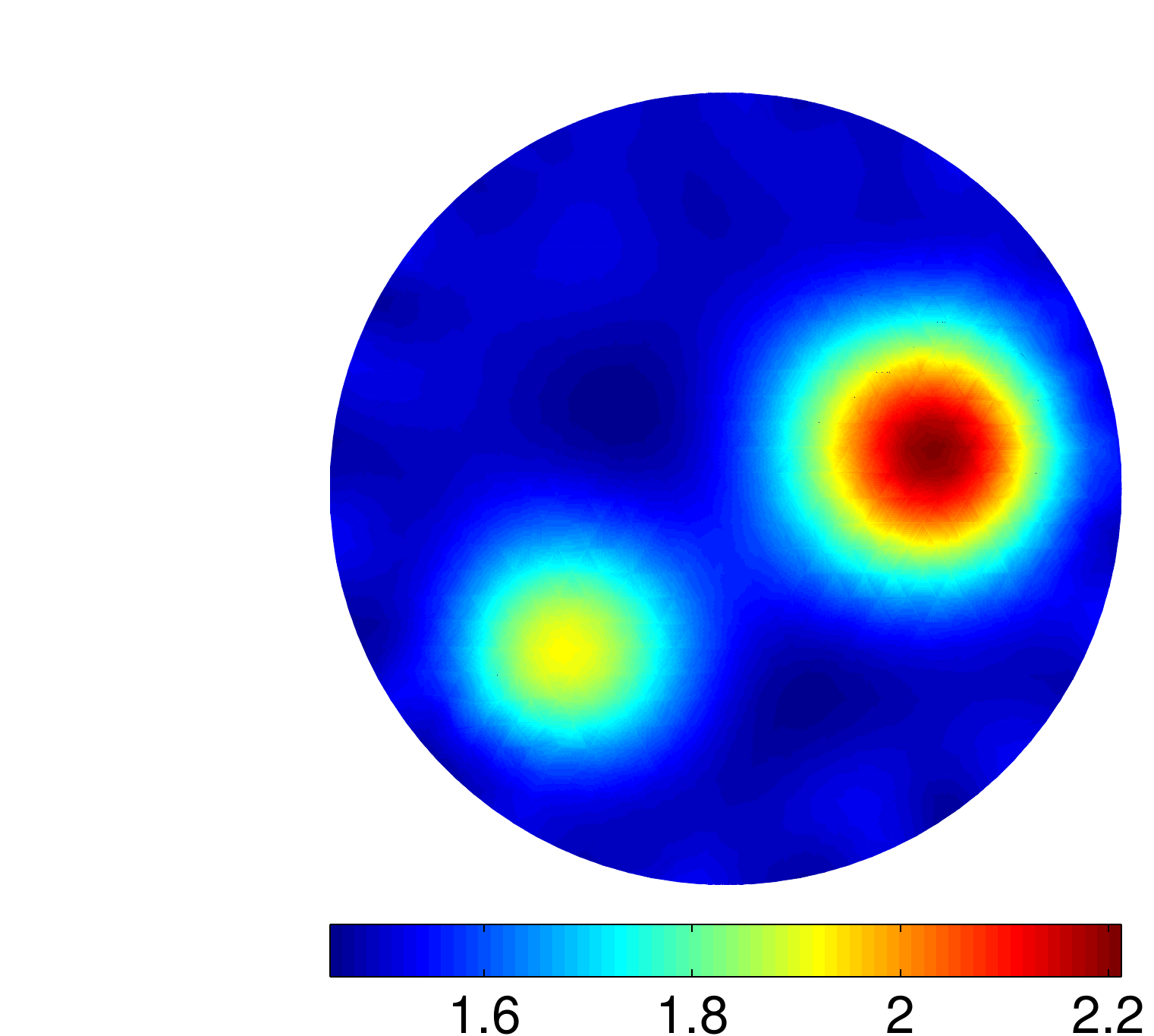}
}
\caption{\label{fig:true etc} Left: The true conductivity used to generate the test data using the finely discetized FEM forward model. Center: A Gauss-Newton-based MAP estimate based on the coarsely discretized FEM forward model, using the Conventional Error Model that ignores the modeling error.  Right: The MAP estimate computed by using the Enhanced Error Model, in which the modeling error mean and covariance are estimated from 1\,500 random draws from the prior. 
}
\end{figure}

Observe that in the reconstruction based on the Conventional Error Model, the true inclusions are completely overshadowed by the boundary artifacts that are concentrated around the edges of the electrodes. This is to be expected, since the basis functions in the coarse FEM mesh do not capture the voltage singularities at the electrode edges, and the inverse solution compensates the modeling error with elevated conductivity at the edges to mitigate the singularity. In agreement with previously published results, the Enhanced Error Model produces a solution without modeling error artifacts.

The computation of the MAP estimate, regardless of the error model, requires repeated linearization of the forward map. The re-evaluation of the Jacobian may be time consuming, and therefore it is tempting to replace the coarse mesh FEM model with a linearized approximation around the background conductivity $\sigma_0$ corresponding to $u=0$,
\[
 f(u) = F^{n_{\rm c}}(0) + {\mathsf D} F^{n_{\rm c}}(0) u. 
\]
The solution of the inverse problem with the linearized model and Gaussian prior is particularly straightforward, requiring a solution of a linear system. We iterate the posterior updating algorithm, generating samples ${\mathscr S}_\ell = \{u_\ell^1,\ldots,u_\ell^N\}$, $\ell=0,1,2,\cdots$ using the modeling error updating scheme. In Figure~\ref{fig:estimates, linear}, we plot the conductivities corresponding to the posterior means,
\[
 \overline\sigma_\ell = \sigma_0{\rm exp}(\overline u_\ell),\quad \overline u_\ell = \frac 1N\sum_{j=1}^N u_\ell^j,
\]
as well as the marginal variances of the parameters $u_\ell$, that is,
\[
 {\rm var}_\ell = {\rm diag}\left(\frac 1N\sum_{j=1}^N(u^j_\ell - \overline u_\ell)(u^j_\ell - \overline u_\ell)^\mT\right).
\]
The sample size here was $N = 5\,000$. 

\begin{figure}
\centerline{\includegraphics[width=3.2cm]{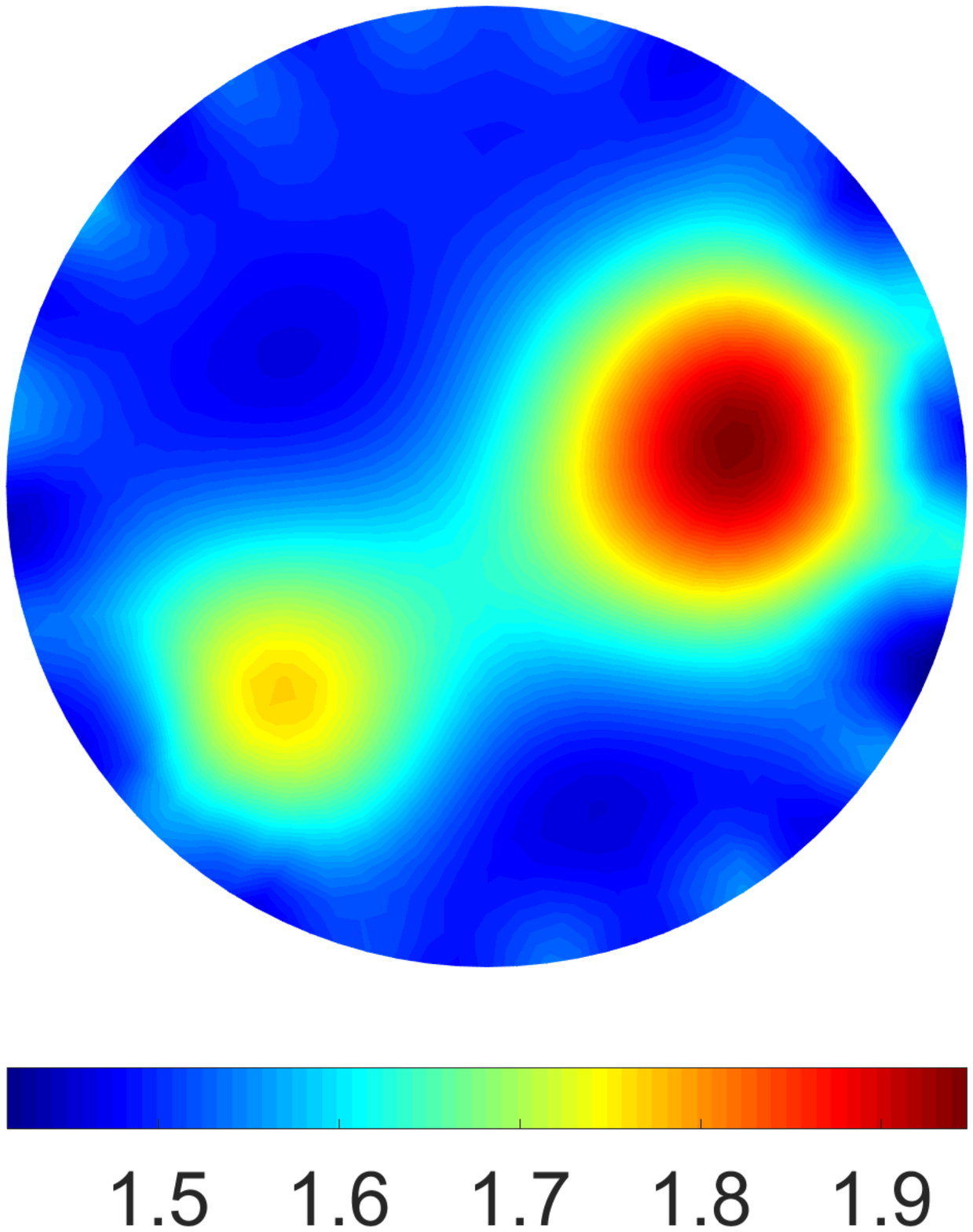}
\includegraphics[width=3.2cm]{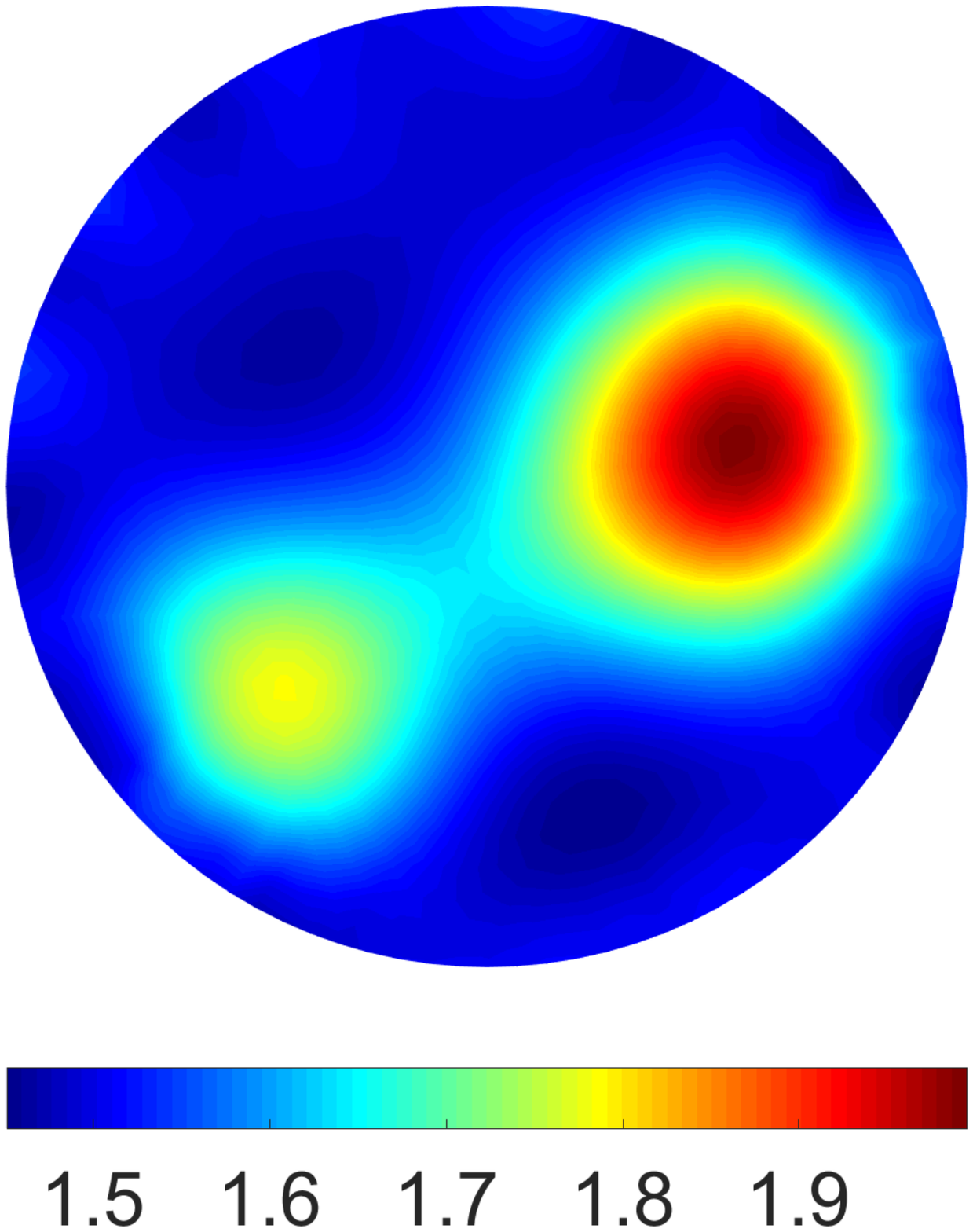}
\includegraphics[width=3.2cm]{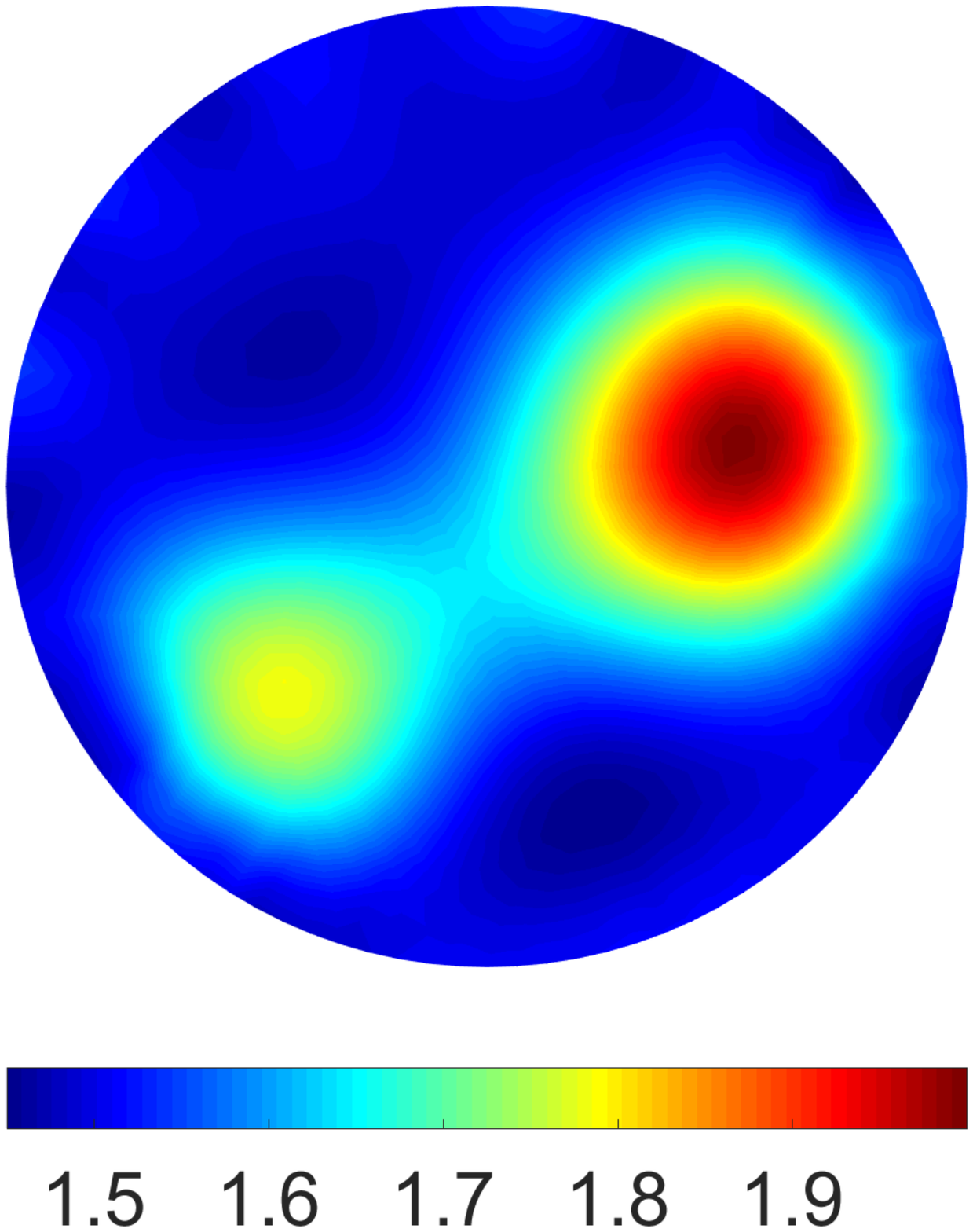}
\includegraphics[width=3.2cm]{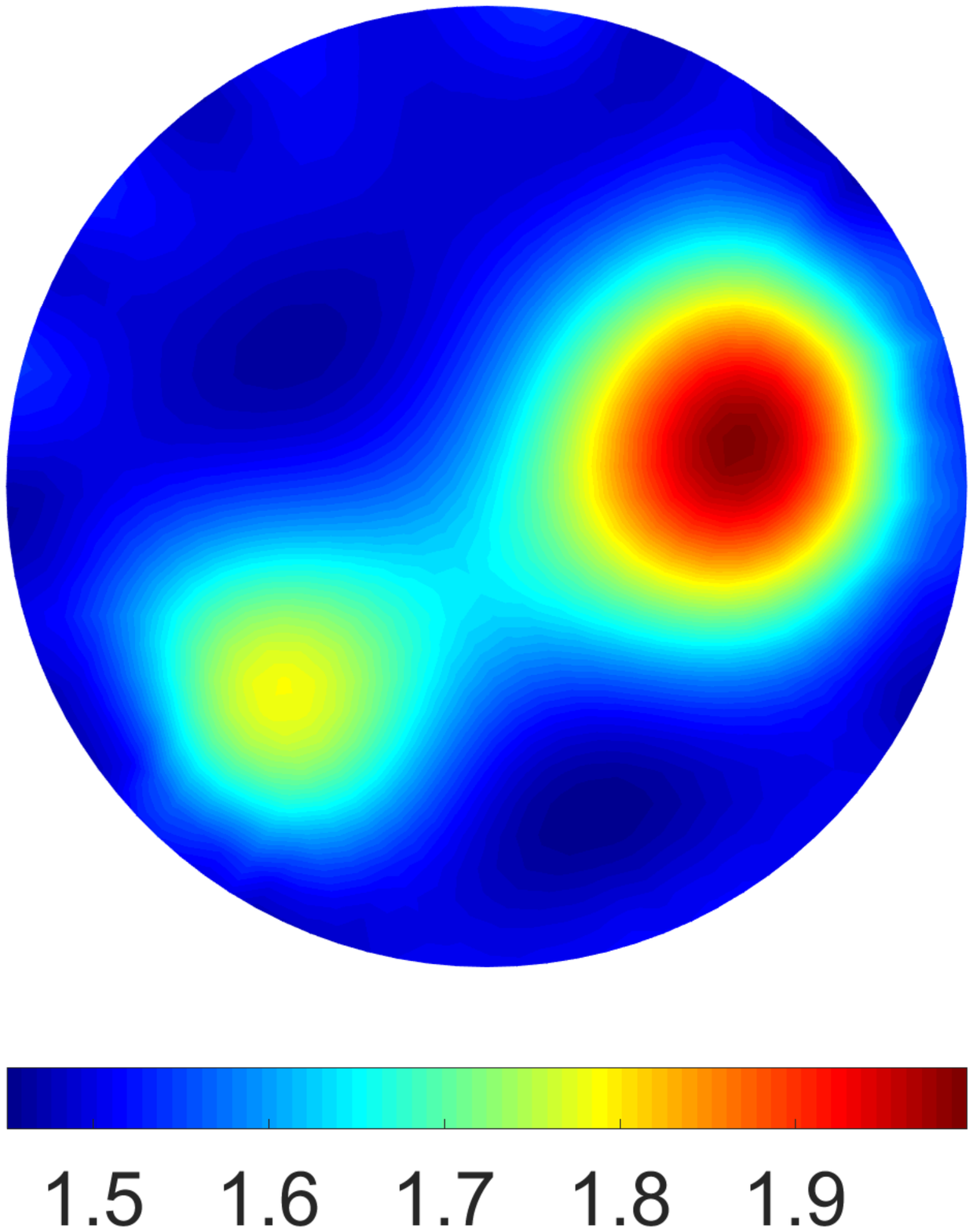}
\includegraphics[width=3.2cm]{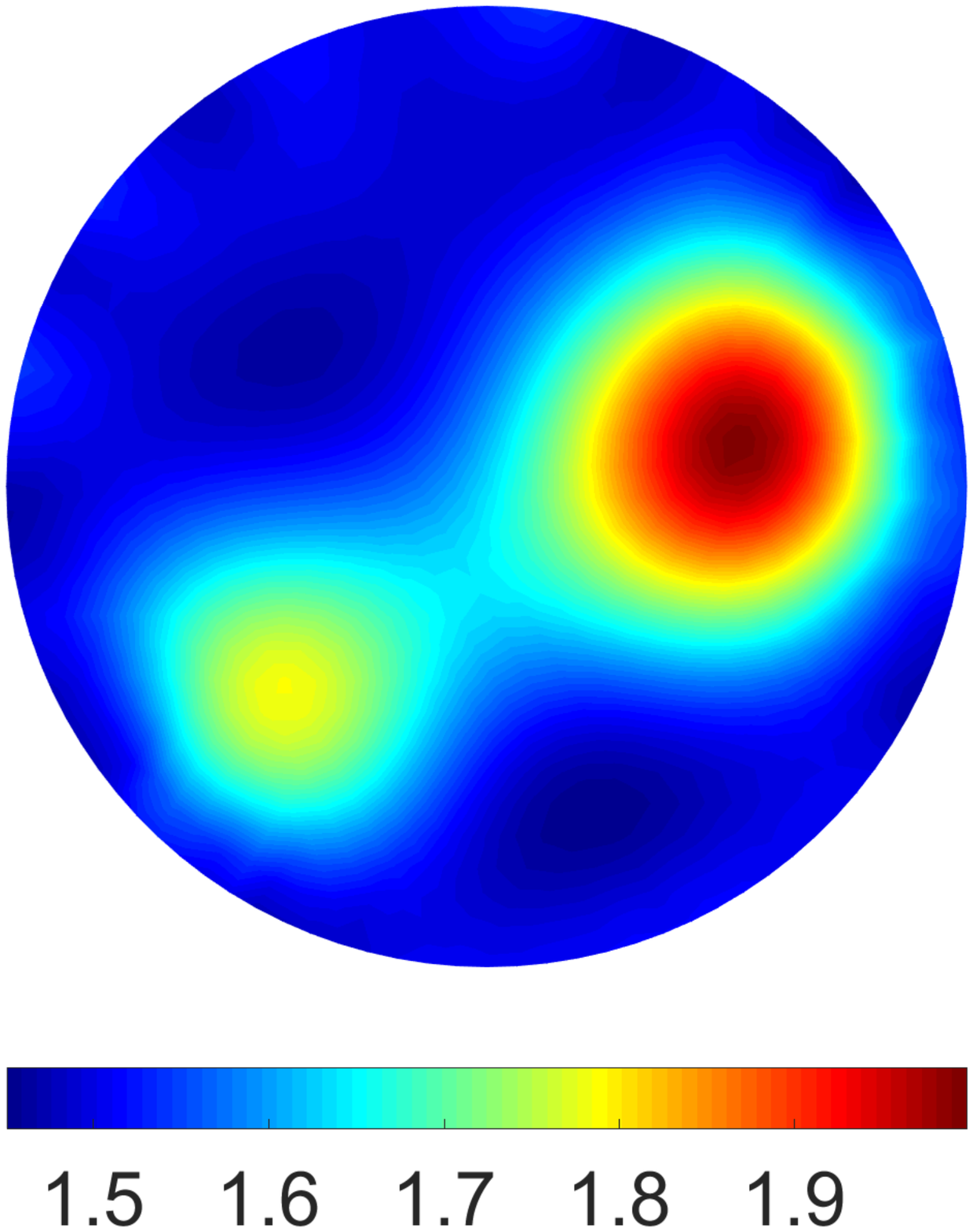}
}
\centerline{\includegraphics[width=3.2cm]{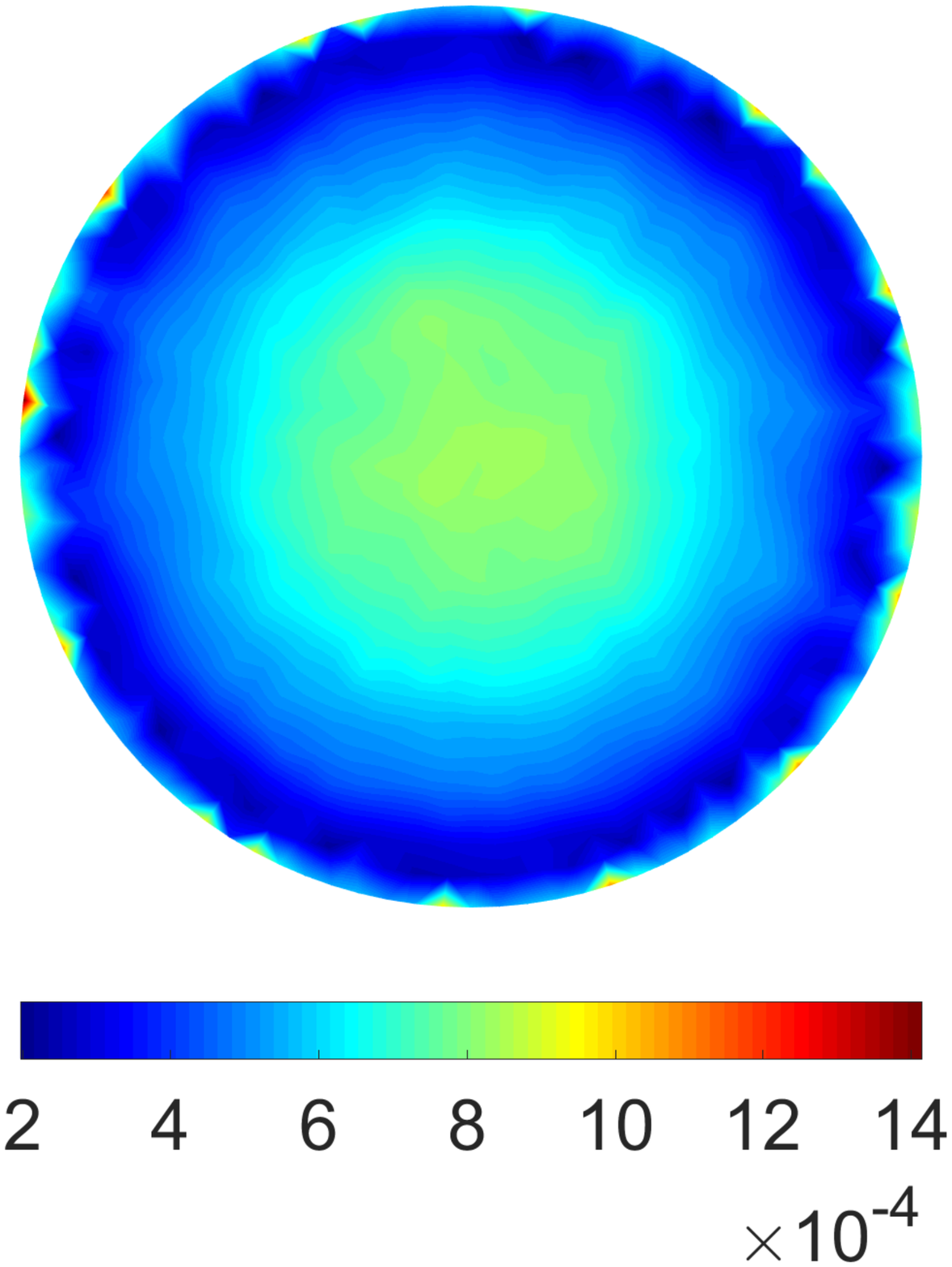}
\includegraphics[width=3.2cm]{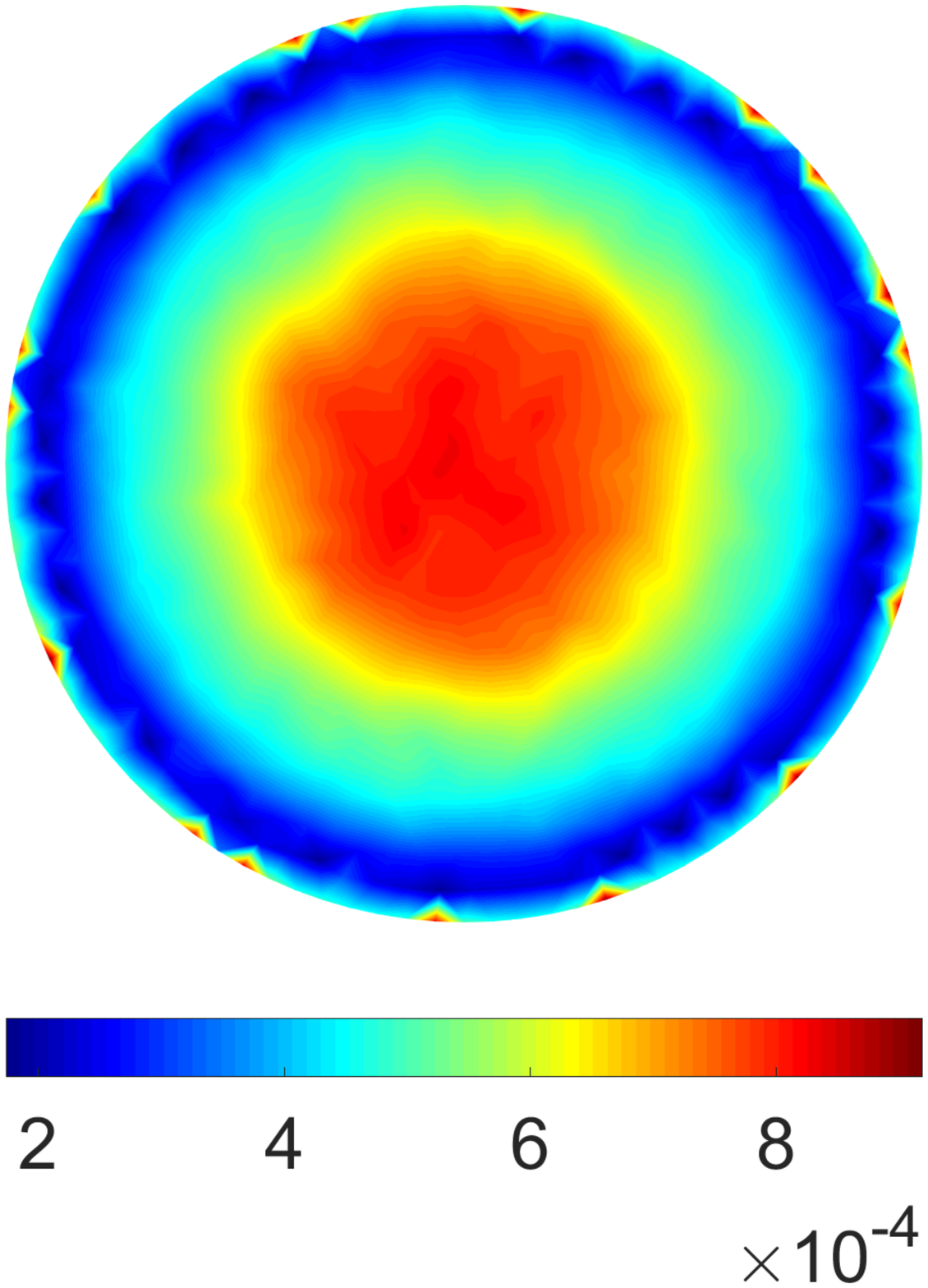}
\includegraphics[width=3.2cm]{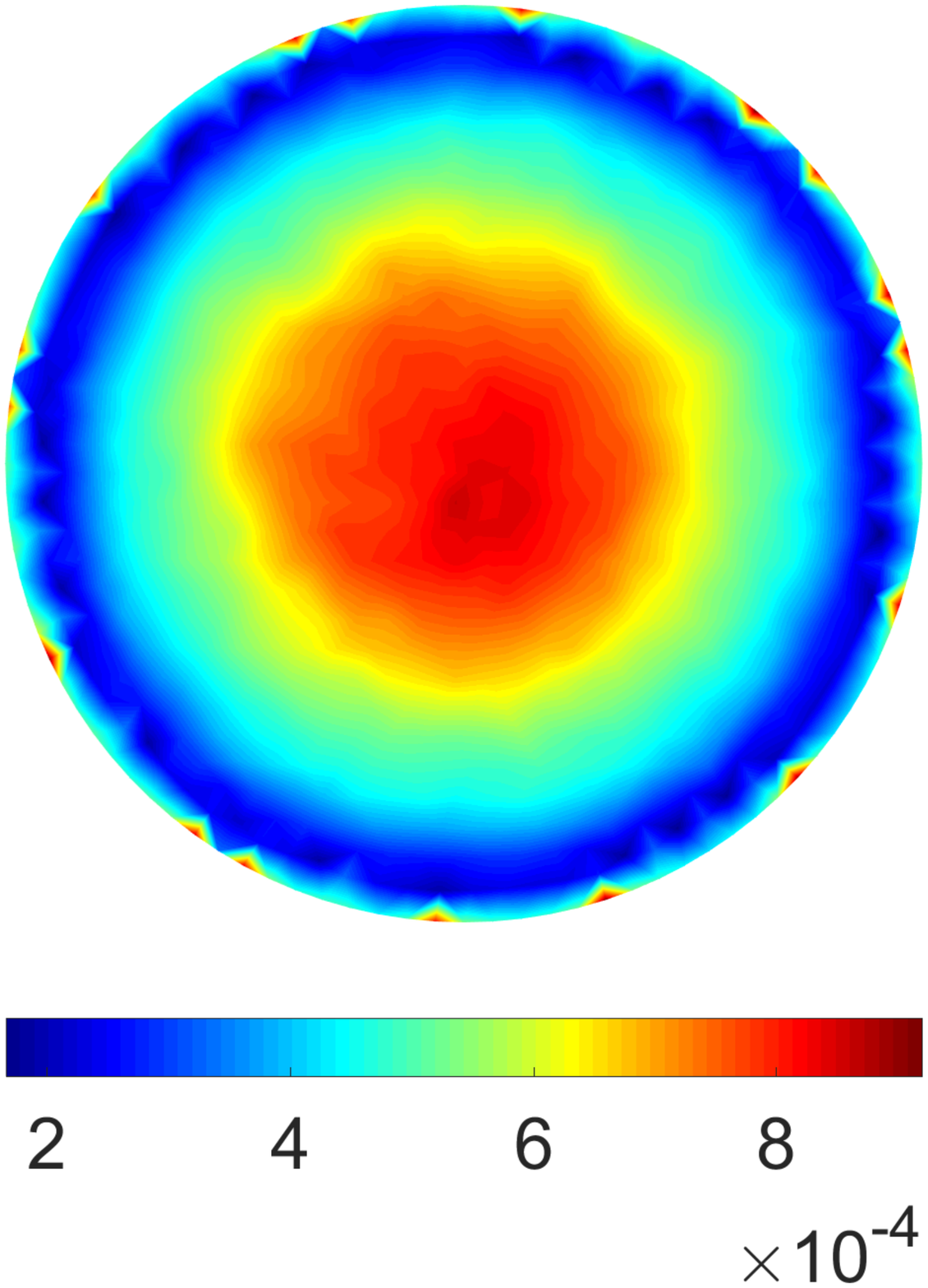}
\includegraphics[width=3.2cm]{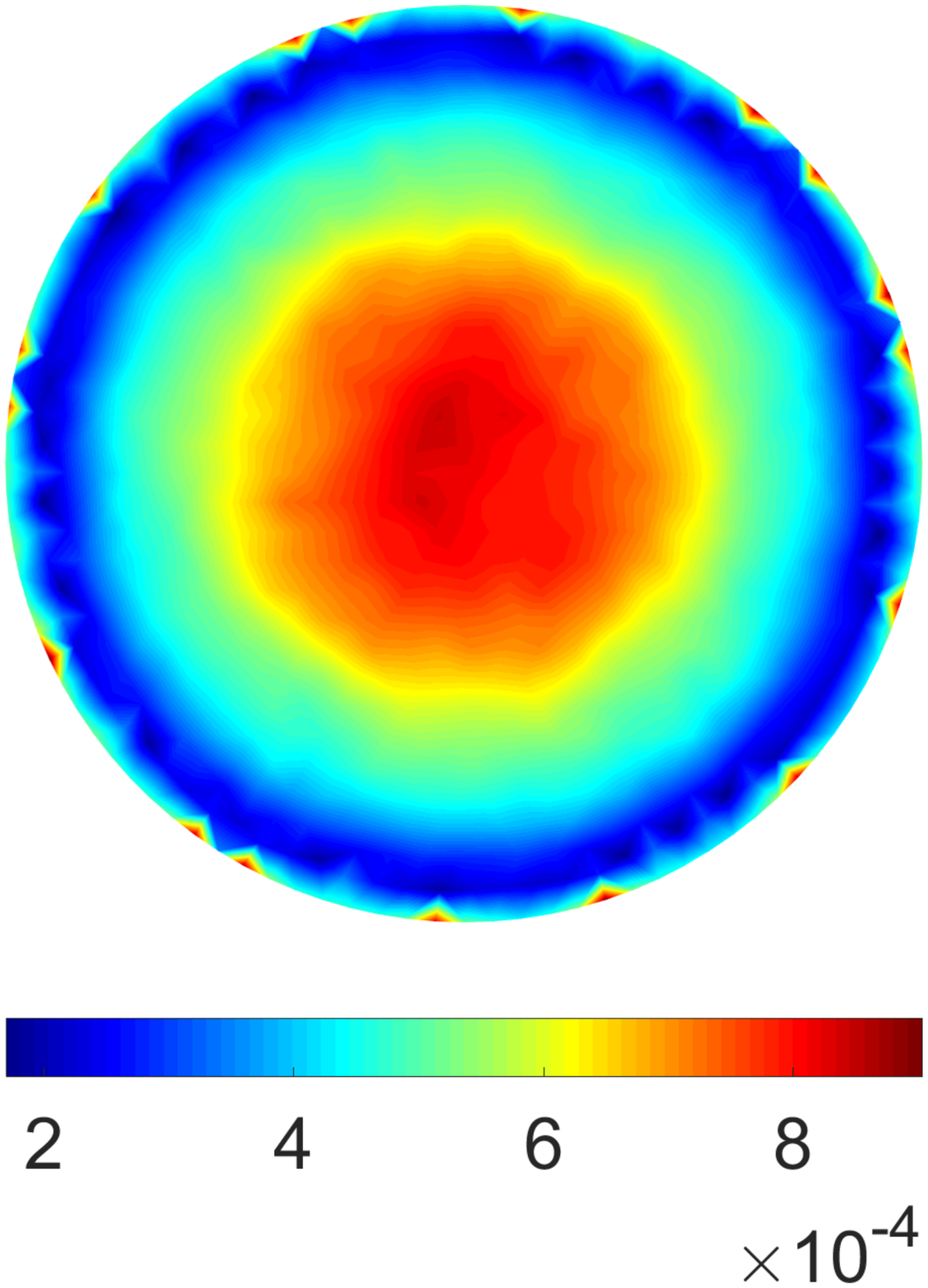}
\includegraphics[width=3.2cm]{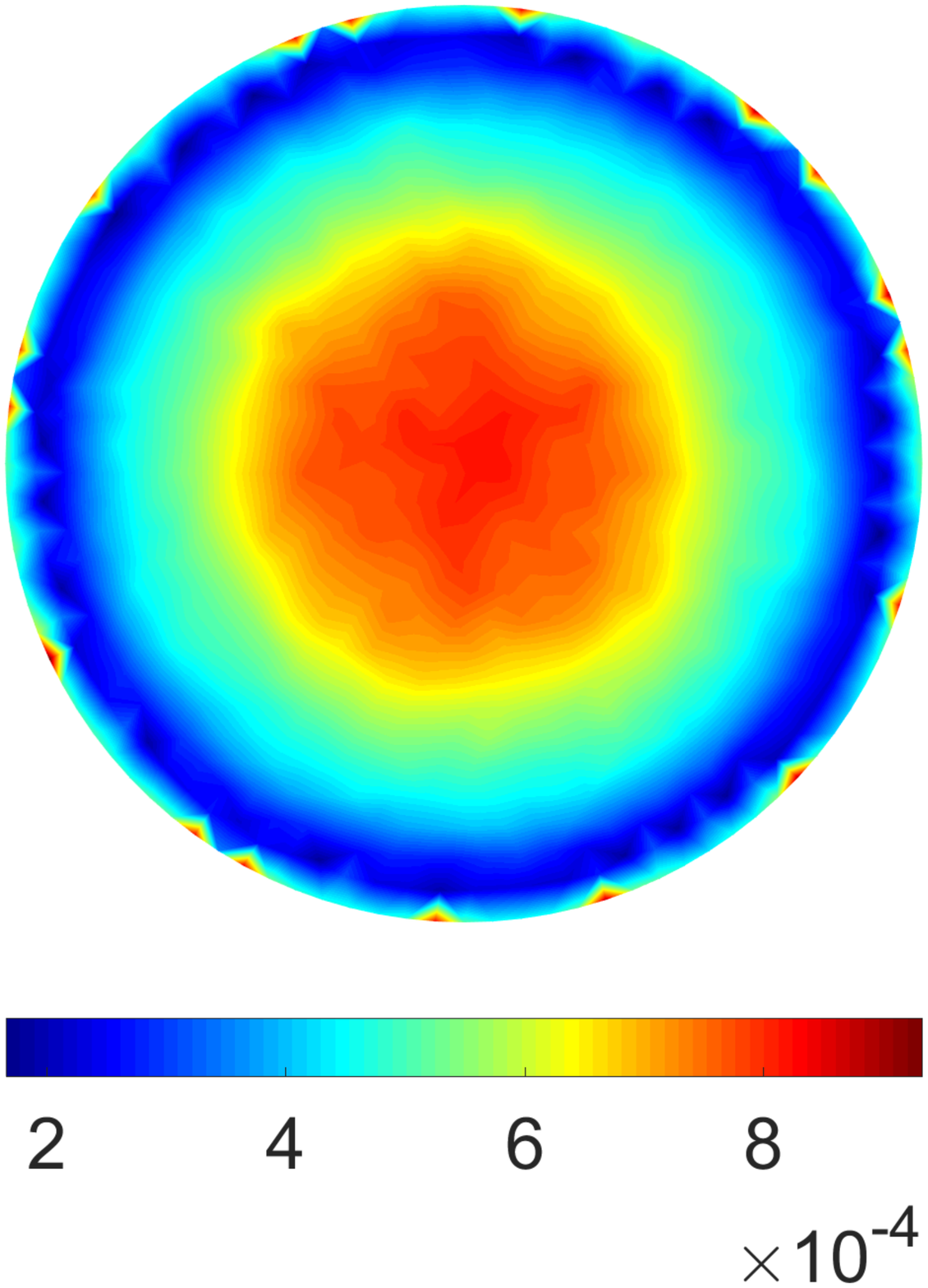}
}
\caption{\label{fig:estimates, linear} Sample means of the conductivity $\overline u_\ell$ for $\ell=1,2,3,4$ (upper row), and sample marginal variances of the components of the vectors $u$ at the same iterations.
}
\end{figure}

Finally, we consider the convergence of the iterated densities $\pi_\ell$ towards the posterior density by means of the Kullback-Leibler divergence, which we approximate using the particles drawn from $\pi_\ell$,
\begin{eqnarray*}
 D_{\rm KL}(\pi_\ell \| \pi_{\rm post}) &=& \int\pi_\ell(u) \log\left(\frac{\pi_\ell(u)}{\pi_{\rm post}(u)}\right)du \approx \frac 1N \sum_{j=1}^N  
  \log\left(\frac{\pi_\ell(u_\ell^j)}{\pi_{\rm post}(u_\ell^j)}\right) \\
  &=&\frac 1N \sum_{j=1}^N  
  \log\left(\frac{\pi_\ell(b\mid u_\ell^j)}{\pi(b\mid u_\ell^j)}\right)  -  \log\left(\frac{\pi(b)}{\pi_\ell(b)}\right),
\end{eqnarray*}
the second term corresponding to the normalization factors of the true and approximate posteriors.
Observe that to evaluate the posterior density, the fine mesh model needs not to be evaluated anew, since the fine mesh evaluations are already computed for the modeling error sample. The sample-based approximation of the KL divergence is straightforward to compute up to the normalizing constants.

Figure~\ref{fig:mean error,linear} shows the sample-based estimates of the Kullback-Leibler divergence for $\ell = 1,2,\ldots,\ell_{\rm max} = 5$. To subtract the unknown normalization offset, we plot the differences  
\begin{equation}\label{DeltaKL}
\Delta  D_{\rm KL}(\pi_\ell \| \pi_{\rm post})  = D_{\rm KL}(\pi_\ell \| \pi_{\rm post}) -  D_{\rm KL}(\pi_{\ell_{\rm max}} \| \pi_{\rm post}).
\end{equation}
The figure shows also the relative error of the sample mean approximating the true conductivity,
\[
 e_r(\overline u_\ell) = \frac{\| \sigma -\overline\sigma_\ell\|}{\|\sigma\|}.
\]
As in the previous subsection, the numerical results demonstrate that the
approach to model error advocated in this paper leads to improved estimates
of the true value used to generate the data.

\begin{figure}
\centerline{\includegraphics[height=5.3cm]{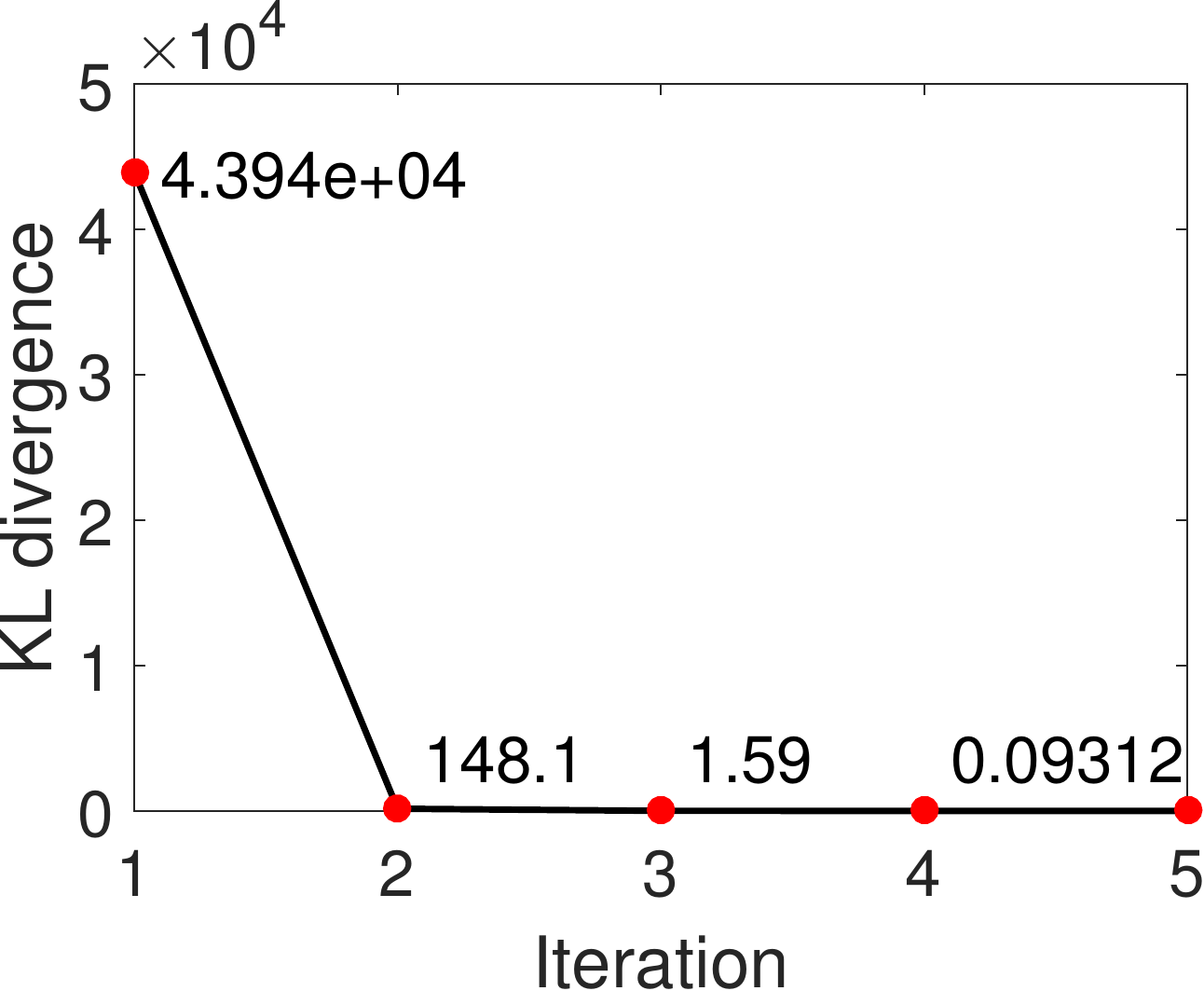}
\includegraphics[height=5cm]{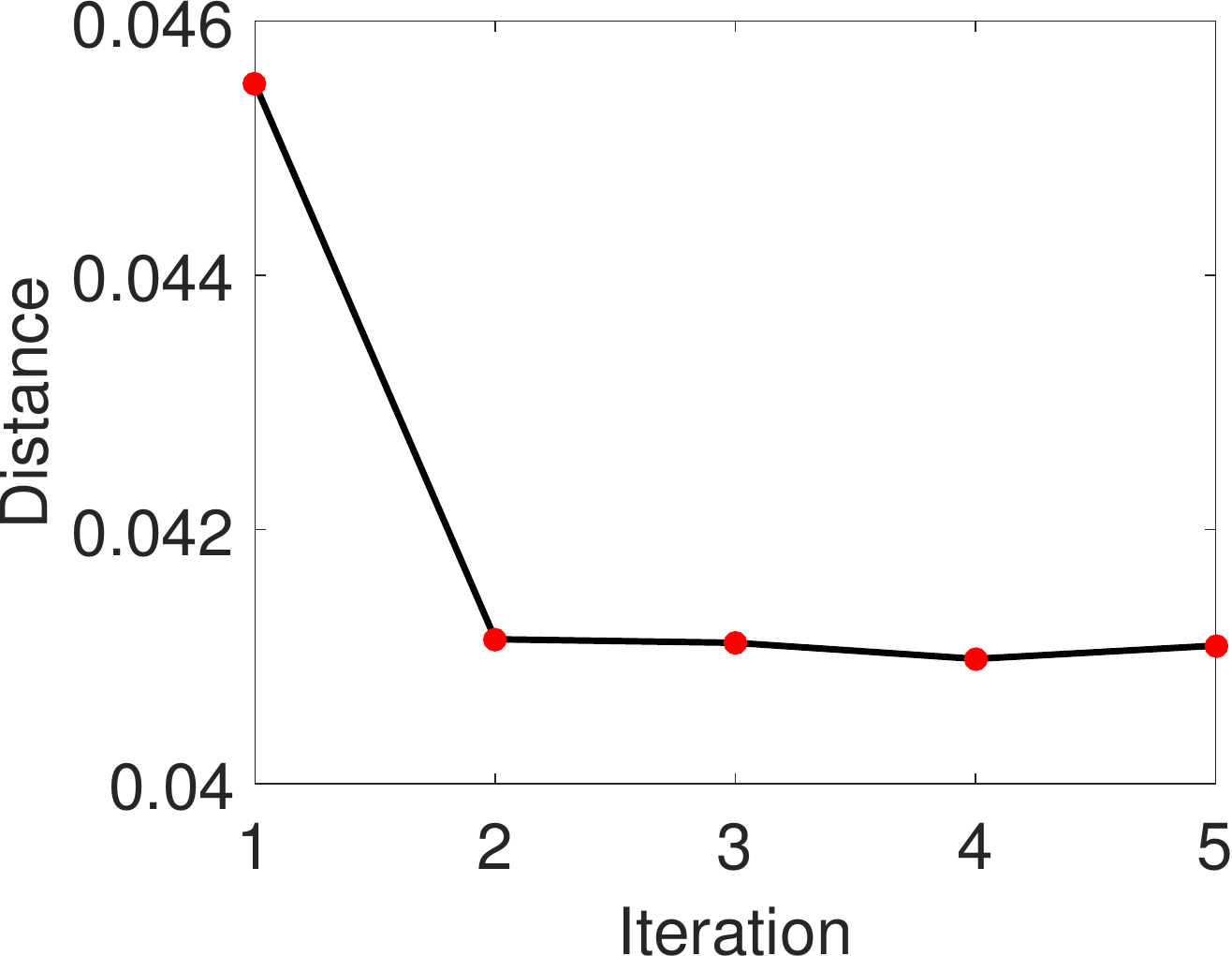}}
\caption{\label{fig:mean error,linear} Left: The Kullback-Leibler divergence (\ref{DeltaKL}) difference estimated by using a sample of $5\,000$  realizations drawn from the approximate posterior densities $\pi_\ell$. Right: The relative distance of the sample mean from the true conductivity over five iterations.}
\end{figure}

\subsection{Steady State Darcy Flow}
\label{ssec:gwf}

In the last computed example, we consider the inverse problem of estimating the permeability distribution in porous medium from a discrete set of pressure measurements. More precisely,  let the computational domain be $\Omega = (0,1)^2$, and define $X = L^\infty(\Omega)$. For a given $u \in X$, called {\em log-permeability} define the pressure filed $p = P(u) \in H^1_0(\Omega)$ to be the solution to the steady-state Darcy equation with Dirichlet boundary conditions,
\begin{eqnarray}
\label{eq:gwf_pde}
\cases{\pushright{-\nabla\cdot(e^u\nabla p) = g} &$x\in\Omega$\\
\pushright{\hfill p=0} &$x\in\partial\Omega$
}
\end{eqnarray}
for some fixed and presumably known source term $g \in H^{-1}(\Omega)$.

We define now the observation operator $\mathcal{O}:H^1_0(\Omega)\rightarrow\mathbb{R}^J$ by
\[
\mathcal{O}_j(p) = \frac{1}{2\pi\eps}\int_\Omega p(x)e^{-\frac{1}{2\eps^2}(x-q_j)^2}\,\dee x,\quad j=1,\ldots,J,
\]
for some set of points $\{q_1,\ldots,q_J\} \subseteq \Omega$. The observations are smoothed  observations at the points $\{q_1,\ldots,q_J\}$, converging to point observations as $\varepsilon\to 0$. Note that each $\mathcal{O}_j$ is a bounded linear functional on $H^1_0(\Omega)$. In what follows, we choose $\eps = 0.02$, $g(x_1,x_2) =  100\sin(\pi x_1)\sin(\pi x_2)$, and let $\{q_1,\ldots,q_{25}\}$ be a uniformly spaced grid of $25$ points in $\Omega$. The accurate model is then defined by the composition $F = \mathcal{O}\circ P$. As in the EIT example, the approximate model is defined through linearization,
\[
f(u) = F(u_0) + \mathsf{D}F(u_0)u.
\]
for some fixed $u_0 \in X$; in this example we choose $u_0 = 0$. To construct the linear model, the derivative may be computed inexpensively using an adjoint method. The computation of the full Jacobian $\mathsf{D}F(u_0)$ requires $J+1$ numerical solutions of a PDE of the form (\ref{eq:gwf_pde}), and needs to be performed only once.

In this example, we generate three different data sets 
corresponding to different noise levels: The noiseless data generated by using the non-linear model is
perturbed by additive observational noise drawn from  normal distribution 
$\mathcal{N}(0,\mGamma_i)$, where $\mGamma_i = 10^{-i-1}\mI$, and $i=1,2,3$. 
The true log-permeability $u^\dagger$, defined as the sum of two unnormalized Gaussian densities, is shown in Figure \ref{fig:gwf_truth}. In the same figure, the computed pressure field is shown, with the observation points indicated by black dots. Each data set is generated using a uniform mesh of $128\times 128$ points, while in the inverse computations, we use a reduced model with a uniform mesh of $64\times 64$ points. We perform $10$ iterations of the posterior updating algorithm, using as few as $100$ particles in the particle approximation model.

As in the EIT simulations, we choose a Whittle-Mat\'ern prior distribution for the vector $u$ defining the permeability as given in (\ref{eq:wmprior}). We make the choices $\lambda=0.1$ and $\zeta=1$, and note that here $\mL_g$ corresponds to the the finite-difference Laplacian on the reduced mesh using the standard 5-point stencil.

In Figure \ref{fig:gwf_means} the conditional means arising from the different error models and data sets are shown. In this example the conventional and enhanced error models have very similar performance in terms of inferring the conditional mean. They are both able to infer the geometry of the log-permeability field, particularly when the observational noise is small, however they fail to obtain the magnitude. The iterative algorithm proposed in this article is able to obtain both the geometry and magnitude with good accuracy in a small number of iterations. Figure \ref{fig:gwf_conv} shows the evolution of the size of the error between the conditional mean at iteration $\ell$ and the true log-permeability field. In all cases the error has converged in 4 or 5 iterations, similarly to what was observed in the EIT experiments.
As in the previous two subsections, the numerical results demonstrate that the approach to model error advocated in this paper leads to improved accuracy of the point estimates of the true parameter underlying the data.

\begin{figure}
\includegraphics[width=\linewidth, trim=1cm 1cm 1cm 1cm, clip]{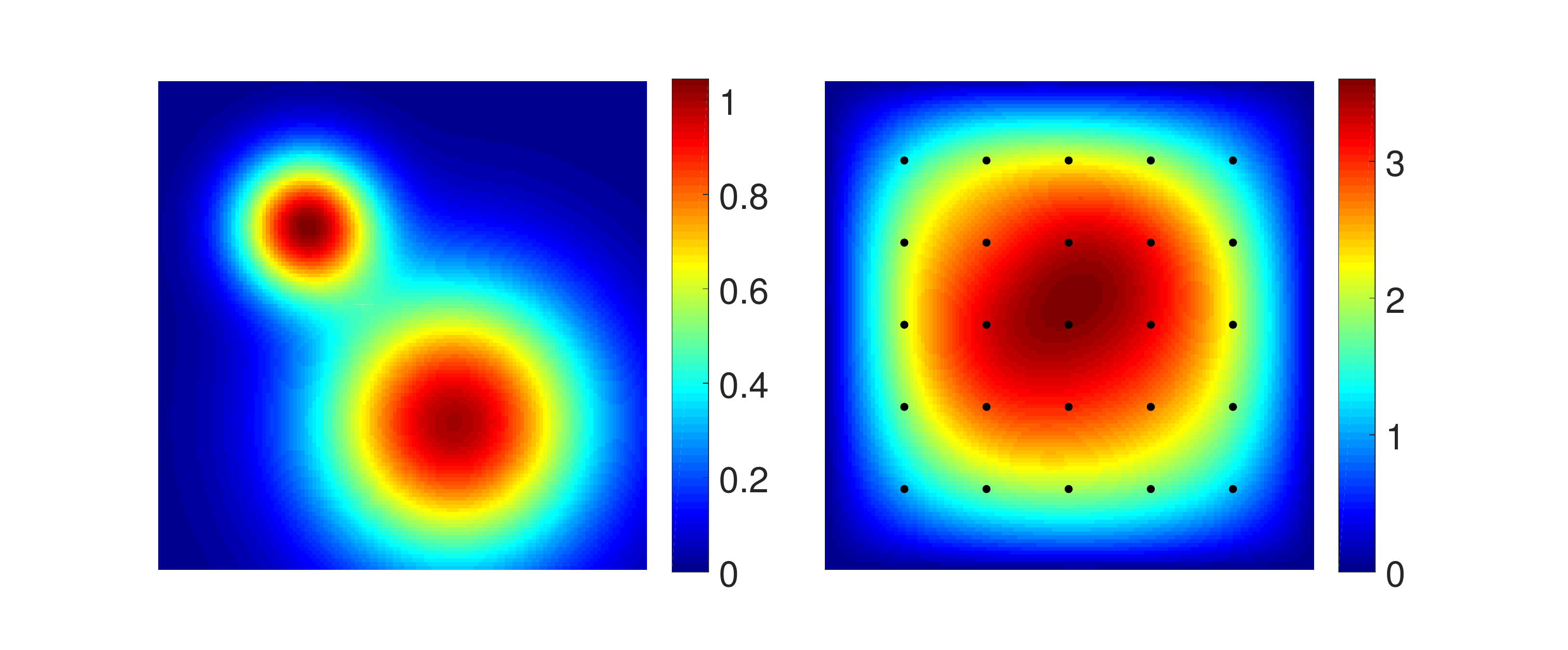}
\caption{(Left) The true log-permeability used to generate the data. (Right) The true pressure and the observation points $\{q_j\}_{j=1}^{25}$.}
\label{fig:gwf_truth}
\end{figure}

\begin{figure}
\centerline{
\includegraphics[width = 15cm]{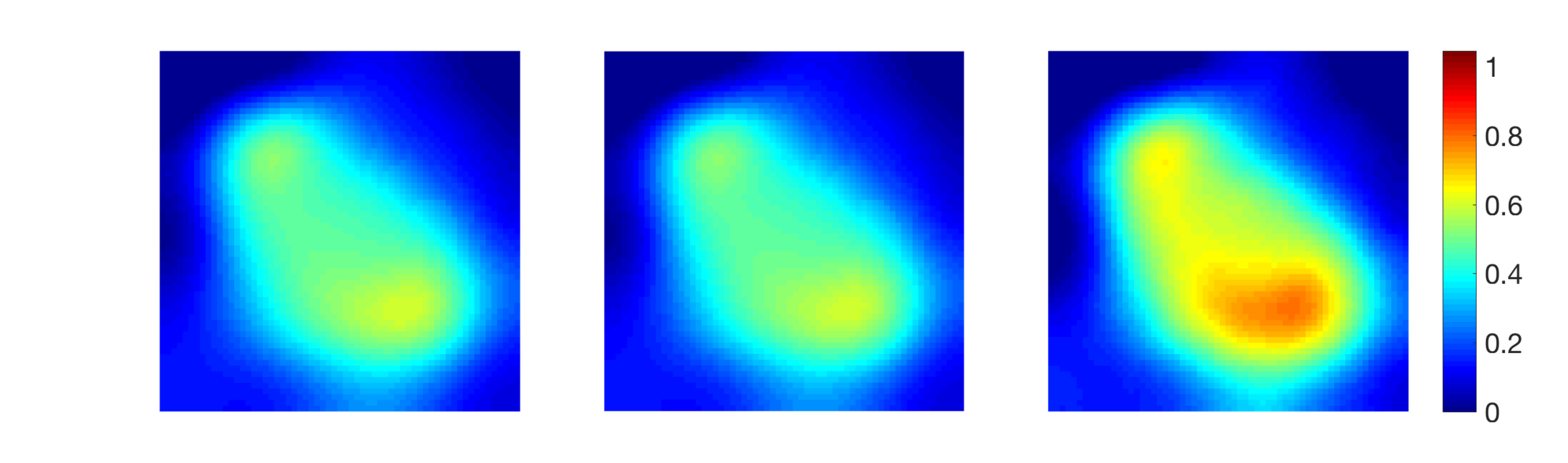}
}
\centerline{
\includegraphics[width = 15cm]{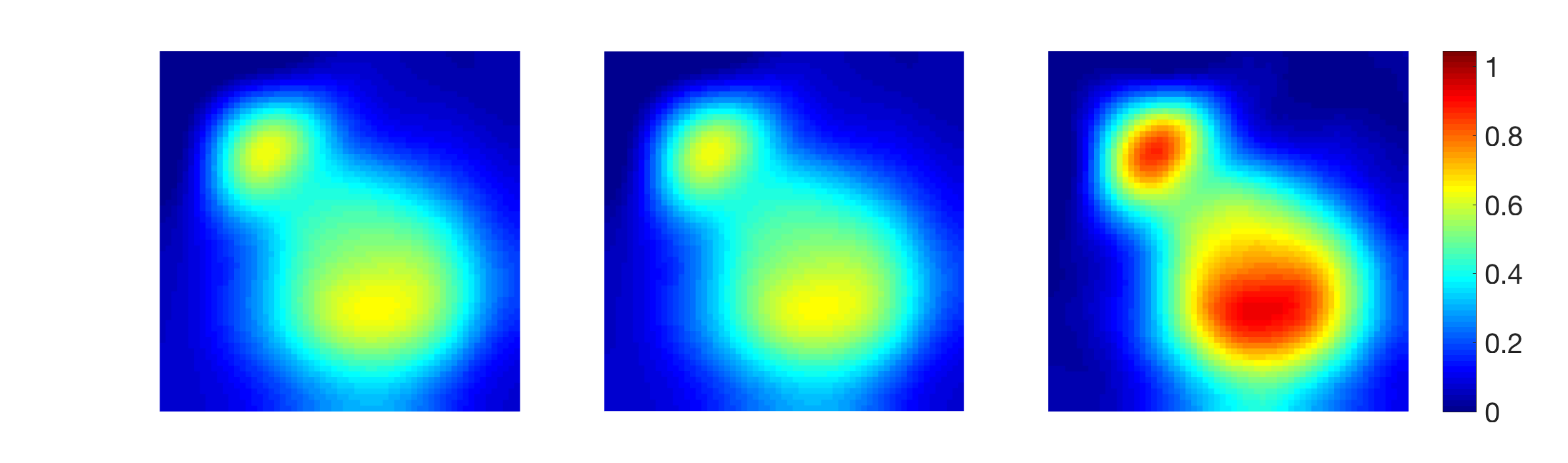}
}
\centerline{
\includegraphics[width = 15cm]{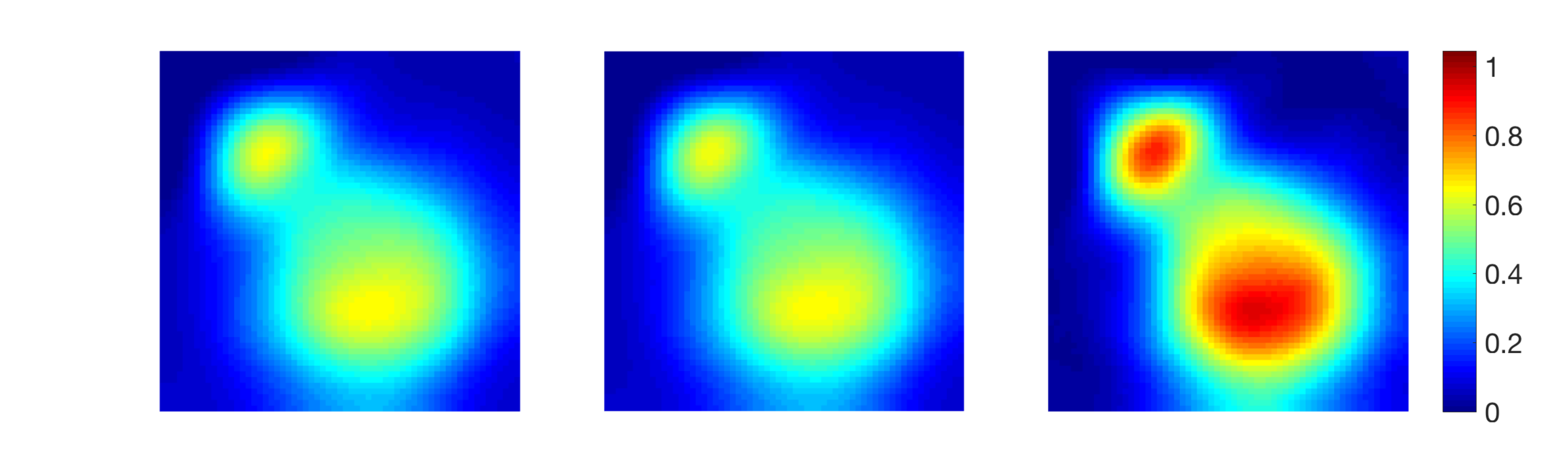}
}
\caption{(Left column) Conditional mean arising from conventional error model. (Middle column) Conditional mean arising from enhanced error model. (Right column) Conditional mean arising from iterative error model, iteration 10. From top to bottom, observational noise standard deviation is $10^{-2}$, $10^{-3}$, $10^{-4}$ respectively.}
\label{fig:gwf_means}
\end{figure}

\begin{figure}
\centerline{
\includegraphics[width=5.5cm]{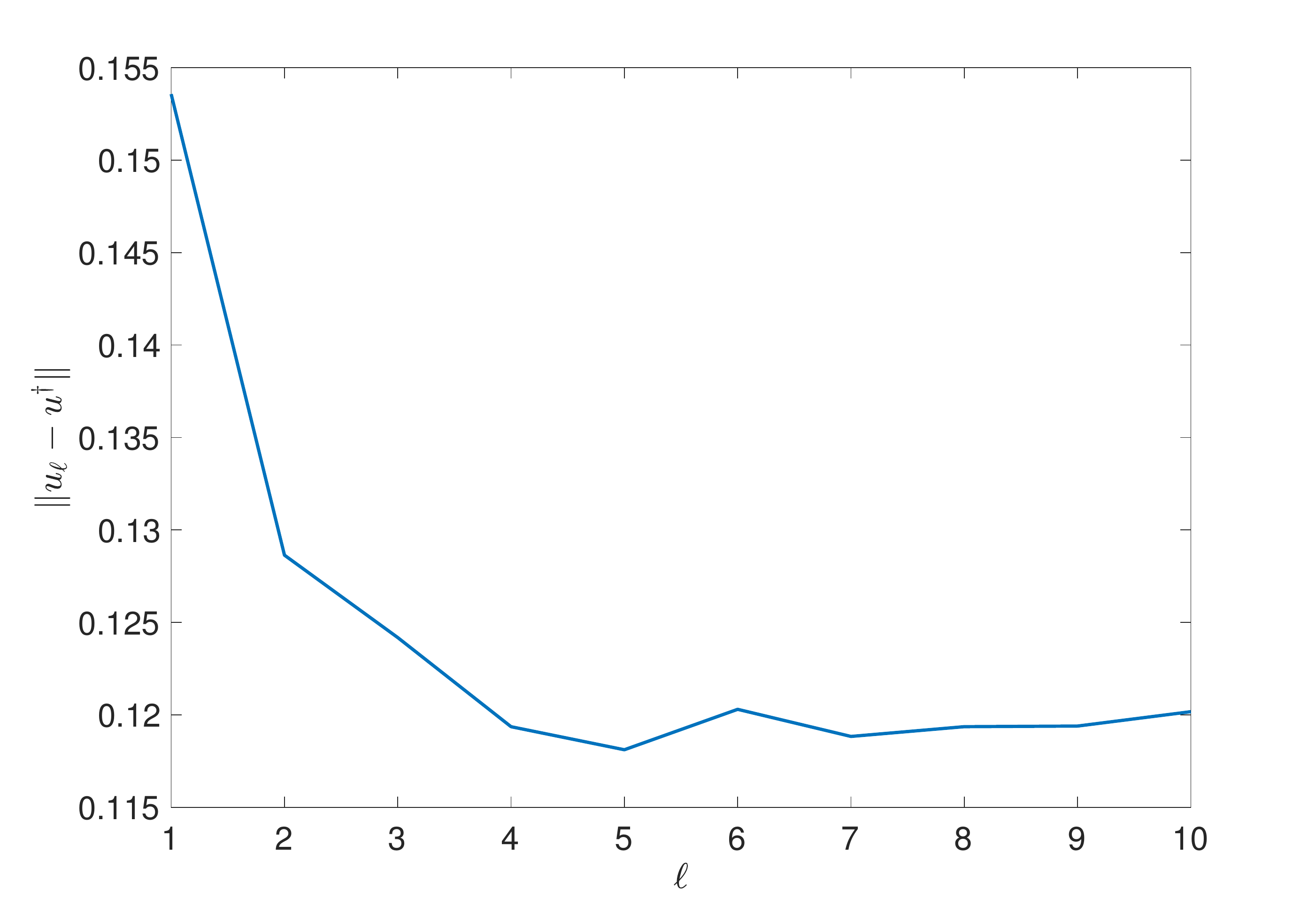}
\includegraphics[width=5.5cm]{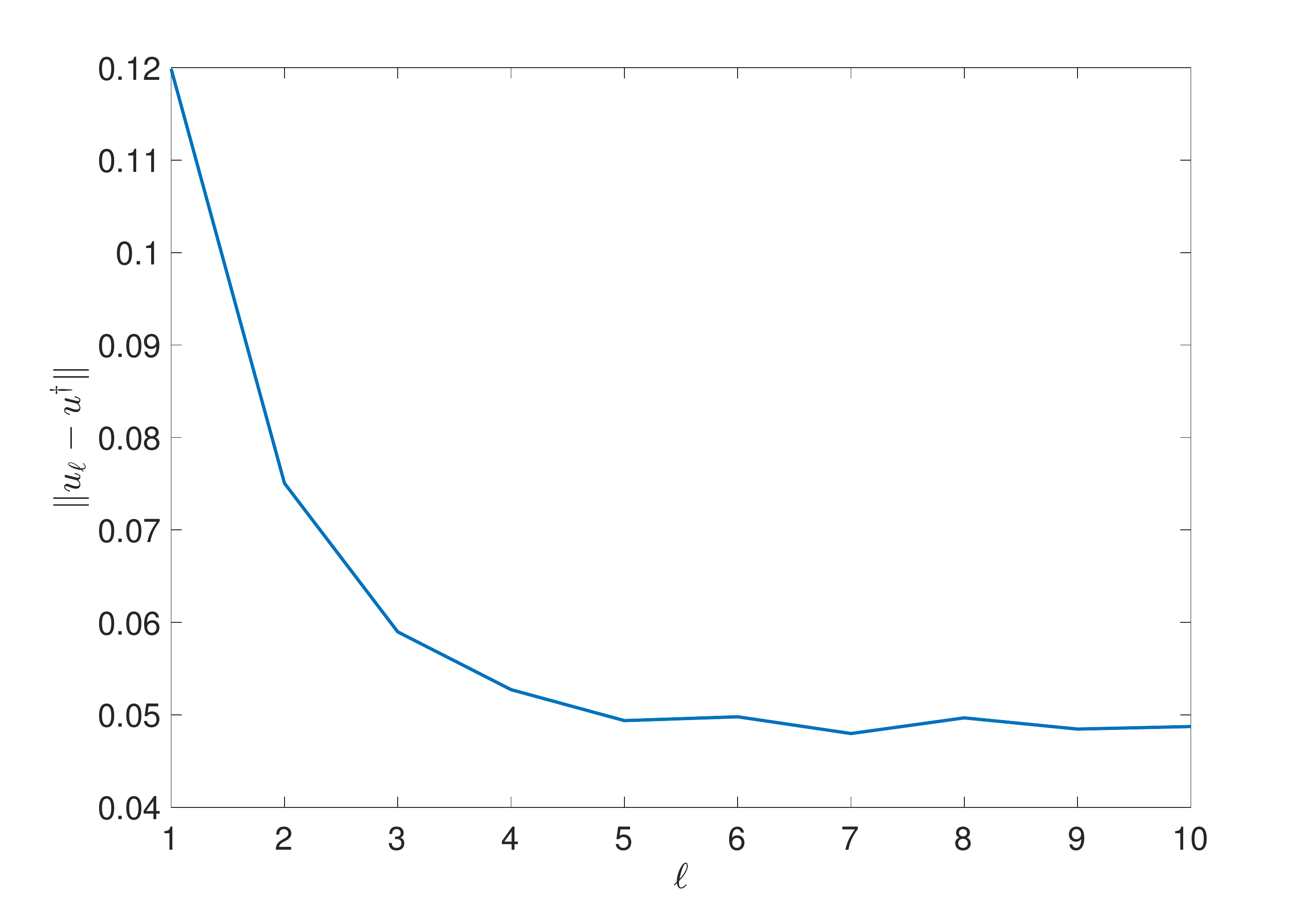}
\includegraphics[width=5.5cm]{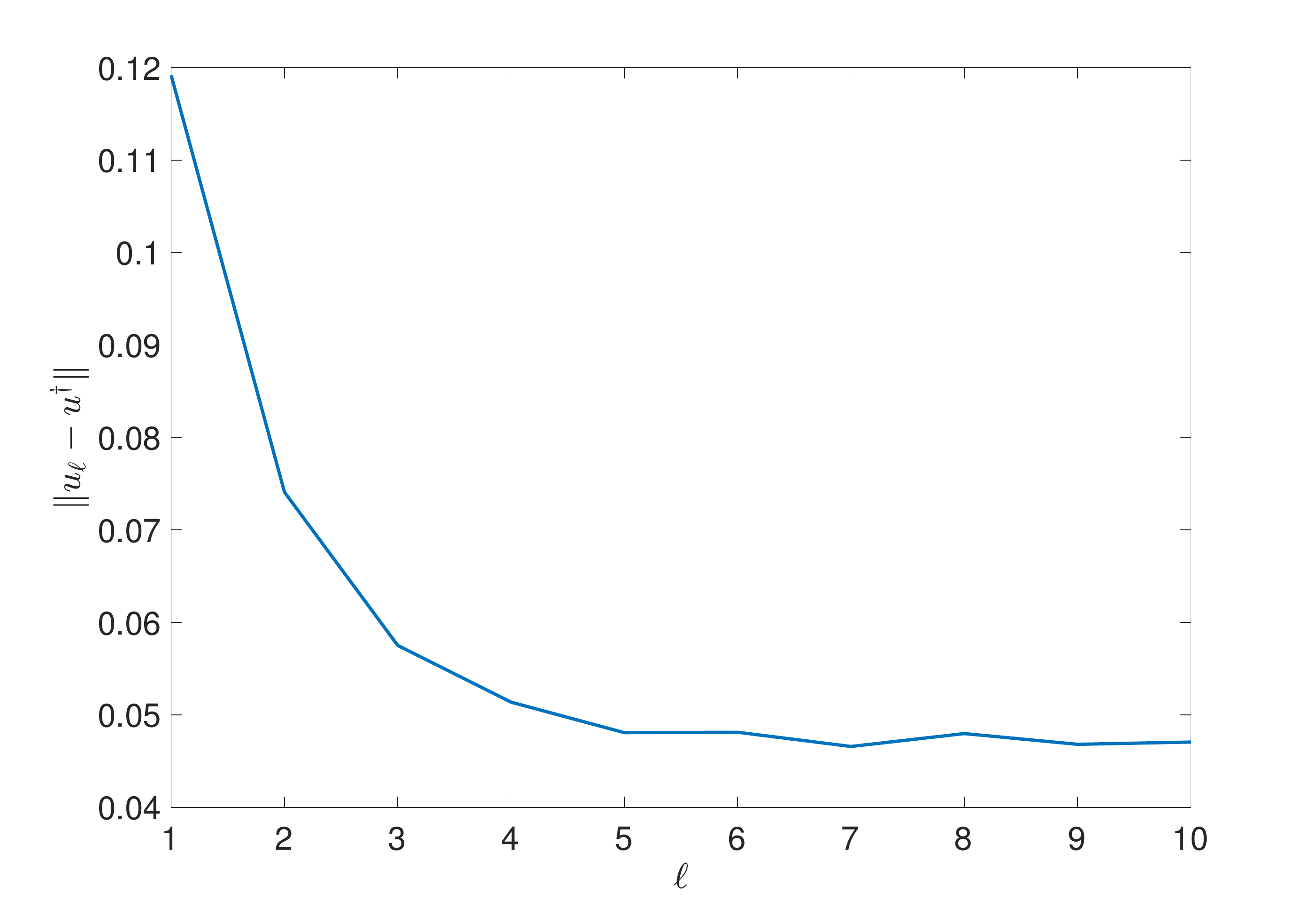}
}
\caption{Convergence of the error $\|u_\ell - u^\dagger\|$ between the conditional mean and the truth as the number of iterations increases. From left to to right observational noise standard deviation is $10^{-2}$, $10^{-3}$, $10^{-4}$ respectively.}
\label{fig:gwf_conv}
\end{figure}

\section{Conclusions}
\label{sec:conc}

Ill-posedness is a characteristic feature of inverse problems, and therefore, special attention needs to be paid to model uncertainties and model discrepancies that manifest themselves as highly correlated noise, deviating the measured data from the value predicted by the forward model. The modeling error is particularly detrimental when the quality of the data is good, and the exogenous noise does not mask the modeling errors that may become the predominant component of the noise. Quantification of the uncertainty due to the modeling errors is therefore an important part of successfully solving
the inverse problem. Modeling error depends on the unknown that is the target of the inverse problem, and therefore, the Bayesian framework provides a natural basis for attacking the problem: the unknown of interest, modeled as a random variable, can be used in a natural way to define the modeling error as a random variable, thus allowing a statistical interpretation of the modeling error. In this article we introduce,
and study the properties of, an iterative method of refining the statistical description of the modeling error as our information about the unknown increases.

From the implementational point of view, two cases in which the refinement of the modeling error distribution can be computed are identified. When the model is linear and the distributions are Gaussian, a fairly straightforward updating strategy of the posterior estimate is found, and convergence of this iteration can be shown. For non-linear inverse problems, a linearized approximate model leads to a tractable iterative algorithm based on particle approximations of the posterior, and as demonstrated in the numerical experiments, the computed point estimates can be very good, significantly improving on estimates which ignore model error. However, as pointed out in the article, the limiting approximate probability density obtained by the iterative algorithm is not identical to the Bayesian posterior density, although it may be close to it.  Regarding both the point estimate and the posterior it is important to
recognize that while the approximation error approach does requires a number of
evaluations of the expensive forward model, unlike traditional MCMC algorithms 
no rejections occur. Thus the methodology has potential to compute point estimates
more economically than conventional non-Bayesian approaches such as Tikhonov
regularization; and it also holds the potential to produce reasonable posterior
distributions at considerably lower cost than MCMC using the fully accurate
Bayesian posterior.  One of the future directions of research is to see how the approximation process proposed in this article can be effectively used to produce an estimate of the true posterior density.

\ack{The work of D Calvetti is partially supported by NSF grant DMS-1522334.
E Somersalo's work is partly supported by the NSF grant DMS-1312424. 
The research of AM Stuart was partially supported by the EPSRC programme grant EQUIP, by AFOSR Grant FA9550-17-1-0185 and ONR Grant N00014-17-1-2079. M Dunlop was partially supported by the EPSRC MASDOC Graduate Training Program. Both M Dunlop and AM Stuart
are supported by DARPA funded program Enabling Quantification of Uncertainty in Physical Systems (EQUiPS), contract W911NF-15-2-0121.}

\appendix
\section{Abstract Formulation of Algorithm}

Let $(Z,\mathcal{Z})$ be a measurable space, and given $A \in \mathcal{Z}$ define the indicator function $\mathbb{I}_A:Z\to\R$ by
\[
\mathbb{I}_A(z) =
\left\{\begin{array}{ll}
1 & z \in A\\
0 & z \notin A.
\end{array}
\right.
\]
Given two measures $\mu, \nu$ on $(Z,\mathcal{Z})$, let $\mu * \nu$ denote their convolution, i.e. the measure on $(Z,\mathcal{Z})$ given by
\[
(\mu*\nu)(A) = \int_{Z\times Z} \mathbb{I}_A(u+v)\mu(\dee u)\nu(\dee v)
\]
for any $A \in \mathcal{Z}$. Note that if we have $u \sim \mu$ and $v \sim \nu$ independently, then $u+v \sim \mu*\nu$.

{\bf Algorithm (General).} Let $\mu_0$ denote the prior distribution on $u$ and $\mathbb{Q}_0$ the distribution of the noise $\eps$. Given $v \in Y$, define $T_v:Y\rightarrow Y$ to be the translation operator $T_v(y) = y + v$. Set $\ell = 0$.
\begin{enumerate}[\hspace{0.2cm}1.]
\item Given $\mu_\ell$, assume $m \sim M^\#\mu_\ell$ independently of $\eps$, so $m + \eps \sim \mathbb{Q}_0^{(\ell+1)} := M^\#\mu_\ell *\mathbb{Q}_0$. The likelihood is given by
\[
b\mid u \sim \mathbb{Q}^{(\ell+1)}_u := T_{f(u)}^\#\mathbb{Q}_0^{(\ell+1)}.
\]
Assume that $\mathbb{Q}_u^{(\ell+1)} \ll \mathbb{Q}_0^{(\ell+1)}$, so that we have Radon-Nikodym density
\[
\frac{\dee \mathbb{Q}^{(\ell+1)}_u}{\dee \mathbb{Q}^{(\ell+1)}_0}(b) = \exp\big(-\Phi^{(\ell+1)}(u;b)\big).
\]
Bayes' Theorem gives the posterior distribution
\begin{eqnarray}
\label{eq:update_general}
\mu_{\ell+1}(\dee u) \propto \exp\big(-\Phi^{(\ell+1)}(u;b)\big)\mu_0(\dee u).
\end{eqnarray}
\item Set $\ell \mapsto \ell+1$ and go to 1.\qed
\end{enumerate}

The above iteration could be written more directly as
\[
\mu_{\ell+1}(\dee u) \propto \frac{\dee [T^\#_{f(u)}(M^\#\mu_\ell * \mathbb{Q}_0)]}{\dee[M^\#\mu_\ell * \mathbb{Q}_0]}(b)\,\mu_0(\dee u).
\]
though the expression (\ref{eq:update_general}) makes links with previous work on non-parametric Bayesian inverse problems clearer.

\section*{References}

\end{document}